\newcommand{\N}{\mathbb{N}}
\newcommand{\equality}{\textsc{Equality}}
\newcommand{\spanningtree}{\textsc{Spanning-Tree}}
\newcommand{\sizeofG}{\textsc{Size}}
\newcommand{\stpath}{s,t-\textsc{Path}}
\newcommand{\permutation}{\textsc{Permutation}}
\newcommand{\COP}{\textsc{Corresponding Order}}
\newcommand{\hide}[1]{ }
\newcommand{\permutationgraph}{\textsc{Permutation-Recognition}}
\newcommand{\trapezoid}{\textsc{Trapezoid-Recognition}}
\newcommand{\circlegraph}{\textsc{Circle-Recognition}}
\newcommand{\polygoncircle}{k\textsc{-Polygon-Circle-Recognition}}
\newcommand{\id}{\ensuremath{\mathsf{id}}}
\newcommand{\prot}{\ensuremath{\mathsf{prot}}}
\newcommand{\dM}{\ensuremath{\mathsf{dM}}}
\newcommand{\dAM}{\ensuremath{\mathsf{dAM}}}
\newcommand{\dMA}{\ensuremath{\mathsf{dMA}}}
\newcommand{\dMAM}{\ensuremath{\mathsf{dMAM}}}
\newcommand{\dAMAM}{\ensuremath{\mathsf{dAMAM}}}
\newcommand{\dMAMA}{\ensuremath{\mathsf{dMAMA}}}
\newcommand{\cL}{\mathcal{L}}
\newcommand{\cO}{\mathcal{O}}
\newcommand{\eps}{\varepsilon}
\newcommand{\soundness}{\textbf{\textsf{Soundness. }}}
\newcommand{\completeness}{\textbf{\textsf{Completeness. }}}
\newcommand{\lipics}[1]{\textcolor{gray}{\textbf{\textsf{#1}}}}
\begin{document}
\title{Distributed Interactive Proofs for the Recognition of Some Geometric Intersection Graph Classes\thanks{This work was supported by Centro de Modelamiento Matem\'atico (CMM), ACE210010 and FB210005, BASAL funds for centers of excellence from ANID-Chile, FONDECYT 11190482 and FONDECYT 1170021.}}
%
%
\author{Benjamin Jauregui\inst{1} \and
Pedro Montealegre\inst{2} \and
Ivan Rapaport\inst{3}}
\authorrunning{F. Author et al.}

%
\institute{Departamento de Ingenier\'ia Matem\'atica, Universidad de Chile, Chile. \texttt{bjauregui@dim.uchile.cl} \and
Facultad de Ingenier\'{\i}a y Ciencias, Universidad Adolfo Iba\~nez,  Chile. \texttt{p.montealegre@uai.cl} \and
DIM-CMM (UMI 2807 CNRS), Universidad de Chile, Chile. \texttt{rapaport@dim.uchile.cl}}
\maketitle              
\begin{abstract}
A graph $G=(V,E)$ is a geometric intersection graph if every node $v \in V$ is identified with  a geometric object of some particular type, 
and two nodes are adjacent if the corresponding objects intersect. Geometric intersection graph classes have been studied from both the  theoretical and practical point of view. On the one hand, many hard problems can be efficiently solved or approximated when the input graph is restricted to a geometric
intersection class of graphs.  On the other hand, these  graphs appear naturally in many applications such as sensor networks, scheduling problems, and others. 
Recently, in the context of distributed certification and distributed interactive proofs, the recognition of graph classes has started to be intensively studied. Different results related to the recognition of trees, bipartite graphs,  bounded diameter graphs, triangle-free graphs, planar graphs,  bounded genus graphs, $H$-minor free graphs, etc., have been obtained.

The goal of the present work is to  design efficient distributed protocols for the recognition of relevant geometric intersection graph classes, namely permutation graphs, trapezoid graphs, circle graphs and polygon-circle graphs.  More precisely, for the two first classes we give proof labeling schemes recognizing them with logarithmic-sized certificates. For the other two classes, we give three-round distributed interactive protocols that use messages and certificates of size $\mathcal{O}(\log n)$. Finally, we provide  logarithmic lower-bounds on the size of the certificates on the proof labeling schemes for the recognition of any of the aforementioned  geometric intersection graph classes. 

\end{abstract}


\section{Introduction}\label{sec:intro}

This paper deals with the problem of designing compact distributed certificates and compact distributed interactive proofs
for deciding graph properties. In these protocols,
 the nodes of a connected graph $G$ have to decide, collectively, whether $G$ itself belongs to a particular graph class.
As in the centralized case, also in the distributed setting there exists a number of algorithms specially designed to decide whether 
$G$ belongs to a particular graph class. The specific goal of this work is to decide, through proof-labeling schemes
and the more general model of distributed interactive proofs, whether $G$ belongs
to relevant intersection graph classes. These classes have applications in topics like biology, ecology, computing, matrix analysis, circuit design, statistics, archaeology, etc.
For a nice survey we refer to~\cite{McKee1999}.

 \subsection{Proof-Labeling Schemes and Distributed Interactive Proofs}
 
In locally decidable algorithms every node is just allowed  to send messages to its neighbors, in one round
 (a less restrictive, but similar scenario, is where the number of rounds is constant, independent of the size of $G$, see \cite{naor1995can}).
Some very basic properties can be decided locally (with a local algorithm).  For instance,  deciding whether the graph $G$ has bounded degree.
More generally, if we do not impose bandwidth restrictions, then detecting the existence of any local structure (such as a triangle) can be solved through 
local algorithms.

In the aforementioned examples, acceptance and rejection  are  (implicitly) defined as follows. 
If $G$ satisfies the property, then all nodes must accept; otherwise, at least one node must reject.
These very fast local algorithms could be used in distributed fault-tolerant computing, where the nodes, with some regularity,
must check whether the current network configuration is in a legal state~\cite{korman2010proof}. Then, if the configuration becomes
at some point illegal, the rejecting node(s) raise the alarm or launch a recovery procedure. When there are distributed algorithms designed for particular graph classes, the use of an initial  recognizing protocol  could avoid the risk
of running a distributed protocol for graphs that do not belong to the class for which the protocol was designed.

Of course, many simple properties cannot be decided with one round (or with a constant number of rounds) through local algorithms. In order to overcome this issue, the notion of proof-labeling scheme (PLS) was introduced~\cite{korman2010proof}. PLSs can be seen as a distributed counterpart of the nondeterministic class  NP. In fact, in a PLS, a powerful prover gives to every node $v$ a certificate $c(v)$. This provides $G$ with a global distributed proof. Then, every node $v$   performs a local verification using its local information  together with $c(v)$. 


Incorporating a powerful prover to the model is not just motivated by a purely theoretical interest.
 In fact,  with the rise of the Internet, prover-assisted computing models are ubiquitous. 
Asymmetric applications --social networks, cloud computing, etc.-- where a very powerful central entity stores and process 
large amounts of data, are already part of our everyday lives.
A key issue, which is  a central part of the PLS  model, is that the devices of the network cannot trust the central entity
and are forced to verify the correctness of the distributed proof.

The generalization of the class NP to interactive proof systems, a model where the prover and the verifier are allowed to interact
was a breakthrough in computational complexity~\cite{babai1988arthur,goldreich1991proofs,goldwasser1989knowledge,lund1992algebraic,shamir1992ip}.  In the distributed framework, the notion of distributed interactive protocols was introduced  in~\cite{kol2018interactive} and further studied in~\cite{crescenzi2019trade,fraigniaud2019distributed,montealegre2020shared,naor2020power}. In such protocols, a centralized, untrustable prover with unlimited computation power, named Merlin, exchanges messages with a randomized distributed algorithm, named Arthur. 

Let us illustrate the general idea of this model, and also some notation. If we consider, for instance, four interactions, then there are two
possible protocols:  a \dAMAM\  protocol and  \dMAMA\ protocol.  In a \dAMAM\ protocol, also denoted \dAM[4],  the last interaction is performed by Merlin.
In a \dMAMA\ protocol, also denoted \dMA[4], the last interaction is performed by Arthur. Note that \dAM[1]=\dM, and we recover exactly the PLS model.

Now, we are going to explain with more detail what happens in a particular case. Let us consider
a three interaction, \dMAM=\dAM[3] protocol. In this case Merlin starts, and he  provides a certificate to  Arthur (that is, certificates $c(v)$ for every node $v \in V$). Then, 
a random string (fresh randomness) is generated and made public to both Arthur (the nodes) and Merlin.
This is the interaction performed by Arthur, and it should be interpreted as if Arthur was challenging Merlin.
Note also that we are considering here the shared randomness setting.

Finally, Merlin replies to the query by sending another distributed certificate. After all these interactions,
comes the  deterministic distributed verification phase,  performed between every node and its neighbors,  
after which every node decides whether to accept or reject.

We say that an algorithm uses $\cO(f(n))$ bits if the messages exchanged between the nodes (in the verification round), and also the certificates 
sent by the prover Merlin to the nodes, are upper bounded by $\cO(f(n))$. We include this bandwidth bound  in the notation, which becomes $\dMA[k,f(n)] $ and $\dAM[k,f(n)]$
for the corresponding protocols.

Interaction may decrease drastically the size of the messages needed to solve some problems. 
Consider, for instance, the problem symmetry,  where the nodes are asked  to decide whether the graph $G$ has a non-trivial 
automorphism (i.e., a non-trivial one-to-one mapping from the set of nodes to itself preserving edges). 
Any PLS solving the  symmetry problem requires certificates of size $\Omega(n^2)$~\cite{goos2016locally}. Nevertheless,
this problem admits distributed interactive protocols with small certificates, and very few interactions. In fact, 
it can be solved with  both a $\dMAM[\log n]$ protocol and a $\dAM[n \log n]$ protocol~\cite{kol2018interactive}.

\subsection{Geometric Intersection Graph Classes}

A graph $G=(V,E)$ is a geometric intersection graph if every node $v \in V$ is identified with  a geometric object of some particular type, 
and two vertices are adjacent if the corresponding objects intersect. The two simplest non-trivial, and arguably two of the most studied
geometric intersection graphs are  {\bf{interval graphs}} and  {\bf{permutation graphs}}. In fact, most of the best-known geometric intersection graph classes are either generalizations of interval graphs or generalizations of permutation graphs.
It comes  as no surprise that many papers address different algorithmic and structural  aspects, simultaneously, in both interval and permutation graph
\cite{asdre2007harmonious,foucaud2017identification1,foucaud2017identification,kante2013enumeration,madanlal1996tree,kratsch2006certifying,yamazaki2020enumeration}.

In both interval and permutation graphs, the intersecting objects  are (line) segments, with different restrictions imposed on their positions.
In interval graphs, the segments must all lie on the real line. In permutation graphs, the endpoints of the segments must  lie on two separate, parallel real lines. In Figure \ref{fig:Expermutation} we show an example of a permutation graph.
	
\begin{figure}[h!]
	\centering 
	\tikzset{every picture/.style={line width=0.75pt}} 

\begin{tikzpicture}[x=0.75pt,y=0.75pt,yscale=-1,xscale=1]

\draw [color={rgb, 255:red, 208; green, 2; blue, 27 }  ,draw opacity=1 ][line width=1.5]    (205,135) -- (245,60) ;
\draw [color={rgb, 255:red, 245; green, 166; blue, 35 }  ,draw opacity=1 ][line width=1.5]    (305,135) -- (225,60) ;
\draw [color={rgb, 255:red, 248; green, 231; blue, 28 }  ,draw opacity=1 ][line width=1.5]    (245,135) -- (205,60) ;
\draw [color={rgb, 255:red, 189; green, 16; blue, 224 }  ,draw opacity=1 ][line width=1.5]    (225,135) -- (305,60) ;
\draw [color={rgb, 255:red, 74; green, 144; blue, 226 }  ,draw opacity=1 ][line width=1.5]    (265,135) -- (325,60) ;
\draw [color={rgb, 255:red, 155; green, 155; blue, 155 }  ,draw opacity=1 ][line width=1.5]    (325,135) -- (285,60) ;
\draw [color={rgb, 255:red, 126; green, 211; blue, 33 }  ,draw opacity=1 ][line width=1.5]    (285,135) -- (265,60) ;
\draw    (380,75) -- (385,120) -- (410,110) -- (415,70) -- (475,65) -- (455,90) ;
\draw    (475,65) -- (475,120) ;
\draw    (415,70) -- (380,75) ;
\draw    (475,120) -- (410,110) ;
\draw    (455,90) -- (410,110) ;
\draw    (415,70) -- (455,90) ;
\draw  [fill={rgb, 255:red, 126; green, 211; blue, 33 }  ,fill opacity=1 ] (450,90) .. controls (450,87.24) and (452.24,85) .. (455,85) .. controls (457.76,85) and (460,87.24) .. (460,90) .. controls (460,92.76) and (457.76,95) .. (455,95) .. controls (452.24,95) and (450,92.76) .. (450,90) -- cycle ;
\draw  [fill={rgb, 255:red, 248; green, 231; blue, 28 }  ,fill opacity=1 ] (380,120) .. controls (380,117.24) and (382.24,115) .. (385,115) .. controls (387.76,115) and (390,117.24) .. (390,120) .. controls (390,122.76) and (387.76,125) .. (385,125) .. controls (382.24,125) and (380,122.76) .. (380,120) -- cycle ;
\draw  [fill={rgb, 255:red, 208; green, 2; blue, 27 }  ,fill opacity=1 ] (375,75) .. controls (375,72.24) and (377.24,70) .. (380,70) .. controls (382.76,70) and (385,72.24) .. (385,75) .. controls (385,77.76) and (382.76,80) .. (380,80) .. controls (377.24,80) and (375,77.76) .. (375,75) -- cycle ;
\draw  [fill={rgb, 255:red, 189; green, 16; blue, 224 }  ,fill opacity=1 ] (405,110) .. controls (405,107.24) and (407.24,105) .. (410,105) .. controls (412.76,105) and (415,107.24) .. (415,110) .. controls (415,112.76) and (412.76,115) .. (410,115) .. controls (407.24,115) and (405,112.76) .. (405,110) -- cycle ;
\draw  [fill={rgb, 255:red, 245; green, 166; blue, 35 }  ,fill opacity=1 ] (410,70) .. controls (410,67.24) and (412.24,65) .. (415,65) .. controls (417.76,65) and (420,67.24) .. (420,70) .. controls (420,72.76) and (417.76,75) .. (415,75) .. controls (412.24,75) and (410,72.76) .. (410,70) -- cycle ;
\draw  [fill={rgb, 255:red, 155; green, 155; blue, 155 }  ,fill opacity=1 ] (470,120) .. controls (470,117.24) and (472.24,115) .. (475,115) .. controls (477.76,115) and (480,117.24) .. (480,120) .. controls (480,122.76) and (477.76,125) .. (475,125) .. controls (472.24,125) and (470,122.76) .. (470,120) -- cycle ;
\draw  [fill={rgb, 255:red, 74; green, 144; blue, 226 }  ,fill opacity=1 ] (470,65) .. controls (470,62.24) and (472.24,60) .. (475,60) .. controls (477.76,60) and (480,62.24) .. (480,65) .. controls (480,67.76) and (477.76,70) .. (475,70) .. controls (472.24,70) and (470,67.76) .. (470,65) -- cycle ;
\draw [line width=0.75]    (205,135) -- (325,135) ;
\draw [line width=0.75]    (205,60) -- (325,60) ;
\draw [line width=0.75]    (205,55) -- (205,65) ;
\draw [line width=0.75]    (225,55) -- (225,65) ;
\draw [line width=0.75]    (245,55) -- (245,65) ;
\draw [line width=0.75]    (265,55) -- (265,65) ;
\draw [line width=0.75]    (285,55) -- (285,65) ;
\draw [line width=0.75]    (305,55) -- (305,65) ;
\draw [line width=0.75]    (325,55) -- (325,65) ;
\draw [line width=0.75]    (205,130) -- (205,140) ;
\draw [line width=0.75]    (225,130) -- (225,140) ;
\draw [line width=0.75]    (245,130) -- (245,140) ;
\draw [line width=0.75]    (265,130) -- (265,140) ;
\draw [line width=0.75]    (285,130) -- (285,140) ;
\draw [line width=0.75]    (305,130) -- (305,140) ;
\draw [line width=0.75]    (325,130) -- (325,140) ;

\end{tikzpicture}
	\caption{An example of a permutation graph with its corresponding intersection model.}
	\label{fig:Expermutation}
\end{figure}
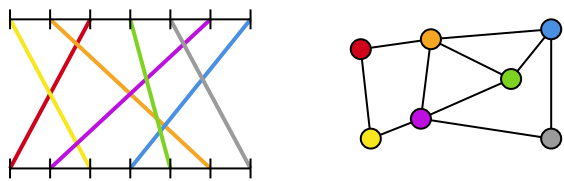

Although the class of interval graphs is quite restrictive, there are a number of practical applications and specialized algorithms for interval graphs \cite{golumbic2004algorithmic,halldorsson2020improved,konrad2019distributed}. 
Moreover, for several applications, the subclass of unit interval graphs (the situation where all the intervals have the same length)
turns out to be extremely useful as well~\cite{beeri1983desirability,kaplan1996pathwidth}.

A  natural generalization of interval graphs are {\bf{circular arc graphs}}, where the segments, instead of lying  
on a line, lie on a circle. More precisely, a circular arc graph is the intersection graph of arcs of a circle.
Although circular arc graphs look similar to interval graphs, several combinatorial problems behave very differently on these two classes of graphs.  
For example, the coloring problem is NP-complete for circular arc graphs while it can be solved in linear time on interval graphs \cite{garey1980complexity}.
Recognizing circular-arc graphs can be done in linear-time \cite{kaplan2006simpler,mcconnell2003linear}.

The class of  permutation graphs behaves  as the class of interval graphs in the sense that,
 on one hand,  permutation graphs can be recognized in linear time \cite{kratsch2006certifying} and,  on the other hand,
 many NP-complete problems can be solved efficiently when the input is restricted to permutation graphs \cite{chao2000optimal,lappas2013n}.
A graph $G$ is a {\bf{circle graph}} if $G$ is the intersection model of a collection of chords in a circle (see Figure \ref{fig:Excircle}).
Clearly, circle graphs are a generalization of permutation graphs. In fact, permutation graphs can be characterized as  circle graphs that admit an \emph{equator}, i.e., an additional chord that intersects every other chord. Circle graphs can be recognized 
in time $\cO(n^2)$ \cite{spinrad1994recognition}. Many NP-complete problems can be solve in polynomial
time when restricted to circle graphs \cite{kloks1993treewidth,tiskin2015fast}.

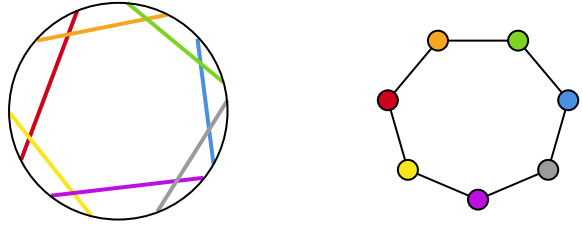
\begin{figure}[h!]
		\centering 
		\tikzset{every picture/.style={line width=0.75pt}} 

\begin{tikzpicture}[x=0.75pt,y=0.75pt,yscale=-1,xscale=1]

\draw [color={rgb, 255:red, 208; green, 2; blue, 27 }  ,draw opacity=1 ][line width=1.5]    (207,110) -- (235,35) ;
\draw [color={rgb, 255:red, 245; green, 166; blue, 35 }  ,draw opacity=1 ][line width=1.5]    (280,37) -- (215,50) ;
\draw [color={rgb, 255:red, 248; green, 231; blue, 28 }  ,draw opacity=1 ][line width=1.5]    (242,138) -- (201,86) ;
\draw [color={rgb, 255:red, 189; green, 16; blue, 224 }  ,draw opacity=1 ][line width=1.5]    (222,128) -- (298,119) ;
\draw [color={rgb, 255:red, 74; green, 144; blue, 226 }  ,draw opacity=1 ][line width=1.5]    (303,112) -- (295,49) ;
\draw [color={rgb, 255:red, 155; green, 155; blue, 155 }  ,draw opacity=1 ][line width=1.5]    (275,136) -- (310,80) ;
\draw [color={rgb, 255:red, 126; green, 211; blue, 33 }  ,draw opacity=1 ][line width=1.5]    (308,71) -- (260,31) ;
\draw    (400,115) -- (390,80) -- (415,50) -- (455,50) -- (480,80) -- (470,115) -- (435,130) -- cycle ;
\draw  [fill={rgb, 255:red, 126; green, 211; blue, 33 }  ,fill opacity=1 ] (450,50) .. controls (450,47.24) and (452.24,45) .. (455,45) .. controls (457.76,45) and (460,47.24) .. (460,50) .. controls (460,52.76) and (457.76,55) .. (455,55) .. controls (452.24,55) and (450,52.76) .. (450,50) -- cycle ;
\draw  [fill={rgb, 255:red, 248; green, 231; blue, 28 }  ,fill opacity=1 ] (395,115) .. controls (395,112.24) and (397.24,110) .. (400,110) .. controls (402.76,110) and (405,112.24) .. (405,115) .. controls (405,117.76) and (402.76,120) .. (400,120) .. controls (397.24,120) and (395,117.76) .. (395,115) -- cycle ;
\draw  [fill={rgb, 255:red, 208; green, 2; blue, 27 }  ,fill opacity=1 ] (385,80) .. controls (385,77.24) and (387.24,75) .. (390,75) .. controls (392.76,75) and (395,77.24) .. (395,80) .. controls (395,82.76) and (392.76,85) .. (390,85) .. controls (387.24,85) and (385,82.76) .. (385,80) -- cycle ;
\draw  [fill={rgb, 255:red, 189; green, 16; blue, 224 }  ,fill opacity=1 ] (430,130) .. controls (430,127.24) and (432.24,125) .. (435,125) .. controls (437.76,125) and (440,127.24) .. (440,130) .. controls (440,132.76) and (437.76,135) .. (435,135) .. controls (432.24,135) and (430,132.76) .. (430,130) -- cycle ;
\draw  [fill={rgb, 255:red, 245; green, 166; blue, 35 }  ,fill opacity=1 ] (410,50) .. controls (410,47.24) and (412.24,45) .. (415,45) .. controls (417.76,45) and (420,47.24) .. (420,50) .. controls (420,52.76) and (417.76,55) .. (415,55) .. controls (412.24,55) and (410,52.76) .. (410,50) -- cycle ;
\draw  [fill={rgb, 255:red, 155; green, 155; blue, 155 }  ,fill opacity=1 ] (465,115) .. controls (465,112.24) and (467.24,110) .. (470,110) .. controls (472.76,110) and (475,112.24) .. (475,115) .. controls (475,117.76) and (472.76,120) .. (470,120) .. controls (467.24,120) and (465,117.76) .. (465,115) -- cycle ;
\draw  [fill={rgb, 255:red, 74; green, 144; blue, 226 }  ,fill opacity=1 ] (475,80) .. controls (475,77.24) and (477.24,75) .. (480,75) .. controls (482.76,75) and (485,77.24) .. (485,80) .. controls (485,82.76) and (482.76,85) .. (480,85) .. controls (477.24,85) and (475,82.76) .. (475,80) -- cycle ;
\draw  [line width=0.75]  (201,85.5) .. controls (201,55.4) and (225.4,31) .. (255.5,31) .. controls (285.6,31) and (310,55.4) .. (310,85.5) .. controls (310,115.6) and (285.6,140) .. (255.5,140) .. controls (225.4,140) and (201,115.6) .. (201,85.5) -- cycle ;

\end{tikzpicture}
		\caption{An example of a circle graph with its corresponding intersection model.}
		\label{fig:Excircle}
	\end{figure}

The well-known class of {\bf{trapezoid graphs}}  is a  generalization of both interval graphs and permutation graphs.
A trapezoid graph  is defined as the intersection graph of trapezoids between two horizontal lines (see Figure \ref{fig:Extrapezoid}). 
Ma and Spinrad \cite{ma19942} showed that trapezoid graphs can be recognized in $\cO(n^2)$ time.
Trapezoid graphs were applied in various contexts such as  VLSI design \cite{dagan1988trapezoid} and  
bioinformatics \cite{abouelhoda2005chaining}. Note that trapezoid and circle graphs are incomparable: the  
trapezoid graph of Figure~\ref{fig:Extrapezoid}  is not a circle graph, while the circle graph of Figure~\ref{fig:Excircle} is not a trapezoid graph.
 
\begin{figure}[h!]
\centering 
\tikzset{every picture/.style={line width=0.75pt}} 

\begin{tikzpicture}[x=0.75pt,y=0.75pt,yscale=-1,xscale=1]

\draw  [color={rgb, 255:red, 74; green, 144; blue, 226 }  ,draw opacity=1 ][fill={rgb, 255:red, 74; green, 144; blue, 226 }  ,fill opacity=0.2 ] (350,60) -- (390,165) -- (310,165) -- (230,60) -- cycle ;
\draw  [color={rgb, 255:red, 189; green, 16; blue, 224 }  ,draw opacity=1 ][fill={rgb, 255:red, 189; green, 16; blue, 224 }  ,fill opacity=0.2 ] (370,60) -- (370,165) -- (270,165) -- (270,60) -- cycle ;
\draw  [color={rgb, 255:red, 248; green, 231; blue, 28 }  ,draw opacity=1 ][fill={rgb, 255:red, 248; green, 231; blue, 28 }  ,fill opacity=0.2 ] (250,60) -- (250,165) -- (170,165) -- (150,60) -- cycle ;
\draw  [color={rgb, 255:red, 245; green, 166; blue, 35 }  ,draw opacity=1 ][fill={rgb, 255:red, 245; green, 166; blue, 35 }  ,fill opacity=0.2 ] (410,60) -- (410,165) -- (330,165) -- (330,60) -- cycle ;
\draw  [color={rgb, 255:red, 128; green, 128; blue, 128 }  ,draw opacity=1 ][fill={rgb, 255:red, 128; green, 128; blue, 128 }  ,fill opacity=0.2 ] (390,60) -- (350,165) -- (230,165) -- (310,60) -- cycle ;
\draw    (480,57) -- (620,57) -- (550,167) -- cycle ;
\draw    (535,119) -- (550,167) -- (565,119) ;
\draw    (550,84) -- (480,57) -- (535,119) -- (550,107) ;
\draw    (550,84) -- (620,57) -- (565,119) -- (550,107) -- cycle ;
\draw    (535,119) -- (565,119) -- (550,84) -- cycle ;
\draw  [fill={rgb, 255:red, 126; green, 211; blue, 33 }  ,fill opacity=1 ] (475,57) .. controls (475,54.24) and (477.24,52) .. (480,52) .. controls (482.76,52) and (485,54.24) .. (485,57) .. controls (485,59.76) and (482.76,62) .. (480,62) .. controls (477.24,62) and (475,59.76) .. (475,57) -- cycle ;
\draw  [fill={rgb, 255:red, 248; green, 231; blue, 28 }  ,fill opacity=1 ] (615,57) .. controls (615,54.24) and (617.24,52) .. (620,52) .. controls (622.76,52) and (625,54.24) .. (625,57) .. controls (625,59.76) and (622.76,62) .. (620,62) .. controls (617.24,62) and (615,59.76) .. (615,57) -- cycle ;
\draw  [fill={rgb, 255:red, 208; green, 2; blue, 27 }  ,fill opacity=1 ] (545,167) .. controls (545,164.24) and (547.24,162) .. (550,162) .. controls (552.76,162) and (555,164.24) .. (555,167) .. controls (555,169.76) and (552.76,172) .. (550,172) .. controls (547.24,172) and (545,169.76) .. (545,167) -- cycle ;
\draw  [fill={rgb, 255:red, 189; green, 16; blue, 224 }  ,fill opacity=1 ] (530,119) .. controls (530,116.24) and (532.24,114) .. (535,114) .. controls (537.76,114) and (540,116.24) .. (540,119) .. controls (540,121.76) and (537.76,124) .. (535,124) .. controls (532.24,124) and (530,121.76) .. (530,119) -- cycle ;
\draw  [fill={rgb, 255:red, 245; green, 166; blue, 35 }  ,fill opacity=1 ] (545,107) .. controls (545,104.24) and (547.24,102) .. (550,102) .. controls (552.76,102) and (555,104.24) .. (555,107) .. controls (555,109.76) and (552.76,112) .. (550,112) .. controls (547.24,112) and (545,109.76) .. (545,107) -- cycle ;
\draw  [fill={rgb, 255:red, 155; green, 155; blue, 155 }  ,fill opacity=1 ] (560,119) .. controls (560,116.24) and (562.24,114) .. (565,114) .. controls (567.76,114) and (570,116.24) .. (570,119) .. controls (570,121.76) and (567.76,124) .. (565,124) .. controls (562.24,124) and (560,121.76) .. (560,119) -- cycle ;
\draw  [fill={rgb, 255:red, 74; green, 144; blue, 226 }  ,fill opacity=1 ] (545,84) .. controls (545,81.24) and (547.24,79) .. (550,79) .. controls (552.76,79) and (555,81.24) .. (555,84) .. controls (555,86.76) and (552.76,89) .. (550,89) .. controls (547.24,89) and (545,86.76) .. (545,84) -- cycle ;
\draw  [color={rgb, 255:red, 208; green, 2; blue, 27 }  ,draw opacity=1 ][fill={rgb, 255:red, 208; green, 2; blue, 27 }  ,fill opacity=0.2 ] (210,60) -- (290,165) -- (190,165) -- (170,60) -- cycle ;
\draw  [color={rgb, 255:red, 65; green, 117; blue, 5 }  ,draw opacity=1 ][fill={rgb, 255:red, 126; green, 211; blue, 33 }  ,fill opacity=0.2 ] (290,60) -- (210,165) -- (150,165) -- (190,60) -- cycle ;
\draw [line width=0.75]    (150,60) -- (401.83,60) -- (410,60) ;
\draw [line width=0.75]    (150,165) -- (410,165) ;
\draw [line width=0.75]    (290,160) -- (290,170) ;
\draw [line width=0.75]    (310,160) -- (310,170) ;
\draw [line width=0.75]    (330,160) -- (330,170) ;
\draw [line width=0.75]    (350,160) -- (350,170) ;
\draw [line width=0.75]    (370,160) -- (370,170) ;
\draw [line width=0.75]    (390,160) -- (390,170) ;
\draw [line width=0.75]    (410,160) -- (410,170) ;
\draw [line width=0.75]    (150,160) -- (150,170) ;
\draw [line width=0.75]    (170,160) -- (170,170) ;
\draw [line width=0.75]    (190,160) -- (190,170) ;
\draw [line width=0.75]    (210,160) -- (210,170) ;
\draw [line width=0.75]    (230,160) -- (230,170) ;
\draw [line width=0.75]    (250,160) -- (250,170) ;
\draw [line width=0.75]    (270,160) -- (270,170) ;
\draw [line width=0.75]    (290,55) -- (290,65) ;
\draw [line width=0.75]    (310,55) -- (310,65) ;
\draw [line width=0.75]    (330,55) -- (330,65) ;
\draw [line width=0.75]    (350,55) -- (350,65) ;
\draw [line width=0.75]    (370,55) -- (370,65) ;
\draw [line width=0.75]    (390,55) -- (390,65) ;
\draw [line width=0.75]    (410,55) -- (410,65) ;
\draw [line width=0.75]    (150,55) -- (150,65) ;
\draw [line width=0.75]    (170,55) -- (170,65) ;
\draw [line width=0.75]    (190,55) -- (190,65) ;
\draw [line width=0.75]    (210,55) -- (210,65) ;
\draw [line width=0.75]    (230,55) -- (230,65) ;
\draw [line width=0.75]    (250,55) -- (250,65) ;
\draw [line width=0.75]    (270,55) -- (270,65) ;

\end{tikzpicture}
\caption{An example of a permutation graph with its corresponding intersection model.}
\label{fig:Extrapezoid}
\end{figure}
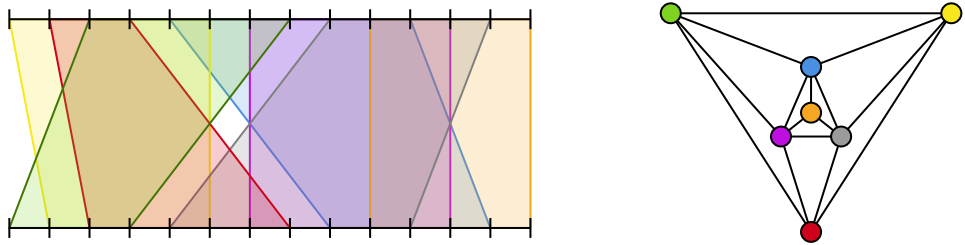

Recall that the way permutation graphs were generalized to circular graphs
is by placing the ends of the segments in a circle (as chords) instead of
placing the ends of the segments in two parallel lines.
The same approach is used to 
generalize trapezoidal graphs and thus introducing polygon circle graphs.

More precisely, a {\bf{polygon circle graph}}  is the  intersection graph of convex polygons of $k$ sides, all of whose vertices lie on  a circle.
In this  case we refer to a  $k$-polygon circle graphs. 
In Figure \ref{fig:Expolygoncircle} we show an example of a $3$-polygon circle graph.  
Both  trapezoid graphs and circle graphs are proper subclasses of polygon  circle graphs.
Note that the polygon circle graph of Figure \ref{fig:Expolygoncircle} is neither a trapezoid graph nor a circle graph. 

\begin{figure}[h!]
		\centering 
		\tikzset{every picture/.style={line width=0.75pt}} 

\begin{tikzpicture}[x=0.75pt,y=0.75pt,yscale=-1,xscale=1]

\draw    (455,55) -- (430,85) ;
\draw    (455,115) -- (430,85) ;
\draw    (430,85) -- (375,85) ;
\draw    (400,115) -- (375,85) -- (400,55) -- (455,55) -- (480,85) -- (455,115) -- cycle ;
\draw  [fill={rgb, 255:red, 126; green, 211; blue, 33 }  ,fill opacity=1 ] (450,55) .. controls (450,52.24) and (452.24,50) .. (455,50) .. controls (457.76,50) and (460,52.24) .. (460,55) .. controls (460,57.76) and (457.76,60) .. (455,60) .. controls (452.24,60) and (450,57.76) .. (450,55) -- cycle ;
\draw  [fill={rgb, 255:red, 248; green, 231; blue, 28 }  ,fill opacity=1 ] (395,115) .. controls (395,112.24) and (397.24,110) .. (400,110) .. controls (402.76,110) and (405,112.24) .. (405,115) .. controls (405,117.76) and (402.76,120) .. (400,120) .. controls (397.24,120) and (395,117.76) .. (395,115) -- cycle ;
\draw  [fill={rgb, 255:red, 208; green, 2; blue, 27 }  ,fill opacity=1 ] (370,85) .. controls (370,82.24) and (372.24,80) .. (375,80) .. controls (377.76,80) and (380,82.24) .. (380,85) .. controls (380,87.76) and (377.76,90) .. (375,90) .. controls (372.24,90) and (370,87.76) .. (370,85) -- cycle ;
\draw  [fill={rgb, 255:red, 74; green, 144; blue, 226 }  ,fill opacity=1 ] (425,85) .. controls (425,82.24) and (427.24,80) .. (430,80) .. controls (432.76,80) and (435,82.24) .. (435,85) .. controls (435,87.76) and (432.76,90) .. (430,90) .. controls (427.24,90) and (425,87.76) .. (425,85) -- cycle ;
\draw  [fill={rgb, 255:red, 245; green, 166; blue, 35 }  ,fill opacity=1 ] (395,55) .. controls (395,52.24) and (397.24,50) .. (400,50) .. controls (402.76,50) and (405,52.24) .. (405,55) .. controls (405,57.76) and (402.76,60) .. (400,60) .. controls (397.24,60) and (395,57.76) .. (395,55) -- cycle ;
\draw  [fill={rgb, 255:red, 155; green, 155; blue, 155 }  ,fill opacity=1 ] (450,115) .. controls (450,112.24) and (452.24,110) .. (455,110) .. controls (457.76,110) and (460,112.24) .. (460,115) .. controls (460,117.76) and (457.76,120) .. (455,120) .. controls (452.24,120) and (450,117.76) .. (450,115) -- cycle ;
\draw  [fill={rgb, 255:red, 189; green, 16; blue, 224 }  ,fill opacity=1 ] (475,85) .. controls (475,82.24) and (477.24,80) .. (480,80) .. controls (482.76,80) and (485,82.24) .. (485,85) .. controls (485,87.76) and (482.76,90) .. (480,90) .. controls (477.24,90) and (475,87.76) .. (475,85) -- cycle ;
\draw  [color={rgb, 255:red, 208; green, 2; blue, 27 }  ,draw opacity=1 ][fill={rgb, 255:red, 208; green, 2; blue, 27 }  ,fill opacity=0.2 ] (220,45) -- (225,130) -- (201,85.5) -- cycle ;
\draw  [color={rgb, 255:red, 245; green, 166; blue, 35 }  ,draw opacity=1 ][fill={rgb, 255:red, 245; green, 166; blue, 35 }  ,fill opacity=0.2 ] (275,35) -- (204,70) -- (213,52) -- cycle ;
\draw  [color={rgb, 255:red, 248; green, 231; blue, 28 }  ,draw opacity=1 ][fill={rgb, 255:red, 248; green, 231; blue, 28 }  ,fill opacity=0.2 ] (284,132) -- (205,105) -- (215,121) -- cycle ;
\draw  [color={rgb, 255:red, 128; green, 128; blue, 128 }  ,draw opacity=1 ][fill={rgb, 255:red, 128; green, 128; blue, 128 }  ,fill opacity=0.2 ] (310,86.5) -- (255.5,140) -- (309,95) -- cycle ;
\draw  [color={rgb, 255:red, 126; green, 211; blue, 33 }  ,draw opacity=1 ][fill={rgb, 255:red, 126; green, 211; blue, 33 }  ,fill opacity=0.2 ] (303,60) -- (242,33) -- (307,68) -- cycle ;
\draw  [color={rgb, 255:red, 189; green, 16; blue, 224 }  ,draw opacity=1 ][fill={rgb, 255:red, 189; green, 16; blue, 224 }  ,fill opacity=0.2 ] (293,46.5) -- (297,120) -- (309,79) -- cycle ;
\draw  [color={rgb, 255:red, 74; green, 144; blue, 226 }  ,draw opacity=1 ][fill={rgb, 255:red, 74; green, 144; blue, 226 }  ,fill opacity=0.2 ] (285,40) -- (290,127) -- (201,85.5) -- cycle ;
\draw  [line width=0.75]  (201,85.5) .. controls (201,55.4) and (225.4,31) .. (255.5,31) .. controls (285.6,31) and (310,55.4) .. (310,85.5) .. controls (310,115.6) and (285.6,140) .. (255.5,140) .. controls (225.4,140) and (201,115.6) .. (201,85.5) -- cycle ;

\end{tikzpicture}
		\caption{An example of a $3$-polygon circle graph with its corresponding intersection model.}
		\label{fig:Expolygoncircle}
	\end{figure}
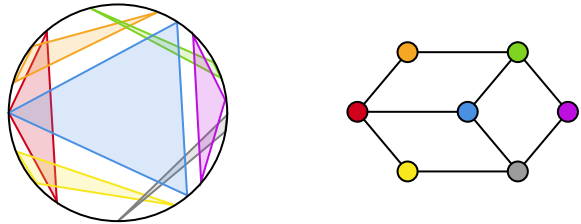	

The problem of recognizing whether a graph is a $k$-polygon circle graphs, for any $k \geq 3$, is NP-complete \cite{pergel2007recognition}.
Nevertheless, many NP-complete problems have polynomial time algorithms when restricted to polygon circle graphs \cite{garey1980complexity,gavril2000maximum}.
 In an unpublished result, M. Fellows proved that the class of polygon circle graphs is closed under taking induced minors.

\subsection{Our Results}

%
 In Section  \ref{sec:tool} we recall, explain and develop some tools, building blocks for the rest of the paper.
In  Sections \ref{sec:permutation}  and  \ref{sec:trapezoid} we prove that permutation and trapezoid graphs can be recognized by  PLSs
with certificates of size $\cO(\log n)$, and then we obtain the result for permutation graphs as a corollary. 
Then, in Sections \ref{sec:circle} and \ref{sec:polyg} we prove that circle and $k$-polygon circle graphs can  be
recognized with a three round, \dMAM\ protocol, and we obtain the result for circle graphs as a particular case when $k=2$.
Finally, in Section  \ref{sec:lower}, we prove that any PLS for recognizing permutation graphs, trapezoid graphs, circle graphs or
polygon circle graphs requires certificates of size  $\Omega(\log n)$.

\subsection{Related Work}

We already know the existence of PLSs  (with logarithmic size certificates) for the recognition of many graph classes such as 
acyclic graphs \cite{korman2010proof}, planar graphs~\cite{feuilloley2020compact}, graphs with bounded genus~\cite{feuilloley2020local},
graph classes defined by a finite set of forbidden minors~\cite{bousquet2021local}, etc.

The distributed interactive proof  model is clearly more powerful than the PLS model.
Some problems,  for which any PLS  requires huge certificates, can  be solved with a distributed interactive protocol
with small certificates and, in fact, with very few interactions. This is the case  of problem symmetry, where the system 
must decide  whether a graphs has a non-trivial automorphism. Problem symmetry is  in $\dMAM[\log n]$ and  also in $\dAM[n \log n]$,
while the size of any certificate in a PLS  must be of size at least $\Omega(n^2)$~\cite{kol2018interactive}.

It is always important to keep in mind
 the compiler defined in~\cite{naor2020power} which turns, automatically, any problem solved in NP in time $\tau(n)$ into a $\dMAM$ protocol
 that uses bandwidth $\tau(n) \log n/n$. Therefore, any class of sparse 
 graphs that can be recognized in linear time, can also be recognized by a $\dMAM$ protocol with logarithmic-sized certificates.

 Any geometric intersection graph class is hereditary
(a graph class is hereditary if the class is closed under taking induced subgraphs). Examples of hereditary graph classes include planar graphs, forests, bipartite graphs, perfect graphs, etc. 
Interestingly, the only graph properties that are known to require PLSs with large certificates 
(e.g. small diameter \cite{censor2020approximate}, non-3-colorability \cite{goos2016locally}, having a non-trivial automorphism \cite{goos2016locally}), are non-hereditary. This raises the question of whether we can improve the result of this paper and design a PLS for the recognition of any geometric intersection graph.


\section{Preliminaries}

Let $G$ be a simple connected $n$-node graph, let $I:V(G)\to \{0,1\}^\star$ be an input function assigning labels to the nodes of $G$, where the size of all inputs is polynomially bounded on $n$. Let $\id:V(G)\to\{1,\dots,\text{poly}(n)\}$ be a one-to-one function assigning identifiers to the nodes. A \emph{distributed language} $\mathcal L$ is a (Turing-decidable) collection of triples $(G,\id,I)$, called \emph{network configurations}.

A distributed interactive protocol consists of a constant series of interactions between a {\emph{prover}} called Merlin, and a {\emph{verifier}} called Arthur. The prover Merlin is centralized, has unlimited computing power and knows the complete configuration $(G,\id,I)$. However, he cannot be trusted. On the other hand, the verifier Arthur is distributed, represented by the nodes in $G$, and has limited knowledge. In fact, at each node $v$, Arthur is initially aware only of his identity $\id(v)$, and his label $I(v)$. He does not know the exact value of $n$, but he knows that there exists a constant $c$ such that $\id(v) \leq n^c$. Therefore, for instance, if one node $v$ wants to communicate  $\id(v)$ to its neighbors, then the message is of size $\cO(\log n)$.

Given any network configuration $(G, \id, I)$, the nodes of $G$ must collectively decide whether $(G, \id, I)$ belongs to some distributed language ${\mathcal L}$. If this is indeed the case, then all nodes must accept; otherwise, at least one node must reject (with certain probabilities, depending on the precise specifications we are considering).

Interactive protocols have two phases: an interactive phase and a verification phase. If Arthur is the party that starts the interactive phase, he picks a random string $r_1$ (known to all nodes of $G$ because we are considering the shared randomness setting) and send it
 to Merlin. Merlin receives $r_1$ and provides every node $v$ with a certificate $c_1(v)$ that is a function of $v$, $r_1$ and $(G,\id,I)$. Then again Arthur picks a random string $r_2$  and sends $r_2$ to Merlin, who, in his turn, provides every node $v$ with a certificate $c_2(v)$ that is a function of $v$, $r_1$, $r_2$ and $(G,\id,I)$. This process continues for a fixed number of rounds. If Merlin is the party that starts the interactive phase, then he provides at the beginning every node $v$ with a certificate $c_0(v)$ that is a function of $v$ and $(G,\id,I)$, and the interactive process continues as explained before.  In the last interaction round, the verification phase begins. This phase is a one-round deterministic algorithm executed at each node. More precisely, every node $v$ broadcasts a message $M_v$ to its neighbors. This message may depend on $\id(v)$, $I(v)$, all random strings generated by Arthur, and all certificates received by $v$ from Merlin. Finally, based on all the knowledge accumulated by $v$ (i.e., its identity, its input label, the generated random strings, the certificates received from Merlin, and all the messages received from its neighbors), the protocol either accepts or rejects at node $v$. Note that Merlin knows the messages each node broadcasts to its neighbors because there is no randomness in this last verification round.

\begin{definition}
Let ${\mathcal V}$ be a verifier and ${\mathcal M}$ a prover of a distributed interactive proof protocol for languages over graphs of $n$ nodes.
If $({\mathcal V}, {\mathcal M})$ corresponds to an Arthur-Merlin $k$-round, $\cO(f(n))$ bandwidth protocol, we write $({\mathcal V}, {\mathcal M}) \in {\dAM}_{\prot}[k,f(n)]$.
\end{definition}

\begin{definition}
Let $\eps \leq 1/3$. The class $\dAM_{\eps}[k,f(n)]$  is the class of languages ${\mathcal L}$ over graphs of $n$ nodes
for which there exists a verifier ${\mathcal V}$ such that, for every configuration $(G,\id,I)$ of size $n$, the
two following conditions are satisfied.

\medskip

\noindent \emph{\completeness}
If $(G,\id,I) \in \cL$ then, there exists a prover $ {\mathcal M}$ such that  $({\mathcal V}, {\mathcal M}) \in {\dAM}_{\prot}[k,f(n)]$  and 
\noindent
\[\mathbf{Pr} \Big{[} \mathcal{V} \mbox{ accepts } (G,\id,I) \mbox{ in every node given } \mathcal{M}\Big{]} \geq 1 - \eps.\]

\medskip

\noindent \emph{\soundness}
 If $(G,\id,I) \notin \cL$ then, for every prover $ {\mathcal M}$ such that  $({\mathcal V}, {\mathcal M}) \in {\dAM}_{\prot}[k,f(n)]$, and
\noindent
\[\mathbf{Pr} \Big{[} \mathcal{V} \mbox{ rejects } (G,\id,I) \mbox{ in at least one nodes given } \mathcal{M}\Big{]} \geq 1-\eps.\]

We also denote $\dAM[k,f(n)]= \dAM_{1/3}[k,f(n)]$, and omit the subindex $\eps$ when its value is obvious from the context. 
\end{definition}
For small values of $k$, instead of writing $\dAM[k,f(n)]$, we alternate \textsf{M}'s and \textsf{A}'s. For instance: $ \dMAM[f(n)] = \dAM[3,f(n)] $. In particular $ \dAM[f(n)]= \dAM[2,f(n)]$. Moreover, we denote $\dM[f(n)]$ the model where only Merlin provides a certificate, and no randomness is allowed (in other words, the model $\dM$ is the PLS model).

In this paper, we are interested mainly in the languages of graphs that are permutation, trapezoid, circle, polygon-circle and unit square. 
Formally,\\

\noindent$\permutationgraph = \{\langle G, \id \rangle \text{ s.t. } G \text{ is a permutation graph}\}.\\
\trapezoid = \{\langle G, \id \rangle \text{ s.t. } G \text{ is a trapezoid graph}\}.\\
\circlegraph = \{\langle G, \id \rangle \text{ s.t. } G \text{ is a cicle graph}\}.\\
\polygoncircle = \{\langle G, \id \rangle \text{ s.t. } G \text{ is a $k$-polygon-circle graph}\}.$\\

We denote by $[n]$ the set $\{0, \dots, n-1\}$ and $S_n$ the set of permutations of $[n]$. In the following, all graphs $G=(V,E)$ are simple and undirected. When the nodes of an $n$-node graph are enumerated with unique values in $[n]$, we denote $G=([n],E)$. In a distributed problem, we always assume that the input graph is connected. We use the standard definitions and notations for (induced) subgraph, neighborhood, path, cycle, tree, clique, etc. For more details we refer to the textbook of Diestel~\cite{diestel2005graph}.


\section{Toolbox} \label{sec:tool}
In our results, we use some previously defined protocols as subroutines.  
In some cases, we consider protocols that solve problems which are more general than just decision problems (as, for instance, the construction of a spanning tree).  
%
\subsection{Spanning Tree and Related Problems} The construction of a spanning tree is an important building block for several protocols in the PLS model. 
Given a network configuration $\langle G, \id \rangle$, the {\spanningtree} problem asks to construct a spanning 
tree $T$ of $G$, where each node has to end up knowing which of its incident edges belong to $T$.  

\begin{proposition}
	\label{prop:spanningtree}
	There is a 1-round protocol for {\spanningtree}  with certificates of size $\cO(\log n)$.
\end{proposition}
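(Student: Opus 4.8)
The plan is to have Merlin certify a rooted spanning tree and to let each node locally verify the consistency of this rooting. First I would fix a spanning tree $T$ of the connected graph $G$ (which exists precisely because $G$ is connected) and root it at an arbitrary node $\rho$. Merlin then hands to every node $v$ the certificate $\tree$, where $\id(\rho)$ is the identifier of the root, $d_v$ is the distance from $v$ to $\rho$ in $T$, and $t_v$ is the identifier of the parent of $v$ in $T$ (with $t_v$ set to a null symbol for $v=\rho$). Since identifiers, distances in a tree, and the null symbol all fit in $\cO(\log n)$ bits, the certificates have the required size.

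The verification is a single communication round in which every node broadcasts its certificate to its neighbors and then checks the following local conditions: (i) all its neighbors agree on the value $\id(\rho)$; (ii) if $\id(v) = \id(\rho)$ then $d_v = 0$ and $t_v$ is null, and conversely if $d_v = 0$ or $t_v$ is null then $\id(v) = \id(\rho)$; (iii) if $v$ is not the root, then $t_v$ equals $\id(u)$ for some neighbor $u$, and that neighbor satisfies $d_u = d_v - 1$. A node declares the edge towards its parent to be a tree edge and, using the messages it received, also declares every edge towards a neighbor that named $v$ as its parent to be a tree edge; this is exactly how each node learns which of its incident edges belong to $T$.

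For completeness, if $G$ is connected then the honest certificates described above satisfy all of conditions (i)--(iii), so every node accepts and outputs the edges of $T$. For soundness I would argue that whenever all nodes accept, the set of declared parent edges forms a spanning tree. Condition (i), together with the connectivity of $G$, forces the advertised value $\id(\rho)$ to be the same at every node. Condition (ii), together with the uniqueness of identifiers, then guarantees that exactly one node, namely the one with $\id(v)=\id(\rho)$, has distance $0$, while every other node has a well-defined parent among its neighbors by (iii). Following parent pointers from any node yields a sequence along which the certified distance strictly decreases by one at each step, so it cannot cycle and must terminate at the unique root. Hence the $n-1$ declared parent edges connect every node to $\rho$ without creating a cycle, i.e. they form a spanning tree.

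The main obstacle lies entirely in the soundness direction: the local checks must rule out a dishonest labeling that looks locally consistent but encodes something other than a spanning tree, such as several connected components each carrying a spurious root, or a set of parent pointers closing up into a cycle. The first pathology is excluded by tying the role of root to the globally advertised identifier $\id(\rho)$ through condition (ii) and exploiting that identifiers are unique; the second is excluded by the strict monotonicity of the distance field $d_v$ along parent edges enforced by condition (iii), which is the standard device for certifying acyclicity with only $\cO(\log n)$ bits.
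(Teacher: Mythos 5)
Your proof is correct and follows exactly the classical construction that the paper relies on (it cites \cite{korman2010proof} rather than reproving it): certify a root identifier, a distance counter, and a parent pointer, and use the strict decrease of the distance along parent edges to rule out cycles and spurious roots. Indeed, the paper's own macro $\tree$ shows the authors had precisely this certificate $\langle \id(\rho), d_v, t_v\rangle$ in mind, so your argument matches the intended proof.
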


From the protocol of Proposition~\ref{prop:spanningtree} it is easy to construct another one  for problem $\sizeofG$, where the nodes, given the input graph $G=(V,E)$,  have to verify the precise value of $|V|$
 (recall that we are assuming that the nodes are only aware of a polynomial upper bound on $n=|V|$). 

\begin{proposition}\cite{korman2010proof}\label{prop:sizeofG}	
	There is a 1-round protocol for $\sizeofG$ with certificates of size $\cO(\log n)$.
\end{proposition}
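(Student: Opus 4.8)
The plan is to bootstrap on the spanning-tree protocol of Proposition~\ref{prop:spanningtree}. First I would have Merlin provide a spanning tree $T$ of $G$, rooted at a node $\rho$ of his choosing. The spanning-tree certificate already lets every node learn which of its incident edges lie in $T$; by additionally handing each node the identifier $\id(\rho)$ and its hop-distance to $\rho$, we orient $T$ so that every node knows its parent and its set of children (the tree-neighbor strictly closer to $\rho$ is the parent). All of this costs only $\cO(\log n)$ bits, and it is checked exactly as in Proposition~\ref{prop:spanningtree}, together with the local checks that neighbors agree on $\id(\rho)$, that $\rho$ carries distance $0$, and that the distance labels differ by exactly one across every tree edge. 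Since $T$ is already guaranteed to be a genuine tree, these distance checks force the labels to be the true distances to $\rho$, hence a well-defined rooting.

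On top of this, Merlin supplies every node $v$ with two integers: a global value $N$ (the purported value of $|V|$) and a claimed subtree size $s(v)$, meant to be the number of nodes in the subtree of $T$ rooted at $v$. The verification at each node $v$ then consists of: (i) re-running the spanning-tree verification and the rooting checks above; (ii) checking that every neighbor was given the same value $N$; (iii) if $v$ is a leaf of $T$, checking $s(v)=1$, and otherwise checking the recurrence $s(v)=1+\sum_{u} s(u)$, where the sum ranges over the children $u$ of $v$ and each $s(u)$ is the value that $u$ broadcasts in the single verification round; and (iv) if $v=\rho$, checking $s(\rho)=N$. A node accepts if and only if all applicable checks pass.

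Completeness is immediate: an honest Merlin sets $N=|V|$ and sets each $s(v)$ to the true subtree size, so every check passes. For soundness, suppose every node accepts. By the soundness of Proposition~\ref{prop:spanningtree}, $T$ is a bona fide spanning tree, hence acyclic and properly rooted at $\rho$. A bottom-up induction on $T$ then shows that the checks in (iii) force $s(v)$ to equal the true number of nodes in the subtree rooted at $v$: the base case is the leaf check $s(v)=1$, and the inductive step is exactly the recurrence. In particular $s(\rho)=|V|$, so check (iv) forces $N=|V|$, and check (ii) propagates this common value to every node, so that every node has correctly certified $|V|$.

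For the size bound, the spanning-tree part is $\cO(\log n)$, and since every subtree size is at most $n$ (hence within the polynomial bound the nodes already know), the integers $N$ and $s(v)$ each fit in $\cO(\log n)$ bits; the messages broadcast in the verification round (identifiers, distances, and the values $s(u)$) are likewise $\cO(\log n)$. I expect the only delicate point to be the soundness of the counting: one must rule out Merlin supplying consistent-but-wrong subtree sizes. This is precisely what the leaf base case together with the local recurrence prevent, using crucially that Proposition~\ref{prop:spanningtree} already guarantees the underlying structure is a genuine tree, so there is no cycle along which a false count could be made to circulate.
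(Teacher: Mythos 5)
Your overall plan (a rooted spanning tree plus bottom-up subtree counting, with the root comparing its count against the announced value $N$) is exactly the standard construction behind the cited result; the paper itself gives no proof and simply invokes \cite{korman2010proof} and Proposition~\ref{prop:spanningtree}. However, as written your soundness argument has a genuine gap: the claim that the checks ``$\rho$ carries distance $0$'' and ``the distance labels differ by exactly one across every tree edge'' force the labels to be the true distances to $\rho$ is false, and the counting built on top of it can be fooled. Concretely, let $G$ (and hence $T$) be the path $\rho,a,b,c,d$, and let Merlin assign labels $0,1,0,1,2$ and $N=2$. All of your rooting checks pass: $\rho$ has label $0$, every tree edge has labels differing by exactly one, and everyone agrees on $\id(\rho)$. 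But $b$ is a non-root local minimum of the labeling, so it has no parent, and the parent/child relation silently decomposes $T$ into two fragments, $\{\rho,a\}$ and $\{b,c,d\}$. Merlin sets $s(d)=1$, $s(c)=2$, $s(b)=3$ (or $4$, depending on how children are identified), $s(a)=1$, $s(\rho)=2$: then $a$ and $d$ pass the leaf check, every other node passes the recurrence, nobody ever audits $s(b)$ because $b$ has no parent, and $\rho$ confirms $s(\rho)=N=2$. Every node accepts although $|V|=5\neq N$, so soundness fails.

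The missing ingredient is a check that anchors the parent relation globally at $\rho$: every node other than $\rho$ must verify that it has exactly one tree-neighbor whose label equals its own label minus one, and designate that neighbor as its parent (equivalently: no node other than $\rho$ may be a local minimum of the labeling, and only $\rho$ may carry label $0$). With this extra local check, following parent pointers strictly decreases the label, so every parent chain terminates at $\rho$, the parent edges form all of $T$ properly rooted, and your bottom-up induction then correctly forces $s(\rho)=|V|$ and hence $N=|V|$. This is a one-line repair, and with it your argument is the intended proof; without it, the step ``hence a well-defined rooting'' is simply not implied by the checks you listed.
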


Finally, for two fixed nodes $s,t \in V$, problem $\stpath$ is defined in the usual way:  given a network configuration $\langle G, \id \rangle$, the output is a path $P$ that goes from $s$ to $t$. 
In other words,  each node must end up knowing whether it belongs to $P$, and, in the positive cases, which of its neighbors are its predecessor and successor in $P$. 

%

\begin{proposition}\cite{korman2010proof}\label{prop:stpath}	
	There is a 1-round protocol for $\stpath$ with certificates of size $\cO(\log n)$.
\end{proposition}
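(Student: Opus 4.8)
The plan is to give a direct proof-labeling scheme in which the prover marks a single simple path and certifies its linear order, rather than to invoke a heavier subroutine. Concretely, for a genuine $s$--$t$ path $P = (s = v_0, v_1, \dots, v_\ell = t)$ the prover hands each node $v$ a certificate consisting of one bit $b(v)$ indicating whether $v \in P$, together with, for the marked nodes, the index $d(v) \in \{0, \dots, \ell\} \subseteq [n]$ giving the position of $v$ along $P$ (so $d(s) = 0$ and $d(t) = \ell$). Since $\ell \le n-1$, each certificate has size $\cO(\log n)$. In the single verification round every node broadcasts the pair $(b(v), d(v))$ to its neighbours, which is again $\cO(\log n)$ bits. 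Call a neighbour $u$ of $v$ a \emph{marked neighbour} if $b(u) = 1$.

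The local acceptance test at a node $v$ with $b(v) = 1$ is as follows. If $v = s$, check $d(v) = 0$ and that $v$ has exactly one marked neighbour, whose index is $1$. If $v = t$, check that $v$ has exactly one marked neighbour $u$, and that $d(u) = d(v) - 1$. Otherwise ($v$ is internal), check that $v$ has exactly one marked neighbour of index $d(v) - 1$, exactly one of index $d(v) + 1$, and no further marked neighbours. A node with $b(v) = 0$ accepts unconditionally. Each node then outputs, from the received messages, which of its incident edges lie on $P$ and which neighbours are its predecessor and successor, as required by the problem statement. Completeness is immediate: when the prover is honest all these equalities hold, so every node accepts.

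For soundness suppose every node accepts, and let $H$ be the subgraph induced by the marked vertices. The endpoint and internal checks force every marked vertex other than $s$ and $t$ to have degree exactly two in $H$, while $s$ and $t$ have degree one; hence every vertex of $H$ has degree at most two, so $H$ is a disjoint union of paths and cycles. A cycle is impossible, since traversing it the index would have to change by exactly $1$ at each step and yet return to its starting value, a contradiction. A path component has two endpoints, each of degree one in $H$, and a degree-one marked vertex is permitted only at $s$ or at $t$; as $s$ and $t$ are unique, exactly one path component survives, with endpoints $s$ and $t$. Finally, following this component from $s$ the indices are forced to be $0, 1, 2, \dots$ and to reach $d(t) = \ell$ at $t$, hence strictly increasing; since each vertex carries a single index, no vertex repeats, and $H$ is a simple $s$--$t$ path, exactly the required output.

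I expect the soundness step to be the only real obstacle, and in particular the need to rule out that a cheating prover marks a locally path-like but globally spurious structure (a detached cycle, a branch, or a second component). It is precisely the uniqueness of the endpoints $s$ and $t$ together with the forced monotonicity of the indices that closes this gap. An alternative route simply bootstraps from Proposition~\ref{prop:spanningtree}: root a certified spanning tree at $s$ and have the prover additionally mark the ancestors of $t$, verifying that the marked set is upward-closed toward the root and has a unique marked child at every marked node except $t$; the same counting argument then shows that the marked vertices are exactly the tree path from $s$ to $t$.
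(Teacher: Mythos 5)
The paper itself gives no proof of this proposition: it is imported as a known result from the cited work of Korman, Kutten and Peleg, and your distance-labelling scheme is essentially the standard construction behind that result, so the approach is the right one. However, as written your protocol has a concrete soundness hole: every one of your checks is conditioned on $b(v)=1$, and ``a node with $b(v)=0$ accepts unconditionally.'' A cheating prover can therefore set $b(v)=0$ for \emph{every} node, including $s$ and $t$; all nodes then accept, yet the certified structure is empty and no $s$--$t$ path is output. Since the graph is connected, an $s$--$t$ path always exists, so this empty certificate must be rejected. The fix is one line but necessary: $s$ must reject unless $b(s)=1$ and $d(s)=0$, and $t$ must reject unless $b(t)=1$; with that amendment your degree-counting argument (all marked vertices have degree two in $H$ except the two forced degree-one vertices $s$ and $t$, hence exactly one path component and no isolated vertices) goes through.

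A second, smaller point: your cycle-elimination argument is stated too loosely. Having the index change by exactly $\pm 1$ at each step and return to its starting value is not by itself a contradiction --- an even cycle could in principle alternate $+1,-1,+1,-1,\dots$. What actually rules out cycles is the full local test: each internal marked vertex has \emph{exactly one} marked neighbour of index $d(v)-1$ and \emph{exactly one} of index $d(v)+1$, so a traversal that enters a vertex from its index-$(d(v)-1)$ neighbour must exit to its index-$(d(v)+1)$ neighbour. Hence indices strictly increase along any traversal of a cycle and can never close up; equivalently, the vertex of maximum index on a cycle would need a marked neighbour of strictly larger index. With these two repairs the proof is correct, and your alternative route through \Cref{prop:spanningtree} (marking the tree ancestors of $t$) would also work, at the cost of certifying the spanning tree first.
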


\subsection{Problems Equality and Permutation} 

A second important building block, this time for interactive protocols, is a protocol for solving problem $\equality$, which is defined as follows.
Given $G$ a connected $n$-node graph, each node $v$ receives two natural numbers $a(v)$ and $b(v)$, both of them encoded with $\mathcal{O}(log(n))$ bits.
The problem $\equality$ consists of verifying whether the multi-sets $\mathcal{A} = \{a(v)\}_{v\in V}$ and $\mathcal{B} = \{b(v)\}_{v\in V}$ are equal.

\begin{proposition}\cite{naor2020power} \label{prop:equality}
Problem $\equality$ belongs to {\dAM}$_{1/3}[\log n]$.
\end{proposition}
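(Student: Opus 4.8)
The plan is to turn multiset equality into a polynomial identity test and then to certify the single global quantity that this test needs distributively, along a spanning tree. Associate with the inputs the monic polynomials
\[
P_{\mathcal A}(x)=\prod_{v\in V}\bigl(x-a(v)\bigr),
\qquad
P_{\mathcal B}(x)=\prod_{v\in V}\bigl(x-b(v)\bigr).
\]
Since a monic polynomial is determined by the multiset of its roots, $\mathcal A=\mathcal B$ if and only if $P_{\mathcal A}=P_{\mathcal B}$ over $\mathbb{Z}$. As all inputs fit in $\cO(\log n)$ bits, there is a known constant $d$ with $a(v),b(v)<n^{d}$. Fix a prime $p$ with $\poly(n)\ge p>\max\{3n,\,n^{d}\}$ (of size $\cO(\log n)$ and taken as common knowledge): because $p$ exceeds every input, reduction modulo $p$ is injective on the input range, so $\mathcal A\ne\mathcal B$ still forces $P_{\mathcal A}\ne P_{\mathcal B}$ as polynomials over $\mathbb{F}_p$. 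Both being monic of degree $n$, their difference has degree at most $n-1$ and therefore at most $n-1$ roots in $\mathbb{F}_p$.

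The protocol is then a two-round Arthur--Merlin protocol. First Arthur draws one element $r$ uniformly at random from $\mathbb{F}_p$; as we are in the shared-randomness setting, $r$ is public, hence known to all nodes and to Merlin, and it is fixed \emph{before} Merlin speaks. Merlin answers by furnishing a rooted spanning tree $T$ of $G$ through the certificates of Proposition~\ref{prop:spanningtree}, and by giving every node $v$ two field elements $A_v,B_v\in\mathbb{F}_p$, meant to equal the partial products $\prod_{u\in T_v}(r-a(u))$ and $\prod_{u\in T_v}(r-b(u))$ over the subtree $T_v$ rooted at $v$. In the verification round each node broadcasts its tree data together with $A_v,B_v$; then $v$ checks that $T$ is a valid spanning tree and that
\[
A_v\equiv(r-a(v))\!\!\prod_{c\,:\,c\text{ child of }v}\!\! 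A_c \pmod{p},
\qquad
B_v\equiv(r-b(v))\!\!\prod_{c\,:\,c\text{ child of }v}\!\! B_c \pmod{p},
\]
and, if $v$ is the root, that $A_v\equiv B_v\pmod p$; it rejects if any check fails. Every exchanged message and certificate consists of tree pointers and field elements, so all have size $\cO(\log n)$, which places the protocol in $\dAM[\log n]$.

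For completeness, if $\mathcal A=\mathcal B$ then Merlin supplies the genuine tree and the true partial products; all recurrences hold, and at the root $A_{\mathrm{root}}=\prod_{v}(r-a(v))=\prod_{v}(r-b(v))=B_{\mathrm{root}}$ for \emph{every} $r$, so all nodes accept with probability $1$. For soundness, assume $\mathcal A\ne\mathcal B$. If Merlin's tree certificate is not a genuine spanning tree, some node rejects by the soundness of Proposition~\ref{prop:spanningtree}. Otherwise the local recurrences pin Merlin down: an induction from the leaves upward shows that any family $(A_v)_v$ passing all checks must coincide with the true subtree products (a leaf is forced to $A_v=r-a(v)$, and the inductive step propagates this), and likewise for $(B_v)_v$. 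Hence $A_{\mathrm{root}}=P_{\mathcal A}(r)$ and $B_{\mathrm{root}}=P_{\mathcal B}(r)$ in $\mathbb{F}_p$, and the root accepts only when $r$ is a root of the nonzero polynomial $P_{\mathcal A}-P_{\mathcal B}$; this happens with probability at most $(n-1)/p<1/3$. Thus the rejection probability exceeds $2/3$, giving error below $1/3$.

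I expect the two delicate points to be exactly those isolated above: first, guaranteeing that the comparison carried out modulo $p$ faithfully reflects equality over $\mathbb{Z}$, which is secured by choosing $p$ larger than all inputs so that no spurious collisions are introduced and, crucially, by fixing $p$ \emph{independently of Merlin} (were Merlin allowed to choose $p$ after seeing $r$, he could take a prime factor of the nonzero integer $P_{\mathcal A}(r)-P_{\mathcal B}(r)$ and defeat soundness); and second, arguing that the distributed consistency checks leave Merlin no freedom to misreport the partial products, which is the leaf-to-root pinning along the verified spanning tree. Everything else is a routine error analysis via the bound on the number of roots of a degree-$(n-1)$ polynomial over $\mathbb{F}_p$.
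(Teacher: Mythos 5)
Your proposal is correct and is essentially the proof of the cited source: the paper itself does not prove Proposition~\ref{prop:equality} but imports it from \cite{naor2020power}, and the argument there is exactly yours --- compare the characteristic polynomials $\prod_{v\in V}(x-a(v))$ and $\prod_{v\in V}(x-b(v))$ at a public random point of a prime field larger than all inputs, let Merlin certify the partial products leaf-to-root along a spanning tree verified as in Proposition~\ref{prop:spanningtree}, and bound the cheating probability by the number of roots of the nonzero difference polynomial. The only detail to patch is your phrase ``taken as common knowledge'': in this paper's model the nodes do not know $n$, so $p$ cannot be fixed a priori; instead Merlin also certifies $n$ in his message via Proposition~\ref{prop:sizeofG} (a deterministic check, so he cannot lie about it), and the nodes then read $r\in\mathbb{F}_p$ off a prefix of the shared random string once $p$ is determined, which preserves both the \dAM{} round structure and the $\cO(\log n)$ bandwidth --- and is exactly how the paper deploys \equality{} as a subroutine after Merlin's first round in its \dMAM{} protocols.
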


A closely related problem is \permutation, where some function $\pi$ is given as input, and the nodes must verify whether  $\pi$ is indeed a permutation
(a bijective function from $V$ to $[n]$). Note that the input is given in a distributed way, by given $\pi(v)$ to each node $v\in V$. 
 Using the protocol for $\equality$ as subroutine, it is possible to solve $\permutation$ with  certificates of size $\cO(\log n)$.

\begin{proposition}\cite{naor2020power}\label{prop:permu}
Problem $\permutation$ belongs to {\dMAM}$_{1/3}[\log n]$.
\end{proposition}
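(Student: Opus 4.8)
The plan is to reduce \permutation\ to the \equality\ problem of Proposition~\ref{prop:equality}, running the equality protocol inside a single Merlin-Arthur-Merlin interaction so that the whole thing fits in the \dMAM\ framework with $\cO(\log n)$-sized messages. The key observation is that a function $\pi:V\to[n]$ is a permutation if and only if it is a surjection onto $[n]$, which (since both domain and codomain are finite of the same size, once $|V|=n$ is known) is equivalent to asking that the multi-set $\{\pi(v)\}_{v\in V}$ equals the multi-set $\{0,1,\dots,n-1\}$. So the strategy is: first have the nodes agree on the value of $n$, then set $a(v)=\pi(v)$ and compare the multi-set $\A=\{a(v)\}_{v\in V}$ against the canonical multi-set $\B=\{0,\dots,n-1\}$ using the \equality\ protocol.

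The steps, in order, are as follows. First, I would invoke Proposition~\ref{prop:sizeofG} so that every node learns the exact value of $n$; this is a purely non-interactive (\dM) step with $\cO(\log n)$ certificates, and it can be folded into Merlin's first message of the \dMAM\ protocol. Second, I need to manufacture the second multi-set $\B=\{0,\dots,n-1\}$ in a distributed fashion, i.e. assign to each node $v$ a value $b(v)$ so that $\{b(v)\}_{v\in V}=\{0,\dots,n-1\}$ as multi-sets. The natural way is to let Merlin hand each node a distinct value $b(v)\in[n]$, together with a certificate that lets the nodes verify these $b$-values form exactly $\{0,\dots,n-1\}$; but verifying that the $b(v)$ are themselves a bijection is precisely the problem we are trying to solve, so this is circular. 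The clean fix is to use the spanning tree of Proposition~\ref{prop:spanningtree}: along a rooted spanning tree one can canonically enumerate the nodes $0,1,\dots,n-1$ (for instance by having Merlin certify, at each node, its rank together with the subtree sizes, so that ranks are checked locally to be a consistent traversal order) and take $b(v)$ to be this rank. This gives an $\cO(\log n)$ PLS producing a guaranteed-correct bijection $b:V\to[n]$, with no soundness leak. Third, with $a(v)=\pi(v)$ and the verified $b(v)$ in hand, I run the \dAM_{1/3}[\log n] protocol for \equality\ from Proposition~\ref{prop:equality} to decide whether $\{a(v)\}=\{b(v)\}=\{0,\dots,n-1\}$.

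For completeness and soundness: if $\pi$ is a permutation then $\{\pi(v)\}=\{0,\dots,n-1\}=\{b(v)\}$, Merlin supplies the correct spanning-tree and size certificates, and the \equality\ protocol accepts with probability $\geq 1-\eps$. If $\pi$ is not a permutation, then since it maps into $[n]$ and $|V|=n$, the multi-set $\{\pi(v)\}$ cannot equal $\{0,\dots,n-1\}$ (some value in $[n]$ is hit twice and some is missed, or some value lies outside $[n]$); no matter what certificates a dishonest Merlin sends, the tree-and-size gadgets force the $b(v)$ to be the true canonical enumeration, so the two multi-sets genuinely differ and the \equality\ protocol rejects at some node with probability $\geq 1-\eps$. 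Composing the deterministic PLS gadgets (which contribute no error) with the one randomized \equality\ call keeps the error bounded by $\eps\le 1/3$, and all messages and certificates remain $\cO(\log n)$; since \equality\ is a \dAM\ (two-round) protocol and Merlin's preprocessing certificate is delivered before Arthur's challenge, the composition lives in \dMAM[\log n].

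The main obstacle I anticipate is the circularity in Step two: naively asking the prover to label the nodes with $\{0,\dots,n-1\}$ is exactly as hard as \permutation\ itself, so the argument would be vacuous unless the second multi-set is constructed by a gadget whose correctness is verified \emph{deterministically and soundly}, independent of any prover claim about $\pi$. Getting the spanning-tree-based canonical enumeration to be locally checkable (so that a cheating Merlin cannot skew the $b(v)$ to match a bogus $\A$) is the delicate point; everything else is a routine composition of the toolbox propositions.
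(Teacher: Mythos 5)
Your proposal is correct, but note first that the paper contains no proof of this proposition at all: it is imported from~\cite{naor2020power}, and the surrounding text only remarks that \permutation{} can be solved ``using the protocol for \equality{} as subroutine.'' Your argument is a sound instantiation of exactly that remark, and the one genuinely non-trivial ingredient you add --- a prover-resistant way to manufacture the reference multiset $\{0,\dots,n-1\}$ --- is handled correctly: certifying a rooted spanning tree (Proposition~\ref{prop:spanningtree}), the value of $n$ (Proposition~\ref{prop:sizeofG}), subtree sizes, and preorder ranks forces, by purely local deterministic checks, the ranks $b(v)$ to form a bijection onto $[n]$ no matter what Merlin sends, which is precisely what breaks the circularity you identified; prepending this Merlin-only preamble to the two-round \equality{} protocol of Proposition~\ref{prop:equality} then gives a legitimate $\dMAM[\cO(\log n)]$ protocol with error at most $1/3$ (the deterministic gadgets contribute none). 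The more direct route --- and the natural reading of the cited construction --- does not use \equality{} as a black box but opens up its fingerprinting structure: since one of the two multisets is the \emph{fixed} set $[n]$, its fingerprint $\prod_{i\in[n]}(x_0-i) \bmod p$ can be computed locally by the root after Arthur's challenge $(p,x_0)$, so Merlin only needs to certify the tree-aggregated product $\prod_{v\in V}(x_0-\pi(v)) \bmod p$, and no canonical-enumeration gadget is needed at all. Your version costs a bit more machinery (the rank/subtree-size certificates) but buys a cleaner modular reduction to \equality{}; both yield the same round structure and the same $\cO(\log n)$ bound.
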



We now introduce a new problem called  $\COP$, which is defined for inputs of the form $\langle G = (V,E), \id, (x,\pi) \rangle$,
where the nodes must verify that: (i) $\pi$ is a bijection  from $V$ to $[n]$; (ii) $x$ is an injective function from $V$ to $[N]$, where $N \geq n$; and
(iii) for every $u,v \in V$, $\pi(u) \geq \pi(v)$ $\iff$ $x(u) \geq x(v)$.

%
%

\begin{proposition}\label{prop:COP}
Problem $\COP$ belongs to ${\dMAM}[\log N]$.
\end{proposition}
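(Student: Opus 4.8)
The plan is to reduce \COP{} to the protocols for \permutation{} and \equality{} that were just recalled, plus a handful of local tests. The guiding observation is that, once $\pi$ is known to be a bijection onto $[n]$, conditions (ii) and (iii) together are equivalent to a single statement about the order induced by $\pi$: writing $v_k=\pi^{-1}(k)$, they hold if and only if $x(v_0)<x(v_1)<\dots<x(v_{n-1})$. Indeed a strictly increasing chain immediately gives injectivity of $x$ (hence $N\ge n$, since $n$ distinct values land in $[N]$) and makes $\pi$ the rank function of $x$, which is exactly (iii); the converse is clear. So the whole task boils down to certifying that consecutive nodes in the $\pi$-order carry strictly increasing $x$-values\,---\,a ``sortedness'' check across nodes of $G$ that need not be adjacent.

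First I would run, in parallel, the $\dMAM[\log n]$ protocol for \permutation{} (Proposition~\ref{prop:permu}) to establish (i), and the $\dM[\log n]$ protocol for \sizeofG{} (Proposition~\ref{prop:sizeofG}) so that every node learns $n$ and in particular recognizes whether $\pi(v)=0$ or $\pi(v)=n-1$. Each node also checks locally that $x(v)\in[N]$. To handle the non-local sortedness test I would use a predecessor trick: in the first Merlin round, Merlin gives each node $v$ with $\pi(v)\ge 1$ a value $y(v)\in[N]$ claimed to equal $x(\pi^{-1}(\pi(v)-1))$, the $x$-value of the $\pi$-predecessor of $v$. After encoding each pair $(a,b)\in[n]\times[N]$ as the integer $aN+b<N^2$ (so on $\cO(\log N)$ bits) and padding the two extreme nodes with a sentinel so that every node contributes exactly one element to each side, I feed into one \equality{} instance (Proposition~\ref{prop:equality}) the multisets
\[
\A=\{\langle \pi(v),x(v)\rangle : \pi(v)\le n-2\}, \qquad \B=\{\langle \pi(v)-1,y(v)\rangle : \pi(v)\ge 1\}.
\]
Since $\pi$ is a bijection, the first coordinates occurring in $\A$ and in $\B$ are precisely $\{0,\dots,n-2\}$, each exactly once; hence $\A=\B$ forces $y(v_k)=x(v_{k-1})$ for all $k\in\{1,\dots,n-1\}$. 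Finally every node $v$ with $\pi(v)\ge 1$ verifies locally that $y(v)<x(v)$, which together with the previous identity yields the increasing chain.

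All three reductions can be scheduled inside a single three-round $\dMAM$ protocol: the opening Merlin message carries the \permutation{} and \sizeofG{} certificates and the values $y(v)$ (thereby freezing the multisets $\A$ and $\B$); the Arthur round broadcasts the shared randomness consumed by both \permutation{} and \equality{}; and the closing Merlin message carries their replies. Completeness is immediate, as an honest Merlin publishes $y(v_k)=x(v_{k-1})$ and every test passes. For soundness, if the instance is outside \COP{} then either $\pi$ is not a bijection, and \permutation{} rejects at some node with high probability, or $\pi$ is a bijection but the chain fails to be strictly increasing; in that case, on every run where \equality{} accepts we are entitled to conclude $y(v_k)=x(v_{k-1})$, so some local comparison $y(v)<x(v)$ must fail and that node rejects.

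I expect the main obstacles to be bookkeeping rather than conceptual: one must amplify the two randomized subprotocols to error $1/9$ by a constant number of independent repetitions (which preserves the $\cO(\log N)$ bandwidth) so that a union bound keeps the overall error below $1/3$, and one must treat the two boundary ranks $0$ and $n-1$ consistently on both sides of the \equality{} test. The genuinely load-bearing idea is the predecessor encoding, which turns a global sortedness statement into a single multiset-equality instance whose correctness is \emph{forced}, not merely suggested, by the distinctness of the rank coordinates.
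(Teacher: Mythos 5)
Your proposal is correct and follows essentially the same route as the paper's proof: certify $n$ and the bijectivity of $\pi$ via \sizeofG\ and \permutation, have Merlin supply each node the $x$-value of its neighbor in the $\pi$-order, force those claimed values with a single \equality\ instance on rank-indexed pairs, and finish with purely local comparisons, all packed into the same three-round \dMAM\ schedule. The only differences are cosmetic—you use predecessors with sentinel padding where the paper uses successors with a mod-$n$ wraparound, and your strict inequality $y(v)<x(v)$ is in fact slightly more careful than the paper's check $x(v)\leq y(v)$, which as written does not by itself enforce injectivity of $x$.
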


\begin{proof}
The following is a protocol for $\COP$. In the first round, each node $v$ receives from the prover:
\begin{itemize}
\item The certification for the size of $V$ and that $\pi$ is an injective function. 
\item $a(v) = (x(v),\pi(v))$ and $b(v) = (y(v),\pi(v)+1 \mod n)$.
\end{itemize}

Suppose that $\pi$ is an injective function and that $n$ is known by all the nodes. Observe that, if $\{a(v)\}_{v\in V}$ and $\{b(v)\}_{v\in V}$ are equal, then $y(v) = x(u)$, where $u$ is the successor of $v$, i.e. $\pi(u) = \pi(v)+1$. Then,  $\langle G = (V,E), \id, (x,\pi) \rangle$  is a yes-instance of $\COP$ if and only if $\pi$ is an injective function, $\{a(v)\}_{v\in V}=\{b(v)\}_{v\in V}$ , and $x(v) \leq y(v)$ for each node $v$ such that $\pi(v)< n-1$.

Then, in the two remaining rounds, the nodes interact with the prover in order to prove that $\pi$ is an injective function and that $\{a(v)\}_{v\in V}$, $\{b(v)\}_{v\in V}$ are equal multi-sets, using the protocols for $\permutation$ and $\equality$, respectively. The nodes also check that $x(v) \leq y(v)$ (except for the node $v$ such that $\pi(v) = n-1$). The communication bounds, as well as the correctness and soundness of the protocol follows from the ones of protocols for $\spanningtree$, $\sizeofG$, $\equality$ and $\permutation$.
\qed
\end{proof}

Note that our protocol for $\COP$  can be easily extended to the case when  the range of function $x$ is a set $S$ of size $N$ that admits a total order.  


\section{Permutation Graphs} \label{sec:permutation}

We begin this section with a characterization of permutation graphs. Then, we use this characterization to prove the existence of a PLS solving \permutationgraph. Let us call $S_n$ the set of permutations of $[n$].  Given $\pi \in S_n$, we say that a pair $i,j \in [n]$ is an \emph{inversion under $\pi$}, or analogously, a $\pi$-inversion, if $$(i-j)(\pi(i)-\pi(j))<0.$$ The definition of a permutation graph can be restated as follows (see \cite{Brandstdt1999}). A graph $G=([n],E)$ is a permutation graph if there exists a permutation $\pi \in S_n$ such that  for every $i,j \in [n]$, the pair $\{i,j\}$ is an edge of $G$, if and only $i,j$ is an inversion under $\pi$. In such a case we say that $\pi$ is a \emph{proper permutation model} of $G$. 

Let us fix a graph $G = ([n],E)$. We say that a permutation $\pi$ is a \emph{semi-proper permutation model} of $G$ if $\pi$ satisfies that for every pair $i, j \in [n]$, if $\{i,j\}$ is an edge of $G$ then $i,j$ is an inversion under $\pi$.The following remark and lemma characterize permutation graphs.

\begin{remark}\label{rem:pyq} Let $G = ([n],E)$ be a permutation graph, $\pi$ permutation model and let $\{i, j\}\in E$ an edge such that $i<j$.  Observe that, since the $G$ is connected, then all integer values $s \in [\pi(i)+1,\pi(j)-1]$ satisfy that $s = \pi(k)$, for some node $k$ that is neighbor of $i$ or $j$. Similarly, all integer values $h \in [i +1,j-1]$ are neighbors of $i$ or $j$. 
\end{remark}

 We denote $N^+_G(i)$ and $N^-_G(i)$ the sets of neighbors of node $i$ with, respectively, higher and fewer identifiers than $i$. Formally,  \(N^+_G(i) = \{j\in N_G(i): j>i\}\), and \(N^-_G(i) = \{j\in N_G(i): j<i\}\). We also denote by $d^+(i)$ and $d^-(i)$ the number of neighbors of $i$ with, respectively, higher and lower identifiers than $i$. Formally, $d^+(i) = |N^+_G(i)|$ and $d^-(i) = |N^-_G(i)|$. From previous definition is direct that $|N_G(i)| = d^+(i) + d^-(i)$. 

Remember that we are only considering undirected graphs, so the notation $N^+(i)$, $N^-(i)$, $d^+(i)$ and $d^-(i)$ must not be confused with the in/out neighborhood or degrees used for directed graphs.



\begin{lemma}\label{lem:permcaract}
Let $G=([n],E)$ be a graph and let $\pi$ be a semi-proper permutation model of~$G$. Then $\pi$ is a proper permutation model of $G$ if and only if, for every $i\in[n]$, $$i + d^+(i) = \pi(i) + d^-(i).$$
\end{lemma}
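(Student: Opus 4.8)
The plan is to compare, for each vertex $i$, the number of edges incident to $i$ with the number of inversions incident to $i$, and to extract from the stated equation a purely combinatorial identity that holds for \emph{every} permutation. First I would split the inversions through $i$ into two types: the \emph{upward} inversions $I_1(i)=|\{j>i:\pi(j)<\pi(i)\}|$ and the \emph{downward} inversions $I_2(i)=|\{j<i:\pi(j)>\pi(i)\}|$. A double-counting argument — comparing the $n-1-i$ indices above $i$ with the $n-1-\pi(i)$ indices whose $\pi$-value exceeds $\pi(i)$, both measured against the common quantity $|\{j>i:\pi(j)>\pi(i)\}|$ — yields the identity
\[
i + I_1(i) = \pi(i) + I_2(i),
\]
valid for every permutation $\pi$ and every $i$. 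This identity is the backbone of the proof.

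Next I would use that $\pi$ is semi-proper. If $\{i,j\}\in E$ and $j>i$, then $\{i,j\}$ is an inversion with $\pi(j)<\pi(i)$, so it is an upward inversion of $i$; symmetrically, an edge with $j<i$ is a downward inversion. Hence $N^+_G(i)$ is contained in the set of upward inversions and $N^-_G(i)$ in the set of downward inversions, giving $d^+(i)\le I_1(i)$ and $d^-(i)\le I_2(i)$. Moreover, since semi-propriety already gives containment of the edge set in the inversion set, the model $\pi$ is proper exactly when every inversion is an edge, equivalently when all these inclusions are equalities, i.e.\ $d^+(i)=I_1(i)$ for all $i$.

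Setting $a_i=I_1(i)-d^+(i)\ge 0$ and $b_i=I_2(i)-d^-(i)\ge 0$, subtracting the stated equation from the identity above shows that the equation $i+d^+(i)=\pi(i)+d^-(i)$ holds for all $i$ if and only if $a_i=b_i$ for all $i$. The forward implication is then immediate: if $\pi$ is proper, then $a_i=b_i=0$, so the equation holds.

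The converse is where the real work lies, and is the step I expect to be the main obstacle: knowing only $a_i=b_i$ at each vertex does not by itself force each to be $0$ through a global count, since summing the $a_i$ and summing the $b_i$ both return the same total (the number of inversions that fail to be edges). To break this symmetry I would interpret those non-edge inversions as the edges of an auxiliary graph $H$ on $[n]$; then $a_i$ and $b_i$ are precisely the numbers of $H$-neighbours of $i$ with larger and with smaller index, respectively. If $H$ were nonempty, I would take the smallest index $i_0$ that is not isolated in $H$: all its $H$-neighbours have larger index, so $b_{i_0}=0$, whence $a_{i_0}=b_{i_0}=0$ forces $i_0$ to have no $H$-neighbour at all, a contradiction. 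Thus $H$ is empty, no inversion fails to be an edge, and $\pi$ is proper, which completes the argument.
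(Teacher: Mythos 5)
Your proposal is correct and follows essentially the same route as the paper's proof: your identity $i+I_1(i)=\pi(i)+I_2(i)$ together with the decomposition $I_1(i)=d^+(i)+a_i$, $I_2(i)=d^-(i)+b_i$ is exactly the paper's counting of the pre-images of $[\pi(i)]$ (with $a_i,b_i$ playing the role of its $a^+(i),a^-(i)$), and your minimal non-isolated vertex of the auxiliary graph $H$ is precisely the paper's choice $i^*=\min U$, yielding $b_{i^*}=0$ while $a_{i^*}\geq 1$. The only difference is organizational: you isolate the permutation identity as a statement valid for all $\pi$ and phrase the non-edge inversions as a graph, which is a slightly cleaner presentation of the identical argument.
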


\begin{proof}
Suppose first that $G$ is a permutation graph and $\pi$ is a proper permutation model for $G$. Given a node $i\in [n]$, observe that all neighbors $j\in N^+(i)$  satisfy $0\leq\pi(j)< \pi(i)$. Otherwise, $(i,j)$ would not be an inversion under $\pi$. Analogously, if $j\in N^+(i)$ then $\pi(i)<\pi(j)\leq n$. Hence, the pre-images of $[\pi(i)]$ under $\pi$ are the nodes in $N^+(i)$ and nodes in $[i]\setminus N^-(i)$. Therefore, $\pi(i) = d^+(i) + i  - d^-(i)$, from which the equality $i + d^+(i) = \pi(i) + d^-(i)$ is deduced.

Let us suppose that $G$ is not a permutation graph. We show the existence of a node $i^* \in [n]$ such that  $i^* + d^+(i^*) \neq \pi(i^*) + d^-(i^*)$. Remember that we are assuming that $\pi$ is a semi-proper permutation model for $G$. Then, we have that necessarily there exists a pair $\{i,j\}\not\in E$ such that $i,j$ is an inversion under $\pi$.  For a node $i\in [n]$, let us denote by $a^-(i)$ and $a^+(i)$ the number of nodes with, respectively, fewer and larger  identifier than $i$, forming an inversion with $i$, but that are not adjacent to $i$. Formally:
\[a^+(i) = |\{j\in \{i+1, \dots, n\}: \{i,j\}\notin E  \textrm{ and }  \{i,j\} \textrm{ is a } \pi\text{-}inversion \}| \]
\[a^-(i) = |\{j\in \{1, \dots, i-1\}: \{i,j\}\notin E  \textrm{ and }  \{i,j\} \textrm{ is a } \pi\text{-}inversion \}| \]

The pre-images of the set $[\pi(i)]$ under $\pi$ are: The $d^+(i)$ neighbors higher than~$i$ (because the nodes in $N^{+}(i)$ form an inversion with $i$, as $\pi$ is semi-proper); the $a^+(i)$ nodes that are not neighbors of $i$ but form an inversion with $i$; and the nodes in set $[i]$ that do not form a inversion with $i$, which are exactly $i-(d^-(i) + n^-(i))$, then we have that
\(\pi(i) =  d^+(i) + a^+(i) + i - (a^-(i) + d^-(i)), \)
from which it is concluded that 
\[ i + d^+(i) \not =  \pi(i)  + d^-(i) \iff a^-(i) \neq a^+(i). \]

Therefore, we have to show that there exists a node $i$ such that $a^-(i)\not = a^+(i)$. Let $U\subseteq V$ be the set  of nodes forming an inversion with another node that is not its neighbor. Formally
\[U = \{j\in [n] \colon \exists k\in \{j,k\}\notin E \textrm{ and } \{j,k\} \textrm{ is a } \pi\text{-}inversion\}\]

Let $i^* = \min U$. Then, by definition of $U$ there exists a $k\in [n]$ such that $\{i^*,k\}$ is an inversion under $\pi$ and $\{i^*,k\}\notin E$. Also, since $i^*$ is the minimum node in $U$, necessarily $k>i^*$. Hence $a^-(i^*) = 0$ and $a^+(i^*)\geq 1$. Therefore $i^*$ satisfies the condition $n^-(i)\not = n^+(i)$ and therefore \(i^* + d^+(i^*) \neq  \pi(i^*) + d^-(i^*)\).
\end{proof}

We are now ready to define our protocol and main result regarding \permutationgraph.

\begin{theorem}\label{theo:permutaitongraph}
There is a 1-round proof labelling scheme for \permutationgraph\ with certificates of size $\cO(\log n)$.
\end{theorem}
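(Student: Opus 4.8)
The plan is to let Merlin hand over an explicit permutation model of $G$ and to let the nodes check, purely locally, that it is \emph{proper}, using the arithmetic criterion of Lemma~\ref{lem:permcaract}. Concretely, Merlin enumerates the vertices by $[n]$ and, in addition, fixes a second linear order: to each node $v$ he sends a pair $(p(v),\pi(v))\in[n]\times[n]$, where $p$ is meant to be the bijection identifying $V$ with $[n]$ (the ``top'' order) and $\pi$ the bijection giving the ``bottom'' order. Observe that the top order must genuinely be provided, since an arbitrary enumeration of $V$ need not extend to any model; for instance $P_3=a-b-c$ admits no proper $\pi$ in the natural order $a,b,c$, but it does in the order $b,a,c$. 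Merlin also supplies the $\sizeofG$ certificate of Proposition~\ref{prop:sizeofG} (so that every node learns $n$) together with certificates guaranteeing that $p$ and $\pi$ are bijections onto $[n]$. Once $V$ is identified with $[n]$ via $p$, the graph becomes one of the form $([n],E)$ and $\pi$ plays exactly the role of the permutation in Lemma~\ref{lem:permcaract}.

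The local verification is then the one suggested by the lemma, and the whole scheme stays $1$-round. Each node $v$ broadcasts $(p(v),\pi(v))$ to its neighbours, at cost $\cO(\log n)$ bits. With this, $v$ checks the \emph{semi-proper} condition, i.e.\ that every incident edge is an inversion: for every neighbour $u$ it tests $(p(u)-p(v))(\pi(u)-\pi(v))<0$ (this already forces adjacent nodes to carry distinct values in each order). Then $v$ computes $d^+(v)=|\{u\in N(v):p(u)>p(v)\}|$ and $d^-(v)=|\{u\in N(v):p(u)<p(v)\}|$ and checks the identity $p(v)+d^+(v)=\pi(v)+d^-(v)$. A node accepts iff all of these tests, together with the size and bijection certificates, succeed. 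For \emph{completeness}, if $G$ is a permutation graph Merlin takes $p,\pi$ to be the two orders of a genuine proper model; the semi-proper condition holds by definition and the identity holds by the forward direction of Lemma~\ref{lem:permcaract}, so all nodes accept. For \emph{soundness}, if every node accepts then $p,\pi$ are certified bijections and $\pi$ is a semi-proper model of $([n],E)$; since $p(v)+d^+(v)=\pi(v)+d^-(v)$ holds everywhere, the backward direction of Lemma~\ref{lem:permcaract} forces $\pi$ to be \emph{proper}, so $G$ is a permutation graph, and no wrong instance is accepted.

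The step I expect to be the real obstacle, and the only place needing more than a trivial local test, is certifying in the PLS model (no interaction, no randomness) that $p$ and $\pi$ are bijections onto $[n]$ with certificates of size $\cO(\log n)$. Out-of-range values and clashes between \emph{adjacent} nodes are caught immediately, but two non-adjacent nodes could a priori share a value, and the interactive tools for $\permutation$ are unavailable here. The plan is to build this subroutine on the spanning-tree primitive of Proposition~\ref{prop:spanningtree}: Merlin roots a certified spanning tree and propagates, along tree edges, aggregate information (subtree sizes together with a globally consistent ranking of the assigned values) so that each node can locally confirm that the values occurring in its subtree fill exactly the expected range, which forces the labelling to realise every element of $[n]$ precisely once. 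The delicate point is to encode this counting so that each node stores only $\cO(\log n)$ bits; verifying that this bandwidth suffices is what keeps the whole protocol within the claimed size, while the correctness of the geometric content is inherited entirely from Lemma~\ref{lem:permcaract}.
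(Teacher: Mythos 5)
Your arithmetic core is exactly the paper's: two labelings (your $p,\pi$, the paper's $\ell_1,\ell_2$), the local inversion test, and the identity $p(v)+d^+(v)=\pi(v)+d^-(v)$ from Lemma~\ref{lem:permcaract}, together with a \sizeofG{} certificate. Your observation that the ``top'' order must also be supplied by the prover (the $P_3$ example) is correct and matches the paper's design. But the step you defer --- certifying in a deterministic one-round PLS that $p$ and $\pi$ are bijections onto $[n]$ --- is precisely the non-trivial content of the paper's proof, and your sketch for it does not work. A spanning tree of $G$ is structurally unrelated to the value assignment, so the set of values occurring in a subtree is an arbitrary subset of $[n]$, not a contiguous range; there is no ``expected range'' for a node to check, and low-bandwidth aggregates (subtree sizes, sums, or any constant number of moments) do not pin down the multiset $\{p(v)\}_{v}$ as $[n]$. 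Worse, a \emph{structure-oblivious} subroutine of the kind you describe cannot exist: if one could certify with $\cO(\log n)$-bit labels that arbitrary values held by the nodes of a path are pairwise distinct, the certificates at a single cut edge would yield an $\cO(\log n)$-bit nondeterministic two-party protocol for set disjointness, contradicting its $\Omega(n)$ lower bound. This is exactly why the paper's \permutation{} protocol (Proposition~\ref{prop:permu}) is interactive (\dMAM{}), not a PLS, and why it cannot be invoked here.

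The paper escapes this impossibility by exploiting the graph structure that is available \emph{after} the inversion checks pass. Besides $\ell_1(v),\ell_2(v)$, the prover sends (i) two \stpath{} certificates: a path $P_1$ from the node labeled $0$ to the node labeled $n-1$ under $\ell_1$, and similarly $P_2$ for $\ell_2$; and (ii) pointers $p_v,q_v$, where $p_v$ is the smallest value above $\ell_1(v)$ that is \emph{not} the $\ell_1$-label of a neighbor of $v$. Each node then checks that whenever a neighbor $\omega$ has $\ell_1(\omega)<p_\omega<\ell_1(v)$, the value $p_\omega$ is realized as $\ell_1(u)$ for some $u\in N(v)$ (and symmetrically for $\ell_2,q$); honest provers can satisfy this by Remark~\ref{rem:pyq}. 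Soundness of surjectivity then follows from an extremal argument: if some $j\in[n]$ is missed by $\ell_1$, the path $P_1$ forces an adjacent pair $u,v$ with $\ell_1(u)<j<\ell_1(v)$; taking such a pair with $\ell_1(u)$ maximum, the pointer condition at $v$ either demands that $j$ be someone's label (contradiction) or produces a neighbor $u'$ of $v$ with $\ell_1(u)<\ell_1(u')<j$, contradicting maximality. Surjectivity onto $[n]$ plus the certified count $n$ gives bijectivity, and the rest of your argument (Lemma~\ref{lem:permcaract}) then goes through. So the gap in your proposal is not a technicality: without the path-plus-pointer mechanism (or some equivalent use of connectivity and the semi-proper structure), the bijectivity certification you rely on is provably unattainable within the claimed $\cO(\log n)$ bound.
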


\begin{proof}
 The next is a one round protocol for \permutationgraph: The certificate provided by the prover to node $v$ is interpreted as follows: 
	
		\begin{enumerate}
			\item The certification of the number of nodes $n$, according to a protocol for $\sizeofG$.
			\item  Values $\ell_1(v), \ell_2(v) \in [n]$, where $\ell_1$ and $\ell_2$ are injective functions from $V$ to $[n]$. The pair $(\ell_1(v), \ell_2(v))$ is interpreted as a value of a permutation $\pi$ such that $\pi(\ell_1(v)) = \ell_2(v)$. 
			\item Value $p_v$ corresponding to the minimum value grater than $\ell_1(v)$ that not an image under $\ell_1$ of a neighbor of $v$. Formally, 
			 \(p_v = \min\{ k \in \{\ell_1(v)+1, \dots, n-1\} :  \forall u \in N(v), k \neq \ell_1(u) \}\).
			\item  Value $q_v$ corresponding to the minimum value  greater that $\ell_2(v)$ that is not an image under $\ell_2$ of a neighbor of $v$. Formally,
			 \(q_v = \min\{ k \in \{\ell_2(v)+1, \dots, n-1\} :  \forall u \in N(v), k \neq \ell_2(u) \}\).
			
			\item The certification of a path $P_1$ between the nodes $u$ and $w$ such that $\ell_1(u) = 0$ and $\ell_1(w) =n-1$, according to the protocol for $\stpath$.
			\item The certification of a path $P_2$ between the nodes $u$ and $w$ such that $\ell_2(u) = 0$ and $\ell_2(w) =n-1$, according to the protocol for $\stpath$.
					\end{enumerate}
		
Then, in the verification round, each node shares with its neighbors their certificates. Using that information each node $v$ can compute $d^+(v)$ and $d^-(v)$, and check the following conditions:

\begin{enumerate}[label=\lipics{\alph*.}]
	\item The correctness of the number of nodes $n$ according to the protocol of $\sizeofG$.
	\item The correctness of the paths $P_\pi$ and $P_\ell$ according to the protocol of $\stpath$.
	\item The values $\ell_1(v),\ell_2(v)$ belong to the set $[n]$.
	\item The inversion with its neighbors, i.e., $\forall \omega\in N(v), (\ell(v)-\ell(\omega))(\pi(v)-\pi(\omega))< 0$.
	\item The equality $\ell_1(v) + d^+(v) = \ell_2(v) + d^-(v)$ holds.
	\item For each $\omega\in N(v)$ such that $\ell_1(\omega)<p_\omega<\ell_1(v)$,  $p_\omega = \ell_1(u)$ for some $u\in N(v)$.
	\item For each $\omega\in N(v)$ such that $\ell_2(\omega)<q_\omega<\ell_2(v)$,  $q_\omega = \ell_2(u)$ for some $u\in N(v)$.

\end{enumerate}

We now analyze the soundness and completeness of our protocol.  \\

\noindent \textbf{Completeness:} Suppose that $G$ is a permutation graph. Then a honest prover will choose $\ell_1$ and $\ell_2$ such that $\pi: [n] \rightarrow [n]$ defined by  $\pi(\ell_1(v)) = \ell_2(v)$ is a proper permutation model of $G$. If the prover sends the real value of $n$, the nodes will verify condition \lipics{a} according to the completeness of the protocol for $\sizeofG$. Similarly, if the paths $P_1$ and $P_2$ are valid, condition \lipics{b} is verified according to the completeness of the protocol for $\stpath$. Once that condition \lipics{a} is verified, then \lipics{c}, \lipics{d} and \lipics{e} can be verified looking to the certificates in the neighborhood. Finally, the correctness of $p_v$ and $q_v$ are verified by conditions \lipics{f} and \lipics{g}, which are satisfied by \cref{rem:pyq}.\\

    
    \noindent \textbf{Soundness:} Suppose $G$ is not a permutation graph. If a dishonest prover provides a false value of $n$, or false paths $P_1$ or $P_2$, then at least one node will reject it by correctness of the protocols for $\sizeofG$ and $\stpath$, respectively. Then, we can assume that the prover has not cheated on those values. 
    
    Suppose that the prover gives a function $\ell_1$ such that $\{\ell_1(v)\}_{v\in V}\neq [n]$.  If some $\ell_1(v)$ is not in the set $[n]$ then $v$ will reject in the verification of condition \textbf{b}. Then, we assume the existence of $j\in[n]$ such that $j\neq \ell_1(v)$ for all $v\in V$. As Merlin cannot send an invalid path $P_1$, necessarily $0<j<n-1$. Also, by correctness of the path, there exists nodes $u_1,u_2$ in the path such that $\{u,v\}\in E$ and $\ell_1(u)<j<\ell_1(v)$. From all possible choices of $u$ and $v$, let us choose the pair such that $\ell_1(u)$ is maximum. Now we prove that $v$ fails to check condition \textbf{f} and rejects. Indeed, as no node has $j$ as image through $\ell_1$, then $p_{u}\leq j$.  If $p_{u} = j$, then $v$ fails to check condition \textbf{f} and rejects. Suppose that $p_u < j$ and $v$ verifies condition \textbf{f}. Then there must exist $\omega \in N(v)$ such that $p_u = \ell_1(\omega) < j$, contradicting the choice of $u$. We deduce that if condition \textbf{f} is verified by every node, necessarily $\ell_1$ is an injective function from $V$ into $[n]$.
    
   By an analogous argument, we deduce that if condition \textbf{g} is verified by every node, then necessarily $\ell_2$ is an injective function from $V$ into $[n]$.
   
 Suppose then that the dishonest prover provides the right value of $n$, as well as injective functions $\ell_1$ and $\ell_2$.  If condition \textbf{d} is verified, then $\pi$ is a semi-proper permutation model of $G$. Then, since $G$ is not a permutation graph, at least has to fail to verify condition \textbf{e} by \cref{lem:permcaract}. 
 
We now analyze the communication complexity of the protocol:  the certification for $\sizeofG$ and $\stpath$ is $\cO(\log n)$, given by  \cref{prop:sizeofG} and \cref{prop:stpath}. On the other hand, for each $v\in V$, the values $\ell_1(v)$, $\ell_2(v)$, $q_v$, $p_v$ are $\cO(\log n)$ as they are numbers in $[n]$. Therefore, overall the total communication is $\cO(\log n)$. 
\end{proof}


\section{Trapezoid Graphs} \label{sec:trapezoid}

The protocol for Trapezoid is a sort of generalization of the protocol for permutation graphs. Indeed, for this class we can give an analogous characterizations, that later are used to build a compact one-round proof labeling scheme for $\trapezoid$.

Remember that in a model of a trapezoid graph, there are two parallel lines $\mathcal{L}_t$ and $\mathcal{L}_b$. We denote this lines the \emph{top and bottom lines}, respectively. End each trapezoid has sides contained in each line, and then  defined by four vertices, two in the top line, and two in the bottom line. Formally, each trapezoid $T$ is defined by the set $T = \{t_1,t_2, b_1, b_2\}$, where $t_1 < t_2$ and $b_1 < b_2$,  with $t_1, t_2 \in \mathcal{L}_t$ and $b_1, b_2 \in \mathcal{L}_b$ (see \Cref{fig:trapezoidvertex}).  

	\begin{figure}[h!]
		\centering 
		\tikzset{every picture/.style={line width=0.75pt}} 

\begin{tikzpicture}[x=0.75pt,y=0.75pt,yscale=-1,xscale=1]

\draw  [color={rgb, 255:red, 208; green, 2; blue, 27 }  ,draw opacity=1 ][fill={rgb, 255:red, 208; green, 2; blue, 27 }  ,fill opacity=0.2 ] (350,45) -- (310,125) -- (250,125) -- (230,45) -- cycle ;
\draw [line width=0.75]    (220,45) -- (360,45) ;
\draw [line width=0.75]    (220,125) -- (360,125) ;
\draw [line width=0.75]    (310,120) -- (310,130) ;
\draw [line width=0.75]    (250,120) -- (250,130) ;
\draw [line width=0.75]    (350,40) -- (350,50) ;
\draw [line width=0.75]    (230,40) -- (230,50) ;
\draw [line width=0.75]  [dash pattern={on 0.84pt off 2.51pt}]  (360,125) -- (375,125) ;
\draw [line width=0.75]  [dash pattern={on 0.84pt off 2.51pt}]  (360,45) -- (375,45) ;
\draw [line width=0.75]  [dash pattern={on 0.84pt off 2.51pt}]  (205,125) -- (220,125) ;
\draw [line width=0.75]  [dash pattern={on 0.84pt off 2.51pt}]  (205,45) -- (220,45) ;

\draw (281,75.4) node [anchor=north west][inner sep=0.75pt]    {$T$};
\draw (246,132.4) node [anchor=north west][inner sep=0.75pt]  [font=\footnotesize]  {$b_{1}$};
\draw (306,132.4) node [anchor=north west][inner sep=0.75pt]  [font=\footnotesize]  {$b_{2}$};
\draw (226,25.4) node [anchor=north west][inner sep=0.75pt]  [font=\footnotesize]  {$t_{1}$};
\draw (347,25.4) node [anchor=north west][inner sep=0.75pt]  [font=\footnotesize]  {$t_{2}$};

\end{tikzpicture}
		\caption{Each trapezoid $T$ is defined by the set $T = \{ b_1, b_2, t_1, t_2\}$.}
		\label{fig:trapezoidvertex}
	\end{figure}
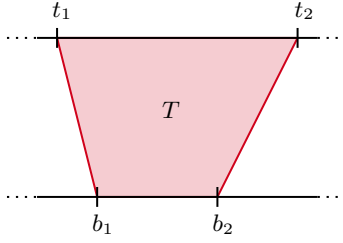

The definition of a trapezoid graph can be restated as follows (see \cite{Brandstdt1999}): A trapezoid graph $G = (V,E)$ is the intersection graph of a set of trapezoids $\{T_v\}_{v\in V}$ satisfying the following conditions. The vertices of each trapezoid have values in $[2n]$,  two corresponding to  the upper line and the other to the bottom line. The vertices defining the set $\{T_v\}_{v\in V}$, are all different, i.e., no pair of trapezoids share vertices. Therefore, in both the top and the bottom lines, each element in $[2n]$ correspond to a vertex of some trapezoid. The trapezoid model in the example of ref{fig:Extrapezoid} satisfies these conditions. 

 For $v \in V$, we call $\{t_1(v), t_2(v), b_1(v), b_2(v)\}$ the vertices of $T_v$.  Moreover, we say that $\{t_1(v), t_2(v), b_1(v), b_2(v)\}$ are the \emph{vertices} of node $v$. In the following, a trapezoid model satisfying the abode conditions is called a \emph{proper trapezoid model} for $G$. Given a graph $G=(V,E)$ (that is not necessarily a trapezoid graph), a \emph{semi-proper trapezoid model} for  $G$ is a set of trapezoids $\{T_v\}_{v\in V}$ satisfying previous conditions, such that, for every $\{u,v\} \in E$, the trapezoids $T_v$ and $T_u$ have nonempty intersection. The difference between a proper and a semi-proper model is that in the first we also ask every pair of non-adjacent edges have non-intersecting trapezoids.  

	Given a trapezoid graph $G = (V,E)$ and a proper trapezoid model $\{T_v\}_{v\in V}$, we define the following sets for each $v\in V$:
	\begin{align*}
		F_t(v) & = \{i\in [2n]\mid i<t_1(v)\text{ and }i \in \{t_1(w), t_2(w)\} \text{ for some }w\notin N(v)\} \\
		F_b(v) & = \{i\in [2n]\mid i<b_1(v)\text{ and }i \in \{b_1(w), b_2(w)\} \text{ for some }w\notin N(v)\}
	\end{align*}

	We also call $f_t(v)  = |F_t(v)|$ and $f_b(v) = |F_b(v)|$. The following lemmas characterize  trapezoid graphs.

\begin{lemma}\label{lem:trapcaract}
Let $G=(V,E)$ an connected trapezoid a graph with $n$ nodes. Then each proper trapezoid model  $\{T_v\}_{v\in V}$  of $G$ satisfies for every $v\in V$: $$b_1(v)  -  f_b(v) = t_1(v)  - f_t(v)$$
\end{lemma}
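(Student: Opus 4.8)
The plan is to read both sides of the identity as two counts of the \emph{same} family of vertices---those belonging to nodes non-adjacent to $v$ and lying to the left of the left corner of $T_v$---and to show that this family is distributed in the same way on the top and on the bottom line. First I would set up the bookkeeping provided by the proper model. Since no two trapezoids share a vertex and the top vertices of $\{T_w\}_{w\in V}$ realize every value of $[2n]$ exactly once, the integers $0,1,\dots,t_1(v)-1$ are precisely the top vertices lying below $t_1(v)$, and each of them is the top vertex of a unique node $w\neq v$ (none of them can be a vertex of $v$, whose smallest top vertex is $t_1(v)$). Splitting these $t_1(v)$ vertices according to whether their owner is adjacent to $v$, one sees that $t_1(v)-f_t(v)$ equals the number of top vertices below $t_1(v)$ belonging to \emph{non-neighbors} of $v$. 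The identical argument on the bottom line shows that $b_1(v)-f_b(v)$ is the number of bottom vertices below $b_1(v)$ belonging to non-neighbors. Hence the lemma reduces to the statement that these two counts agree.

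The core of the proof is a dichotomy for disjoint trapezoids, which I would establish first and which I expect to be the only real difficulty; everything else is bookkeeping. I claim that $T_w\cap T_v=\emptyset$ holds if and only if one of the two trapezoids lies entirely to the left of the other \emph{on both lines}, i.e.\ either $t_2(w)<t_1(v)$ and $b_2(w)<b_1(v)$, or $t_2(v)<t_1(w)$ and $b_2(v)<b_1(w)$. The delicate direction is that disjointness cannot occur with the two trapezoids on opposite sides on the two lines: if, say, $t_2(w)<t_1(v)$ (so $w$ is to the left on top) while $b_1(v)<b_2(w)$ (so $w$ reaches to the right of the left corner of $v$ on the bottom), then the right side of $T_w$, joining $t_2(w)$ on $\mathcal{L}_t$ to $b_2(w)$ on $\mathcal{L}_b$, starts to the left of and ends to the right of the left side of $T_v$, which joins $t_1(v)$ to $b_1(v)$; since both segments cross the whole strip between the two lines, they must intersect, forcing $T_w\cap T_v\neq\emptyset$, a contradiction. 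Thus disjointness forces the same side on both lines.

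Granting the dichotomy, each non-neighbor $w$ of $v$ falls into exactly one class. If $T_w$ is entirely to the left of $T_v$, then $t_2(w)<t_1(v)$ and $b_2(w)<b_1(v)$, so both top vertices of $w$ lie below $t_1(v)$ and both of its bottom vertices lie below $b_1(v)$; such a $w$ contributes $2$ to each of the two counts. If $T_w$ is entirely to the right, then $t_1(w)>t_2(v)>t_1(v)$ and $b_1(w)>b_2(v)>b_1(v)$, so $w$ contributes $0$ to each count. In either case a non-neighbor contributes equally to the top and the bottom tallies, so both tallies equal twice the number of non-neighbors of $v$ whose trapezoid lies entirely to the left of $T_v$. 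This gives $t_1(v)-f_t(v)=b_1(v)-f_b(v)$, as required. Note that the argument is purely local to the relative position of each trapezoid and does not use connectivity of $G$; connectivity is only needed for the companion characterizations.
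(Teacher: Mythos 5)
Your proof breaks at the bookkeeping step, and the break is fatal. By the paper's definition, $F_t(v)$ is the set of positions below $t_1(v)$ occupied by top vertices of \emph{non-neighbors} of $v$, so $f_t(v)$ already \emph{is} the non-neighbor tally on the top line; consequently $t_1(v)-f_t(v)$ counts the top vertices of \emph{neighbors} of $v$ lying below $t_1(v)$, not of non-neighbors as your first paragraph asserts. Your disjointness dichotomy is correct (it is the same geometric fact the paper uses, proved more carefully), and your final paragraph correctly deduces from it that every non-neighbor contributes $2$ or $0$ simultaneously on both lines, i.e.\ that $f_t(v)=f_b(v)=2\ell(v)$, where $\ell(v)$ is the number of non-neighbors whose trapezoid lies entirely to the left of $T_v$. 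But that is not the lemma: once $f_t(v)=f_b(v)$ is known, the identity $b_1(v)-f_b(v)=t_1(v)-f_t(v)$ is equivalent to $t_1(v)=b_1(v)$, so your closing sentence is a non sequitur. The quantities the lemma actually equates are the \emph{neighbor} counts, and your per-trapezoid matching argument genuinely fails for neighbors: a neighbor $u$ whose trapezoid crosses $T_v$, say with $t_2(u)<t_1(v)$ and $b_1(u)>b_2(v)$, contributes $2$ to the top count and $0$ to the bottom count.

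Moreover, the gap cannot be patched, because your own correct intermediate result shows the statement (quantified over \emph{every} proper model of a connected trapezoid graph) to be false. Take $G=K_2$ with the crossing proper model $t_1(u)=0$, $t_2(u)=1$, $b_1(u)=2$, $b_2(u)=3$ and $t_1(v)=2$, $t_2(v)=3$, $b_1(v)=0$, $b_2(v)=1$: the two trapezoids intersect, so this is a proper model of a connected trapezoid graph, yet $t_1(v)-f_t(v)=2$ while $b_1(v)-f_b(v)=0$. Even the paper's own Figure~\ref{fig:Extrapezoid} violates the identity: normalizing its positions to $\{0,\dots,13\}$, the blue trapezoid has $t_1=4$, $b_1=8$, and its unique non-neighbor (the red trapezoid) gives $f_t=f_b=2$, so the two sides are $2$ and $6$. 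The paper's published proof founders on exactly the same rock: after stating the same dichotomy for non-neighbors, it asserts that $|[t_1(v)]\setminus F_t(v)|=|[b_1(v)]\setminus F_b(v)|$ ``as every position in $[2n]$ corresponds to a vertex of some trapezoid's node,'' which is unjustified and false in the examples above. So the defect to report is not in your geometry but in the statement itself: since $f_t\equiv f_b$ in every proper model, the lemma would force $t_1(v)=b_1(v)$ at every node, a condition under which the model represents an interval graph, and which therefore no proper model of a non-interval trapezoid graph can satisfy.
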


\begin{proof}
Let $\{T_v\}_{v\in V}$ be a proper trapezoid model of $G$.  Then, given a node $v\in V$, all the coordinates in $F_t(v)$ are vertices of some $w\neq N(v)$. Such trapezoids $T_w$ have their two upper vertices in the set $[t_1(v)]$ and their two lower vertices in $[b_1(v)]$, as otherwise $T_w$ and $T_v$ would intersect. Then, the cardinality of the set $[t_1(v)] \setminus F_t(v)$ is even, and the same holds for $[b_1(v)] \setminus F_b(v)$. Moreover, the cardinality of the set $[t_1(v)]  \setminus F_t(v)$ equals the cardinality of $[b_1(v)]  \setminus F_b(v)$, as every position in $[2n]$ corresponds to a vertex of some trapezoid's node. We deduce that \[t_1(v) - f_t(v) = |\{1, \dots, t_1(v)\}  \setminus F_t(v)| = |\{1, \dots, b_1(v)\}  \setminus F_b(v)| = b_1(v) - f_b(v).\]
\end{proof}

\begin{lemma}\label{lem:nontrapcarac}
Let $G = (V,E)$ be a $n$-node graph that is not a trapezoid graph. Then, for every semi-proper trapezoid model $\{T_v\}_{v\in V}$ of $G$, at least one of the following conditions is true: 

\begin{enumerate}
    \item $\exists v\in V$ such that some value in $\{b_1(v), \dots, b_2(v)\}$ or $\{t_1(v),\dots, t_2(v)\}$ is a vertex of $\omega\notin N(v)$.
    \item  $\exists v\in V$ such that $b_1(v)  -  f_b(v) \neq t_1(v)  - f_t(v)$.
\end{enumerate}

\end{lemma}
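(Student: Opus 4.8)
The plan is to prove the contrapositive. I assume that for the given semi-proper model \emph{both} conditions fail — that is, no node $v$ has a non-neighbour's vertex inside $\{t_1(v),\dots,t_2(v)\}$ or $\{b_1(v),\dots,b_2(v)\}$, and $b_1(v)-f_b(v)=t_1(v)-f_t(v)$ holds for every $v$ — and I show that this forces the model to be a \emph{proper} trapezoid model, so that $G$ is a trapezoid graph, contradicting the hypothesis.

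First I would exploit the failure of condition~1. If two non-adjacent nodes $u,w$ had overlapping top intervals, then one endpoint of one of them would lie strictly inside the top span of the other; since $u\notin N(w)$ and $w\notin N(u)$, this is exactly an instance of condition~1, a contradiction. Hence the top intervals of any two non-adjacent nodes are disjoint, and likewise their bottom intervals. A useful consequence is that every non-neighbour of $v$ has \emph{both} of its top endpoints on the same side of $t_1(v)$, so it contributes $0$ or $2$ to $F_t(v)$; thus $f_t(v)=2L_t(v)$ and $f_b(v)=2L_b(v)$, where $L_t(v)$ (resp.\ $L_b(v)$) counts the non-neighbours lying entirely to the left of $v$ on the top (resp.\ bottom) line. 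Writing $t_1(v)-f_t(v)$ as the number $P_t(v)$ of neighbour-endpoints to the left of $t_1(v)$ on top, and similarly $P_b(v)$, the failure of condition~2 becomes $P_t(v)=P_b(v)$ for every $v$.

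Next I would reduce properness to a purely order-theoretic statement. Since non-adjacent nodes are already disjoint on each line, for a non-adjacent pair $\{u,w\}$ the trapezoids $T_u,T_w$ are disjoint \emph{iff} the top line and the bottom line order $u,w$ in the same way; they meet only in the \emph{crossing} configuration, where one node is to the left of the other on top but to the right on the bottom. As the semi-proper hypothesis already guarantees that adjacent trapezoids intersect, the model is proper exactly when no non-adjacent pair crosses, so it remains to rule out crossings using the identities $P_t(v)=P_b(v)$.

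The hard part will be this last step. A single crossing between a non-adjacent pair is, by itself, invisible to the local counting identity: the two displaced endpoints of the crossing partner change $b_1(v)$ and $L_b(v)$ by exactly compensating amounts, so one cannot detect the crossing at either of the two nodes involved. The argument must therefore be global. I would order the nodes by their top-left coordinate and track, inductively, the agreement between the top order and the bottom order, focusing on the \emph{leftmost} node at which the two orders would first diverge — in the spirit of the extremal choice $i^\ast=\min U$ used in \cref{lem:permcaract}. The aim is to show that a first divergence forces an imbalance between the neighbour-endpoints counted on the two lines at that node, contradicting $P_t=P_b$. Establishing that the whole family of local identities rigidly synchronises the two linear orders, and hence precludes all crossings, is the technical heart of the proof and is precisely the counting bookkeeping that makes \cref{lem:trapcaract} reversible.
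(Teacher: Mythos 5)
Your setup is correct and parallels the paper's own proof: the failure of condition~1 forces non-adjacent trapezoids to be disjoint on each line, the quantities $t_1(v)-f_t(v)$ and $b_1(v)-f_b(v)$ then count neighbour endpoints to the left of $t_1(v)$ and $b_1(v)$ respectively, and the whole lemma reduces to deriving a violated equality from the existence of a crossing non-adjacent pair. But your proposal stops exactly there. Where the paper commits to a concrete (if terse) extremal argument --- take $u$ of minimum $b_1(u)$ among nodes with a crossing partner and argue that $t_1(u)-f_t(u)>0=b_1(u)-f_b(u)$ --- you only announce a plan (order the nodes, locate the leftmost divergence, show it forces an imbalance) and defer its execution as ``the technical heart''. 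That heart \emph{is} the lemma; without it there is no proof.

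Worse, the intermediate statement your plan aims at --- that failure of both conditions forces the given model to be proper --- is false, so the sketch cannot be completed in the form you describe. Take $G$ to be the path on three vertices, $V=\{x,u,w\}$, $E=\bigl\{\{x,u\},\{x,w\}\bigr\}$, so $n=3$ and positions lie in $[6]$; put top vertices $\{t_1(x),t_2(x)\}=\{0,5\}$, $\{t_1(w),t_2(w)\}=\{1,2\}$, $\{t_1(u),t_2(u)\}=\{3,4\}$, and bottom vertices $\{b_1(x),b_2(x)\}=\{0,5\}$, $\{b_1(u),b_2(u)\}=\{1,2\}$, $\{b_1(w),b_2(w)\}=\{3,4\}$. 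This model is semi-proper ($T_u$ and $T_w$ lie inside $T_x$), condition~1 fails, and every equality holds: at $u$, $t_1(u)-f_t(u)=3-2=1$ and $b_1(u)-f_b(u)=1-0=1$; at $w$ symmetrically $1-0=1$ and $3-2=1$; at $x$, $0=0$. Yet $u$ and $w$ cross, so the model is \emph{not} proper. This confirms your own (correct) observation that a crossing is invisible at the two nodes involved, but it shows the invisibility can be global, in a connected graph: no synchronisation-of-the-two-orders argument can preclude all crossings. (The same example shows that the concluding claim of the paper's proof, $b_1(u)-f_b(u)=0$ at its extremal node, does not follow from its conditions (a)--(c) either: the neighbour $x$ contributes the endpoint $0<b_1(u)$; that step of the paper is itself incomplete, and the lemma escapes contradiction only because this $G$ is a trapezoid graph, to which it does not apply.) Consequently, any correct proof must use the hypothesis that $G$ is not a trapezoid graph in an essential, global way: the contrapositive only asserts that $G$ admits \emph{some} proper model (for instance after uncrossing pairs like $u,w$ above), not that the model at hand is proper, and your proposal is aimed at the stronger, false statement.
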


\begin{proof}

Let $G$ be a graph that is not a trapezoid graph and $\{T_v\}_{v\in V}$ a semi-proper trapezoid model. As $G$ is not a permutation graph, by definition necessarily there exist a pair $\{v,\omega\}\not\in E$ such that $T_v \cap T_\omega \neq \emptyset$. We distinguish two possible cases (see \Cref{fig:nontrapcarac}): 
\begin{enumerate}
\item  $[b_1(v), b_2(v)]_\N \cap [b_1(\omega), b_2(\omega)]_\N\neq\emptyset$ or $[t_1(v) , t_2(v)]_\N\cap [t_1(\omega), t_2(\omega)]_\N\neq\emptyset$.
\item $[b_1(v), b_2(v)]_\N \cap [b_1(\omega), b_2(\omega)]_\N=\emptyset$ and $[t_1(v) , t_2(v)]_\N\cap [t_1(\omega), t_2(\omega)]_\N=\emptyset$.
\end{enumerate}

	\begin{figure}[h!]
		\centering 
		\tikzset{every picture/.style={line width=0.75pt}} 

\begin{tikzpicture}[x=0.75pt,y=0.75pt,yscale=-1,xscale=1]

\draw  [color={rgb, 255:red, 74; green, 144; blue, 226 }  ,draw opacity=1 ][fill={rgb, 255:red, 74; green, 144; blue, 226 }  ,fill opacity=0.2 ] (375,60) -- (475,140) -- (395,140) -- (335,60) -- cycle ;
\draw  [color={rgb, 255:red, 208; green, 2; blue, 27 }  ,draw opacity=1 ][fill={rgb, 255:red, 208; green, 2; blue, 27 }  ,fill opacity=0.2 ] (455,60) -- (375,140) -- (335,140) -- (395,60) -- cycle ;
\draw [line width=0.75]    (315,60) -- (475,60) ;
\draw [line width=0.75]    (315,140) -- (475,140) ;
\draw [line width=0.75]    (375,135) -- (375,145) ;
\draw [line width=0.75]    (395,135) -- (395,145) ;
\draw [line width=0.75]    (415,135) -- (415,145) ;
\draw [line width=0.75]    (435,135) -- (435,145) ;
\draw [line width=0.75]    (455,135) -- (455,145) ;
\draw [line width=0.75]    (475,135) -- (475,145) ;
\draw [line width=0.75]    (315,135) -- (315,145) ;
\draw [line width=0.75]    (335,135) -- (335,145) ;
\draw [line width=0.75]    (355,135) -- (355,145) ;
\draw [line width=0.75]    (375,55) -- (375,65) ;
\draw [line width=0.75]    (395,55) -- (395,65) ;
\draw [line width=0.75]    (415,55) -- (415,65) ;
\draw [line width=0.75]    (435,55) -- (435,65) ;
\draw [line width=0.75]    (455,55) -- (455,65) ;
\draw [line width=0.75]    (475,55) -- (475,65) ;
\draw [line width=0.75]    (315,55) -- (315,65) ;
\draw [line width=0.75]    (335,55) -- (335,65) ;
\draw [line width=0.75]    (355,55) -- (355,65) ;
\draw  [color={rgb, 255:red, 74; green, 144; blue, 226 }  ,draw opacity=1 ][fill={rgb, 255:red, 74; green, 144; blue, 226 }  ,fill opacity=0.2 ] (255,60) -- (235,140) -- (175,140) -- (195,60) -- cycle ;
\draw  [color={rgb, 255:red, 208; green, 2; blue, 27 }  ,draw opacity=1 ][fill={rgb, 255:red, 208; green, 2; blue, 27 }  ,fill opacity=0.2 ] (215,60) -- (215,140) -- (155,140) -- (135,60) -- cycle ;
\draw [line width=0.75]    (115,60) -- (275,60) ;
\draw [line width=0.75]    (115,140) -- (275,140) ;
\draw [line width=0.75]    (175,135) -- (175,145) ;
\draw [line width=0.75]    (195,135) -- (195,145) ;
\draw [line width=0.75]    (215,135) -- (215,145) ;
\draw [line width=0.75]    (235,135) -- (235,145) ;
\draw [line width=0.75]    (255,135) -- (255,145) ;
\draw [line width=0.75]    (275,135) -- (275,145) ;
\draw [line width=0.75]    (115,135) -- (115,145) ;
\draw [line width=0.75]    (135,135) -- (135,145) ;
\draw [line width=0.75]    (155,135) -- (155,145) ;
\draw [line width=0.75]    (175,55) -- (175,65) ;
\draw [line width=0.75]    (195,55) -- (195,65) ;
\draw [line width=0.75]    (215,55) -- (215,65) ;
\draw [line width=0.75]    (235,55) -- (235,65) ;
\draw [line width=0.75]    (255,55) -- (255,65) ;
\draw [line width=0.75]    (275,55) -- (275,65) ;
\draw [line width=0.75]    (115,55) -- (115,65) ;
\draw [line width=0.75]    (135,55) -- (135,65) ;
\draw [line width=0.75]    (155,55) -- (155,65) ;

\end{tikzpicture}
		\caption{A representation of the two possible cases. In the first case, depicted in left, at least one vertex of a trapezoid is contained in the other. In the second case, in the right hand, the trapezoids intersect, but not in the vertices. }
		\label{fig:nontrapcarac}
	\end{figure}
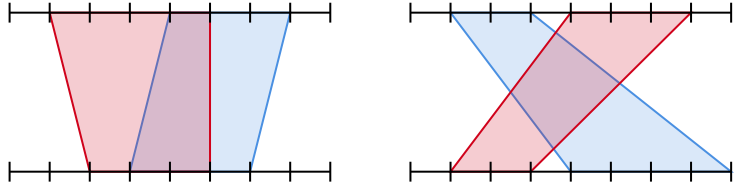

Clearly if the first case holds, then condition \lipics{1} is satisfied. Suppose then that there is no pair $\{v,\omega\}\not\in E$ such that $T_v \cap T_\omega \neq \emptyset$ satisfying the first case. Then necessarily the second case holds. Let $u$ be a node for which  exists $\omega \in V\setminus N(u)$ such that $T_u \cap T_w \neq \emptyset$. For all possible choices of $u$, let us pick the one such that $b_1(u)$ is minimum. Then $u$ satisfies the following conditions:
\begin{itemize}
	\item[\lipics{(a)}] Exists a node $\omega\in V$ such that $\omega \notin N(v)$ and $T_{u}\cap T_\omega\neq\emptyset$
	\item[\lipics{(b)}] All nodes $\omega\in V$ such that $\omega \notin N(v)$ and $T_{u}\cap T_\omega\neq\emptyset$ satisfy that $t_2(\omega) < t_1(u)$ and $b_2(u)<b_1(\omega) $
	\item[\lipics{(c)}] None of the positions in $\{1,\dots, b_1(u)\}$ is occupied by a vertex of a node $\omega$ such that $\{u,\omega\}\notin E$ and $T_u \cap T_\omega  \neq \emptyset$. 
\end{itemize}

Observe that conditions \lipics{(a)} and \lipics{(b)} imply that $t_1(u) - f_t(u) > 0$, while condition \lipics{(c)} implies that $b_1(u) - f_b(u) = 0$. We deduce that condition \lipics{2} holds by $u$.
\end{proof}

We are now ready to define our protocol and main result regarding \trapezoid.

\begin{theorem}\label{theo:trapezoid}
There is a 1-round proof labelling scheme for \trapezoid\ with certificates of size $\cO(\log n)$.
\end{theorem}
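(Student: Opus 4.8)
The plan is to mimic the proof of \Cref{theo:permutaitongraph}, now using \Cref{lem:trapcaract} and \Cref{lem:nontrapcarac} as the structural characterizations in place of \Cref{lem:permcaract}. The prover will hand to each node $v$ a certificate consisting of: the certification of $n$ given by the protocol for $\sizeofG$ (\cref{prop:sizeofG}); the four coordinates $t_1(v),t_2(v),b_1(v),b_2(v)\in[2n]$ of a claimed proper trapezoid model; and pointer values together with $\stpath$ certifications (\cref{prop:stpath}) that play the role of the values $p_v,q_v$ and the paths $P_1,P_2$ of the permutation protocol, now one pair of pointers and one path for the top line and one pair and one path for the bottom line. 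These last ingredients are used exactly as in the permutation case to certify that the $2n$ top coordinates and the $2n$ bottom coordinates each hit every position of $[2n]$.

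In the verification round every node broadcasts its certificate and then checks, locally: \lipics{(a)} the value of $n$; \lipics{(b)} the correctness of the paths; \lipics{(c)} that $t_1(v)<t_2(v)$, $b_1(v)<b_2(v)$ and all coordinates lie in $[2n]$; \lipics{(d)} that every neighbor $w$ satisfies $T_v\cap T_w\neq\emptyset$ (since two trapezoids are disjoint exactly when one lies entirely to the left of the other, i.e.\ $t_2<t_1'$ and $b_2<b_1'$, this is directly checkable from the coordinates of $v$ and $w$); \lipics{(e)} the gap conditions certifying surjectivity of the top and bottom coordinates onto $[2n]$; \lipics{(f)} that no position strictly between $t_1(v)$ and $t_2(v)$, nor between $b_1(v)$ and $b_2(v)$, is a coordinate of a non-neighbor; and \lipics{(g)} the equality $b_1(v)-f_b(v)=t_1(v)-f_t(v)$. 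The crucial point is that once \lipics{(e)} holds, the coordinate assignment is a bijection between the nodes' coordinates and $[2n]$ on each line, so $v$ can compute $f_t(v)$ as the number of positions below $t_1(v)$ not occupied by a top coordinate of a neighbor --- each such position necessarily belongs to a non-neighbor --- and symmetrically $f_b(v)$; the same properness makes check \lipics{(f)} local, since the interior positions of $v$'s span not covered by a neighbor coordinate must be non-neighbor coordinates.

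For completeness, if $G$ is a trapezoid graph the prover supplies a proper model. Then \lipics{(a)}--\lipics{(c)},\lipics{(e)} hold by honesty and the definition of a proper model; \lipics{(d)} holds because adjacent trapezoids intersect; \lipics{(f)} holds because in a proper model non-adjacent trapezoids are disjoint, and a non-neighbor coordinate strictly inside $v$'s top (resp.\ bottom) span would force the corresponding intervals --- and hence the top (resp.\ bottom) edges, which belong to the trapezoids --- to overlap, forcing an intersection; and \lipics{(g)} holds by \Cref{lem:trapcaract}. For soundness, suppose $G$ is not a trapezoid graph. If the prover lies about $n$ or the paths, checks \lipics{(a)}--\lipics{(b)} reject by \cref{prop:sizeofG} and \cref{prop:stpath}; the gap checks \lipics{(e)} then force the claimed coordinates to be a genuine bijection onto $[2n]$ (exactly as conditions \lipics{f}, \lipics{g} force injectivity in the permutation proof), and \lipics{(d)} forces the model to be semi-proper. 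Once a valid semi-proper model is in place, \Cref{lem:nontrapcarac} guarantees that at least one of its two conditions holds: its condition~1 is precisely the negation of check \lipics{(f)}, and its condition~2 is precisely the negation of check \lipics{(g)}, so in either case some node rejects. The bandwidth is $\cO(\log n)$, since every transmitted quantity is a number in $[2n]$ together with the $\sizeofG$ and $\stpath$ certifications, which are $\cO(\log n)$ by \cref{prop:sizeofG} and \cref{prop:stpath}.

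The main obstacle is check \lipics{(e)}: unlike the permutation case, where each node owns a single coordinate per line, here each node owns two coordinates per line, so the gap-detection pointers and the $\stpath$ certifications must be arranged so as to certify that the combined $2n$ coordinates cover $[2n]$ without collision. This properness guarantee is also what the whole argument rests on, because both the local computation of $f_t,f_b$ and the locality of check \lipics{(f)} only become valid after surjectivity has been established; consequently the soundness analysis must be ordered so that \lipics{(e)} is invoked before \lipics{(f)} and \lipics{(g)}.
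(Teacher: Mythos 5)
Your proposal is correct and takes essentially the same approach as the paper's proof: the same certificate (certification of $n$, the four coordinates in $[2n]$, the gap pointers $p_v,q_v$, and the two path certifications), the same local checks, and the same roles for \Cref{lem:trapcaract} in completeness and \Cref{lem:nontrapcarac} in soundness, with your checks \lipics{(e)}, \lipics{(f)}, \lipics{(g)} corresponding to the paper's pointer checks, interior-coverage check, and counting equality. The ``main obstacle'' you flag is resolved in the paper exactly as you anticipate: $p_v$ is the least position above $t_1(v)$ that is not a vertex of a neighbor, each node also checks $t_2(v)<p_v$ and that any neighbor's pointer $p_\omega<t_2(v)$ equals a vertex of another neighbor, so that in the soundness argument a missed position $j\in[2n]$ either lies strictly inside some node's span (violating interior coverage) or lies in no span, where the permutation-style path argument applies unchanged.
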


\begin{proof}

The following is a one-round proof labeling scheme for \trapezoid\:\\

Given an instance $\langle G = (V,E), \id \rangle$, the certificate provided by the prover to node $v \in V$ is interpreted as follows.
		\begin{enumerate}
			\item  The certification of the total number of nodes $n$, according to some protocol for $\sizeofG$.
			\item  Values $b_1(v), b_2(v) , t_1(v), t_2(v)\in [2n]$,  such that $b_1(v) < b_2(v)$ and $t_1(v) < t_2(v)$, representing the vertices of a trapezoid $T_v$.
			\item  Value $p_v$ corresponding to the minimum position in the upper line greater that $t_1(v)$ that is not a vertex of a neighbor of $v$.
			\item Value $q_v$ corresponding minimum position in the lower line grater than $b_1(v)$ that is not a vertex of a neighbor of $v$.
			\item The certification of a path $P_t$  between the node with vertice $0$ and the node with vertice $2n-1$ in the upper line (respect assignment in $2.$) and a path $P_b$ between the node node with vertice $0$ and the node with vertice $2n-1$ in the lower line. Both paths according to a protocol for $\stpath$.

		\end{enumerate}

Then, in the verification round, each node shares with its neighbors their certificates. Using that information each node $v$ can compute $f_t(v)$ and $f_b(v)$, and check the following conditions: 

\begin{enumerate}[label=\lipics{\alph*.}]
	\item The correctness of the value of $n$, according to some protocol for $\sizeofG$.
	\item The correctness of the paths  $P_b$ and $P_t$, according to a protocol for $\stpath$.
	\item The vertices of the trapezoid of $v$ are in $[2n]$.
	\item $T_v\cap T_\omega\neq\emptyset$ for all $\omega\in N(v)$. 
	\item All values in $\{t_1(v)+1,\dots, t_2(v)-1\}$ and $\{b_1(v)+1,\dots, b_2(v)-1\}$ are vertex of some neighbor of $v$.

	\item $t_2(v)<p_v$ and $b_2(v)< q_v$.
	\item If $\omega\in N(v)$ and $p_\omega<t_2(v)$, then $v$ verifies that $p_\omega$ is a vertex of some other neighbor.
	\item If $\omega\in N(v)$ and $q_\omega<b_2(v)$, then $v$ verifies that $q_\omega$ is a vertex of some other neighbor.
	\item $b_1(v) - f_b(v) = t_1(v) - f_t(v)$.

\end{enumerate}

We now analyze the soundness and completeness of our protocol.  \\

\noindent\textbf{Completeness:} Suppose that $G$ is a trapezoid graph. An honest prover just has to send the real number of nodes $n$, a trapezoid model $\{T_v\}_{v\in V}$ of G and valid paths $P_b$ and $P_t$ according the trapezoid model. Then, the nodes will verify \lipics{a}, \lipics{b} by the completeness of the protocols for $\sizeofG$ and $\stpath$. Conditions \lipics{c}, \lipics{d}, \lipics{e} ,\lipics{f}, \lipics{g} and \lipics{h} are verified by the correctness of the model  $\{T_v\}_{v\in V}$. Condition \lipics{i} is also verified, by Lemma \ref{lem:trapcaract}.\\
    
\noindent \textbf{Soundness:} Suppose $G$ is not a trapezoid graph. If a dishonest prover provides a wrong value of $n$, or wrong paths $P_t$ or $P_b$, then at least one node will reject verifying \lipics{a} or \lipics{b}. Then, we assume that the prover cannot cheat on these values.  
    
    Suppose that the prover gives values $\{T_v\}_{v\in V}$ such that is  fulfilled $\bigcup_{v\in V}\{t_1(v), t_2(v)\}\neq [2n]$. If some vertex of a node is not in the set $[2n]$, then that node fails to verify condition \lipics{c} and rejects.  Without loss of generality, we can assume that there exists a $j \in [2n]$ such that $t_1(v), t_2(v) \neq j$, for every $v\in V$. If a node $\omega$ satisfies that $t_1(\omega) < j < t_2(\omega)$, then node $\omega$ fails to verify condition \lipics{e} and rejects. Then $j$ is not contained in any trapezoid. As $P_t$ is correct, $j$ must be different than $1$ and $2n$. Also by the correctness of $P_t$, there exist a pair of adjacent nodes $u,v \in V$ such that $t_2(u) < j < t_1(v)$.  From all possible choices for $u$ and $v$, we pick the one such that $t_2(u)$ is maximum. We claim that $v$ fails to check condition \lipics{g}. Since $j$ is not a vertex of any node, then $p_u \leq j$. If $v$ verifies condition \lipics{g}, then necessarily $p_u < j$. Then, there must exist a node $\omega \in N(v)$ such that $p_u = t_1(\omega)$. But since we are assuming that $j$ is not contained in any trapezoid, we have that $t_2(\omega) < j$, contradicting the choice of $u$. 
  
    Therefore, if conditions  \lipics{a} - \lipics{h} are verified, we can assume that the nodes are given a semi-proper trapezoid model of $G$. Since we are assuming that $G$ is not a trapezoid graph, by \Cref{lem:nontrapcarac} we deduce that condition \lipics{i} cannot be satisfied and some node rejects.     
  
  We now analyze the communication complexity of the protocol:  the certification for $\sizeofG$ and $\stpath$ is $\cO(\log n)$, given by  ref{prop:sizeofG} and ref{prop:stpath}. On the other hand, for each $v\in V$, the values $b_1(v)$, $b_2(v)$, $t_1(v)$, $t_2(v)$, $p_v$, $q_v$ are computable in $\cO(\log n)$ space as they are numbers in $[2n]$. Overall the total communication is $\cO(\log n)$. 
    \end{proof}

\section{Circle Graphs} \label{sec:circle}
We now tackle the problem $\circlegraph$. Unlike the cases of permutation and trapezoid graphs, for this class we are unable to provide a one-round protocol. Instead, we give a compact three round interactive protocol, which is based in the following characterization of cicle graphs  (see \cite{Brandstdt1999}):

A graph $G = (V,E)$ is a circle graph if and only each node $v\in V$ can be associated to an interval $I_v = [m_v, M_v] \subseteq \mathbb{R}$ such that:
\begin{itemize}
\item[\lipics{C1}] For each $u,v \in V$, $\{u,v\} \in E$ then $I_u\cap I_v \neq \emptyset$,  $I_u\not\subseteq I_v$ and $I_v\not\subseteq I_u$.
\item[\lipics{C2}]  For each $u,v \in V$, $I_u\cap I_v \neq \emptyset$,  $I_u\not\subseteq I_v$ and $I_v\not\subseteq I_u$ then $\{u,v\} \in E$.
\item[\lipics{C3}]  For each $v\in V$,  $m_v, M_v \in [2n]$, and $\{m_v\}_{v\in V} \cup \{M_v\}_{v\in V} = [2n]$, i.e., for every pair of different nodes $u,v \in V$, $m_u \neq m_v$ and $M_u \neq M_v$.

\end{itemize}


In words, for every circle graph of size $n$ there is a collection of $n$ intervals with extremes in $[2n]$ whose extremes do not coincide, and where two nodes are adjacent if and only if their corresponding intervals have nonempty intersection, and one interval is not included in the other. We remark that the later characterization is not an intersection model of a circle graphs, as we ask more than simply the intersection of the objects. 
Given a graph $G=(V,E)$, a set of intervals $\{I_v\}_{v\in V}$ satisfying conditions \lipics{C1}, \lipics{C2} and \lipics{C3} is called a \emph{proper model} for $G$. A set of intervals satisfying conditions \lipics{C1} and \lipics{C3} is called a \emph{semi-proper model} for $G$.

Let $G = (V,E)$ be a graph, $\{[m_v, M_v]\}_{v \in V}$ be a semi-proper model for $G$, and let $v \in V$. We denote by $n_m(v)$ the number of nodes $u\in N(v)$ whose lower limit $m_u$ is such that  $m_v \leq m_u \leq M_v$. Similarly, we denote by $n_M(v)$ the number of nodes $u \in N(v)$ such that $m_v \leq M_u \leq M_v$. We also denote by $\pi_m(v)$ (respectively $\pi_M(v)$) the total number of nodes $u\in V$ such that $m_u < m_v$ (respectively $M_u < M_v$). 

\begin{lemma}\label{lem:circchar}
Let $G=(V,E)$ be a graph and let $\{[m_v, M_v]\}_{v \in V}$ be a semi-proper model for $G$. Then    $\{[m_v, M_v]\}_{v \in V}$  is proper if and only if, for every $v\in V$, $$2(\pi_M(v) + \pi_m(v)) = M_v + m_v + n_M(v) - n_m(v)$$
\end{lemma}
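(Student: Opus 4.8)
The plan is to translate the claimed identity into a purely combinatorial statement about how the intervals of the other nodes sit relative to $[m_v,M_v]$, and then prove each direction by endpoint counting, reusing the extremal idea of \Cref{lem:permcaract} for the harder implication.

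First I would fix a node $v$ and classify every other node $u$ by the position of the pair $(m_u,M_u)$ with respect to $[m_v,M_v]$. By condition \lipics{C3} the $2n$ endpoints are distinct and exhaust $[2n]$, so there are exactly six possibilities: $u$ lies entirely to the left of $v$, entirely to the right, is nested inside $v$, contains $v$, crosses $v$ from the left ($m_u<m_v<M_u<M_v$), or crosses $v$ from the right ($m_v<m_u<M_v<M_u$). Writing $a,b,c,d,p,q$ for the number of nodes of each type, the key observation is that, since endpoints fill $[2n]$, for any value $x\in[2n]$ the number of endpoints strictly below $x$ equals $x$ itself. Counting lower and upper endpoints below $m_v$ and below $M_v$ separately, I would obtain
\[
m_v = 2a+d+p,\qquad M_v = 2a+2c+d+2p+q,\qquad \pi_m(v)=a+d+p,\qquad \pi_M(v)=a+c+p.
\]
Then I would use that the model is semi-proper: by \lipics{C1} every neighbor of $v$ crosses $v$, so each neighbor is of crossing-left or crossing-right type. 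Hence $n_M(v)$ counts exactly the neighbors crossing $v$ from the left and $n_m(v)$ those crossing from the right; denote these $e_L\le p$ and $e_R\le q$.

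Next I would substitute these expressions into the two sides of the identity. A direct computation collapses the claimed equation $2(\pi_M(v)+\pi_m(v))=M_v+m_v+n_M(v)-n_m(v)$ to the much simpler equivalent
\[
p-e_L \;=\; q-e_R .
\]
In words, the identity at $v$ holds if and only if the number of nodes crossing $v$ from the left that are \emph{not} adjacent to $v$ equals the number crossing $v$ from the right that are not adjacent to $v$. This makes the forward direction transparent: if the model is proper then \lipics{C2} forces every crossing pair to be an edge, so $e_L=p$ and $e_R=q$, and both sides vanish.

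The main work, and the step I expect to be the real obstacle, is the converse: showing that if the reduced identity $p-e_L=q-e_R$ holds at \emph{every} node, then the model is already proper. The equation only forces the two ``missing-edge'' counts to be equal, not to be zero, so no pointwise argument can conclude; this is exactly the situation handled in \Cref{lem:permcaract}, and I would mimic that extremal trick. Suppose for contradiction the model is semi-proper but not proper, so some crossing pair is not an edge. Let $U$ be the set of nodes belonging to some such bad pair and pick $v^{*}\in U$ minimizing $m_{v^{*}}$. Any non-adjacent node crossing $v^{*}$ from the left would have a strictly smaller lower endpoint and would itself lie in $U$, contradicting minimality; hence $p-e_L=0$ at $v^{*}$. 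On the other hand $v^{*}\in U$ guarantees at least one non-adjacent crossing partner, which must therefore cross from the right, so $q-e_R\ge 1$ at $v^{*}$. This contradicts the identity at $v^{*}$, completing the converse and the lemma.
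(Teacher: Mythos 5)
Your proposal follows essentially the same route as the paper's proof: the paper also fixes $v$, counts the nodes whose upper (respectively lower) endpoint lands in $[m_v,M_v]$, splits those counts into crossing neighbours plus the nested nodes (your type $c$, the paper's set $S_v$), obtains the identity by subtracting the two decompositions, and proves the converse with exactly your extremal argument --- a non-adjacent crossing pair chosen to minimize the lower endpoint. Your six-type bookkeeping, and the explicit reduction of the identity to ``number of non-adjacent left-crossers equals number of non-adjacent right-crossers'', is a cleaner way of organizing the same computation.

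However, one of your endpoint counts has an off-by-one slip, and it sits exactly where the lemma itself is shaky. Since the $2n$ endpoints are distinct and exhaust $[2n]$, the number of endpoints strictly below $M_v$ is indeed $M_v$; but among those endpoints is $v$'s \emph{own} lower endpoint $m_v$, so the correct count is $M_v = 2a+2c+d+2p+q+1$, not $2a+2c+d+2p+q$. Redoing your substitution with the corrected value, the lemma's equation collapses to $p-e_L=q-e_R+1$ rather than $p-e_L=q-e_R$, and the forward direction then breaks: in a proper model $p=e_L$ and $q=e_R$, so the stated equation is violated at every node. This is not a fixable detail of your write-up, because the lemma as printed is itself off by one: for the two-node path with proper model $I_u=[0,2]$, $I_v=[1,3]$ one gets $2(\pi_M(v)+\pi_m(v))=4$ while $M_v+m_v+n_M(v)-n_m(v)=5$. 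The identity that is actually true --- and that your argument proves once $M_v$ is corrected --- is $2(\pi_M(v)+\pi_m(v))=M_v+m_v+n_M(v)-n_m(v)-1$. The paper's own proof commits the same compensating slip: it equates $n_m(v)+|S_v|$ with $M_v-\pi_m(v)-\pi_M(v)$, forgetting that the count of nodes whose lower endpoint lies in $[m_v,M_v]$ includes $v$ itself. So your proposal is precisely as sound as the published proof: same approach, same hidden off-by-one. Your converse is unaffected either way, since at the extremal node $p-e_L=0$ and $q-e_R\geq 1$, which contradicts both the stated and the corrected identity.
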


\begin{proof}
  Let $v\in V$ be an arbitrary node. Observe that there are $n - \pi_M(v)$ nodes $u$ such that $M_u > M_v$. Then, there are $n - M_v + \pi_M(v)$ nodes $w$ such that $m_w > M_v$.  Similarly,  there are $n- \pi_m(v)$ nodes $w$ satisfying $m_w \geq m_v$, and then $n - m_v + \pi_m(v)$ nodes $u$ such that $M_u > m_v$.
 
 Then, in $[m_v, M_v]$ there are  $\pi_m(v) + \pi_M(v) - m_v$ nodes $u$ such that $M_u \in [m_v, M_v]$ and $M_v - \pi_m(v) - \pi_M(v)$ nodes $w$ such that $m_w \in [m_v, M_v]$. For each $v\in V$, let us call $S_s$ be the set of nodes $z$ such that $[m_z, M_z]\subset [m_v, M_v]$.

Suppose that $\{[m_v, M_v]\}_{v \in V}$ is a proper model for $G$, and for each $v\in V$. Then, we have that $n_M(v) + |S_v| = \pi_m(v) + \pi_M(v) - m_v$ and $n_m(v) + |S_v| = M_v - \pi_m(v) - \pi_M(v)$. We deduce that $\pi_m(v) + \pi_M(v) - m_v - n_M(v) = M_v - \pi_m(v) - \pi_M(v) - n_m(v)$, from which we deduce $2(\pi_M(v) + \pi_m(v)) = M_v + m_v + n_M(v) - n_m(v)$.

Suppose now that $\{[m_v, M_v]\}_{v \in V}$ is not a proper model for $G$. Then, there exist a pair of nodes $\{u, v\}  \notin E$ such that $[m_u,M_u] \cap [m_v \cap M_v] \neq \emptyset$ with the intervals not containing each other. From all such pairs, let us pick one such that $m_v$ is minimum. Then necessarily there exists $m_\omega \in [m_v, M_v]$, $M_\omega > M_v$ and $\{v,\omega\}\notin E$. This implies that $n_m(v) + |S_v| < M_v - \pi_m(v) - \pi_M(v)$. On the other hand, the choice of $v$ implies that all nodes such $u$ such that $M_u \in [m_v, M_v]$ either belong to $N(v)$ or belong to $S_v$. Then $n_M(v) + |S_v| =   \pi_m(v) + \pi_M(v) - m_v$. We deduce that $2(\pi_M(v) + \pi_m(v)) \neq M_v + m_v + n_M(v) - n_m(v)$.
\end{proof}

Now we are ready to give the main result of this section.

\begin{theorem}
	$\circlegraph$ belongs to $\dMAM[\mathcal{O}(\log n)]$.
\end{theorem}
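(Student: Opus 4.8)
The plan is to reduce the verification of the circle-graph characterization to the local equation of \Cref{lem:circchar}, using the interactive protocol $\COP$ of \Cref{prop:COP} to certify the only genuinely global quantities that appear, namely the ranks $\pi_m(v)$ and $\pi_M(v)$. The crucial difference with the permutation and trapezoid cases is that there the correction terms ($d^\pm(v)$, $f_t(v)$, $f_b(v)$) depend only on the neighborhood of $v$ and can be recomputed locally, whereas here $\pi_m(v)=|\{u:m_u<m_v\}|$ and $\pi_M(v)=|\{u:M_u<M_v\}|$ count over all of $V$, so no node can check them from its neighborhood alone. This is exactly why a one-round PLS seems out of reach and why three rounds of interaction are used.

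In the first Merlin round, every node $v$ would receive: the value $(m_v,M_v)\in[2n]^2$ of its interval; the claimed ranks $(\pi_m(v),\pi_M(v))$; the certificate for $\sizeofG$ (\Cref{prop:sizeofG}); and path certificates (\Cref{prop:stpath}) running along the endpoint orderings, used --- together with a ``smallest unused endpoint'' gadget exactly as in the proof of \Cref{theo:permutaitongraph} --- to force $\{m_v\}_{v\in V}$ and $\{M_v\}_{v\in V}$ to be injective and their union to be all of $[2n]$, thereby certifying condition \lipics{C3}. Simultaneously, Merlin opens two parallel instances of $\COP$, one on the pair $(m,\pi_m)$ and one on $(M,\pi_M)$. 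Arthur's public coins (second round) and Merlin's answer (third round) complete these $\COP$ instances, which internally invoke $\equality$ and $\permutation$; since each is a $\dMAM[\log n]$ protocol and all sub-protocols can share the same coin toss, the whole construction stays within $\dMAM[\mathcal{O}(\log n)]$.

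In the final verification round each node $v$ broadcasts $(m_v,M_v)$ to its neighbors. It then checks condition \lipics{C1} directly (every neighbor's interval overlaps $[m_v,M_v]$ without containment), computes the local counts $n_m(v)$ and $n_M(v)$ from the neighbors' endpoints, reads off the now-certified ranks $\pi_m(v),\pi_M(v)$, and verifies the equation
\[ 2\big(\pi_M(v)+\pi_m(v)\big) = M_v + m_v + n_M(v) - n_m(v) \]
of \Cref{lem:circchar}, on top of the local acceptance tests of $\sizeofG$, $\stpath$ and the two $\COP$ instances. For \textbf{completeness}, an honest prover sends a genuine proper model and its true ranks, and every test passes by \Cref{lem:circchar} and the completeness of the sub-protocols. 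For \textbf{soundness}, suppose $G$ is not a circle graph: if the prover corrupts $n$ or the paths it is caught by the soundness of $\sizeofG$ and $\stpath$; if the endpoints fail to form an injective cover of $[2n]$ it is caught by the path gadget together with the injectivity guarantee of $\COP$; and if the claimed ranks disagree with the true order of the endpoints it is caught by the order-consistency guarantee of $\COP$. Hence we may assume the nodes hold a bona fide semi-proper model together with the correct values of $\pi_m,\pi_M$; since $G$ is not a circle graph there is no proper model, so by \Cref{lem:circchar} the displayed equation must fail at some node, which rejects.

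The step I expect to be the main obstacle is certifying the global ranks $\pi_m(v),\pi_M(v)$ with only $\mathcal{O}(\log n)$ bits while simultaneously pinning down condition \lipics{C3}; everything else is either a purely local neighborhood computation or a black-box call to the toolbox protocols. Once \Cref{prop:COP} is available this obstacle dissolves, because $\COP$ is tailor-made to certify, for an injective labeling, that a claimed ranking is both a bijection onto $[n]$ and consistent with the order of the labels --- which is precisely the assertion that $\pi_m$ (resp.\ $\pi_M$) is the true rank function of the lower (resp.\ upper) endpoints.
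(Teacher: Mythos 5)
Your overall architecture coincides with the paper's proof: Merlin sends the intervals $[m_v,M_v]$ together with the ranks $\pi_m(v),\pi_M(v)$, two instances of $\COP$ certify that the ranks are the true rank functions of the endpoint orders, each node locally checks condition \lipics{C1} and the counting identity of \Cref{lem:circchar}, and completeness/soundness reduce to that lemma plus the sub-protocols. The genuine gap is in how you certify condition \lipics{C3}. You propose $\stpath$ certificates plus a ``smallest unused endpoint'' pointer gadget \emph{exactly as in} the proof of \Cref{theo:permutaitongraph}. That gadget is not transferable: its completeness rests on \Cref{rem:pyq}, i.e.\ on the fact that in a permutation model every value lying strictly between the values of two adjacent nodes belongs to a neighbor of one of them --- a property forced by the inversion structure. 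For crossing-interval models this is false, precisely because intervals may nest. Concretely, take the proper model $I_u=[0,5]$, $I_v=[4,7]$, $I_w=[2,6]$, $I_z=[1,3]$ of a connected $4$-node circle graph (edges $u\sim v$, $u\sim w$, $v\sim w$, $w\sim z$). The node $z$ is adjacent to neither $u$ nor $v$, so the pointer of $u$ is $p_u=1$ (the smallest value above $m_u=0$ that is not an endpoint of a neighbor of $u$), and the consistency test at $v$ --- ``if $m_u<p_u<m_v$ then $p_u$ must be an endpoint of a neighbor of $v$'' --- fails even though the model is proper and the graph is a yes-instance. So your protocol rejects honest provers; and if you weaken the test to restore completeness, the argument that no value of $[2n]$ is skipped collapses. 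The two $\COP$ instances cannot absorb this either: they certify that $\{m_v\}_{v\in V}$ and $\{M_v\}_{v\in V}$ are each injective and correctly ranked, but not that the two families are disjoint from one another, nor that their union is all of $[2n]$, and both facts are used in the counting argument of \Cref{lem:circchar}.

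This is exactly the point where the paper diverges from the permutation/trapezoid template: it certifies $\{m_v\}_{v\in V}\cup\{M_v\}_{v\in V}=[2n]$ by running the interactive $\permutation$ protocol of \Cref{prop:permu} on the combined endpoint family (each node playing two roles), which makes no structural assumption on the graph and is itself a $\dMAM[\log n]$ protocol, so it fits inside the same three rounds as $\COP$. If you replace your pointer gadget by this single call to $\permutation$, the rest of your argument goes through essentially verbatim and matches the paper's proof.
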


\begin{proof}
Consider the protocols for \sizeofG, \permutation\ and \COP of Propositions \ref{prop:sizeofG}, \ref{prop:permu} and \ref{prop:COP}. Given an instance $\langle G, \id\rangle$, consider the following  three-round protocol. In the first round, the prover provides each node $v$ with the following information:
\begin{enumerate}
\item The certification of the total number of nodes $n$, according to the protocol for $\sizeofG$.
\item The limits of the interval $I_v = [m_v, M_v]$, and the values $\pi_m(v)$ and $\pi_M(v)$.
\item The certification  of $\{m_v\}_{v\in V} \cup \{M_v\}_{v\in V} = [2n]$ according to the protocol for $\permutation$.
\item The certification of the correctness of $\{(m_v, \pi_m(v))\}_{v\in V}$ according to the protocol for \COP.
\item The certification of the correctness of $\{(M_v, \pi_M(v))\}_{v\in V}$ according to the protocol for \COP.
\end{enumerate}

Then, in the second and third round the nodes perform the remaining interactions of the protocols for $\permutation$ and \textsc{corresponding} \textsc{order}. In the verification round, the nodes first check the correctness of \lipics{1}-\lipics{5} according to the verification rounds for $\sizeofG$, $\permutation$ and \textsc{corresponding} \textsc{order}. Then, each node $v$ computes $n_m(v)$ and $n_M(v)$, and checks the following conditions:

\begin{itemize}
\item[\lipics{a.}] For every $u \in N(v)$, $I_v \cap I_u$, $I_v \not\subseteq I_u$ and $I_u \not\subseteq I_v$.
\item[\lipics{b.}] $2(\pi_M(v) + \pi_m(v)) = M_v + m_v + n_M(v) - n_m(v)$.
\end{itemize}

We now analyze completeness and soundness. \\

\noindent \textbf{Completeness:} Suppose that input graph $G$ is a circle graph. Then Merlin gives a proper model $\{[m_v,M_v]\}_{v\in V}$  for $G$. Merlin also provides the correct orders $\{\pi_m(v)\}_{v\in V}$ and $\{\pi_M(v)\}_{v\in V}$, the correct number of nodes $n$, and the certificates required in the sub-routines. Then, the nodes  verify correctness of \lipics{1}-\lipics{5} with probability greater than $2/3$, by the correctness of the protocols for $\sizeofG$, $\permutation$ and $\COP$. Finally, condition \lipics{a} is verified by definition of a proper model, and condition \lipics{b} is verified by \Cref{lem:circchar}. We deduce that every node accepts with probability greater than $2/3$. \\
	
\noindent \textbf{Soundness:} Suppose now that $G$ is not a circle graph. By the soundness of the protocols for $\sizeofG$, $\permutation$ and $\COP$, we know that at least one node rejects the certificates not satisfying \lipics{1}-\lipics{5}, with probability greater than $2/3$. Suppose then that conditions \lipics{1}-\lipics{5} are verified. Observe that each set of intervals satisfying condition \lipics{a} form a semi-proper model for $G$. Since $G$ is not a circle graph, by  \Cref{lem:circchar} we deduce that at least one node fails to verify \lipics{a} or \lipics{b}. All together, we deduce that at least one node rejects with probability greater than $2/3$. \\

  We now analyze the communication complexity of the protocol:  the certification for $\sizeofG$, $\permutation$ and $\COP$ is $\cO(\log n)$, given by  \Cref{prop:sizeofG}, \ref{prop:permu} and \ref{prop:COP}, and for each $v\in V$, the values $m_v, M_v, \pi_m(v), \pi_M(v)$ can be encoded in $\cO(\log n)$ as they are numbers in $[2n]$ or $[n]$. Overall the total communication is $\cO(\log n)$. 
\end{proof}


\section{Polygon Circle Graphs} \label{sec:polyg}

In this section, we give a three-round protocol for the recognition of polygon-circle graph. This extension is based in a non-trivial extension of the properties of circle graphs. 

Remember that a $n$-node graph $G = (V,E)$ is a $k$-polygon-circle graph if and only if $G$ is the intersection model of a set of $n$ polygons of $k$ vertices inscribed in a circle, namely $\{P_v\}_{v\in V}$.  Further, every $k$-polygon-circle graph admits a model satisfying the following conditions \cite{Brandstdt1999}: (1) for each $v \in V$, the polygon $P_v$  is represented as a set of $k$ vertices $\{p_0(v), \dots, p_{k-1}(v)\}$ such that, for each $i\in[k-1]$, $1\leq p_i(v) < p_{i+1}(v) \leq n\cdot k $, and (2)
$\bigcup_{v\in V}\bigcup_{i \in [k]} \{p_i(v)\} = [n\cdot k]$. In other words, each value in $[k\cdot n]$ corresponds to a unique vertice of some polygon. A set of polygons satisfying conditions (1) and (2) are called a \emph{proper polygon model} for $G$. Similar to previous cases, when we just ask that adjacent nodes have intersecting polygons (but not necessarily the reciprocal) we say that the model is a \emph{semi-proper polygon model} for $G$. 

Let $G$ be a graph and $\{P_v\}$ be a semi-proper model for $G$. For each $v \in V$. Let us call $\alpha(v)$ the set of points in $\{1, \dots, p_1(v)\} \cup \{p_k(v), \dots, kn\}$ that do not correspond to a neighbor of $v$. For each $i\in [k]$, we also call $\beta_i(v)$ the set of nodes $w \notin N(v)$ such that $p_i(w) \in \alpha(v)$. Formally,

$$ \alpha(v) = \{ i \in [0, p_1(v)]_\N \cup [p_k(v), kn-1]_\N: \forall u \in N(v), i \notin P_u\}$$ 
$$ \beta_i(v) = \{w \in V: p_i(w) \in \alpha(v)\}$$

\begin{lemma}\label{lem:knrepre}
Let $G=(V,E)$ be a graph, and let $\{P_v\}_{v\in V}$ a semi-proper model for $G$. Then $\{P_v\}_{v\in V}$ is a proper model for $G$ if and only if $|\alpha(v)|  =  k|\beta_1(v)|$ for every $v\in V$.
\end{lemma}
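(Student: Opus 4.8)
The plan is to reduce the statement to a single bookkeeping identity over the \emph{outer arc} of each node and then to control the sign of its discrepancy by an extremal choice of node. Throughout, write $O_v = \{i : i \le p_1(v)\} \cup \{i : i \ge p_k(v)\}$ for the positions lying outside the span of $P_v$; this is exactly the set appearing in the definition of $\alpha(v)$. The one geometric fact I would establish first is the intersection rule for convex polygons inscribed in a circle: two such polygons $P_u,P_w$ are disjoint if and only if all vertices of one of them lie in a single arc determined by two consecutive vertices of the other. The ``if'' part holds because the chord joining those two consecutive vertices separates the two polygons; the ``only if'' part is the usual interleaving argument (if the vertices of $w$ meet at least two arcs of $v$, a cyclic pattern $v,w,v,w$ appears and the boundaries must cross). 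Since the model is semi-proper every position in $[kn]$ is occupied by exactly one polygon vertex, so I can rewrite the two quantities in the statement as contributions of non-neighbours: setting $c_w = |P_w \cap O_v|$ and $b_w = 1$ if $p_1(w) \in O_v$ and $b_w=0$ otherwise,
\[ |\alpha(v)| - k\,|\beta_1(v)| \;=\; \sum_{w \in V \setminus (N(v)\cup\{v\})} \big( c_w - k\, b_w \big) . \]
Here $\alpha(v)$ collects the outer-arc positions occupied by non-neighbours, and $\beta_1(v)$ counts the non-neighbours whose smallest vertex sits there.

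For the forward direction, assume $\{P_v\}_{v\in V}$ is proper. Then every non-neighbour $w$ of $v$ has $P_w$ disjoint from $P_v$, so by the geometric fact all $k$ vertices of $w$ lie in a single arc of $v$. If that arc is $O_v$ then $c_w = k$ and $b_w = 1$; otherwise the arc is interior, whence $c_w = 0$ and $p_1(w) \notin O_v$, so $b_w = 0$. In both cases $c_w - k b_w = 0$, the right-hand side vanishes, and the claimed equality holds at every $v$.

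For the converse I argue by contraposition, and I expect this to be the crux. If the model is not proper there is a non-adjacent \emph{crossing} pair; call a node \emph{bad} if it belongs to such a pair, and choose $v^\ast$ to be the bad node whose smallest vertex $p_1(v^\ast)$ is \emph{maximal}. A non-neighbour $w$ that is merely disjoint from $v^\ast$ still contributes $0$, exactly as above. A non-neighbour $w$ that crosses $P_{v^\ast}$ is itself bad, so by maximality $p_1(w) < p_1(v^\ast)$ (the vertices are all distinct); hence $p_1(w)$ lies in the left part of $O_{v^\ast}$, forcing $b_w = 1$, while crossing forbids all of $w$'s vertices from lying in $O_{v^\ast}$, so $c_w \le k-1$. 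Thus every crossing partner contributes $c_w - k \le -1$, and at least one such partner exists because $v^\ast$ is bad; the sum is therefore strictly negative and $|\alpha(v^\ast)| \ne k\,|\beta_1(v^\ast)|$, as required.

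The main obstacle is precisely this choice of $v^\ast$. For circle graphs (the case $k=2$) a crossing of two objects is the same as overlapping-without-nesting, so taking the node with the \emph{smallest} left endpoint automatically yields a partner that straddles its right end; this is how \Cref{lem:circchar} proceeds. For $k \ge 3$ that reasoning breaks: a non-neighbour can cross $P_v$ while its entire vertex set lies inside the span $[p_1(v),p_k(v)]$ (a ``nested crossing''), so the node with smallest first vertex may still satisfy the identity and fail to witness non-properness. Selecting instead the bad node with the \emph{largest} first vertex guarantees that every crossing partner enters from the left, which makes the relevant contributions uniformly negative and rules out cancellation; checking that this selection really forces $b_w = 1$ and $c_w \le k-1$ for all crossing partners is the delicate point of the argument.
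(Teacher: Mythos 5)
Your proposal is correct and takes essentially the same route as the paper's proof: the paper likewise picks the witness $v$ with $p_1(v)$ \emph{maximum} among nodes lying in a non-adjacent crossing pair, argues that every crossing partner then lands in $\beta_1(v)$ while missing some $\beta_i(v)$, and concludes via the identity $|\alpha(v)|=\sum_{i\in[k]}|\beta_i(v)|$ that $k|\beta_1(v)|>|\alpha(v)|$. Your per-node bookkeeping $\sum_{w}\bigl(c_w-k\,b_w\bigr)$ is just a reorganization of that per-index sum, and the arc-separation fact you make explicit is exactly what the paper uses implicitly in both directions.
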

\begin{proof}
 Let us suppose first that $\{P_v\}_{v\in V}$ is a proper model for $G$ and $v$ be an arbitrary node. If $\alpha(v)=\emptyset$ the result is direct. Then, let us suppose that $\alpha(v)\neq \emptyset$, and let us pick $q \in \alpha(v)$. Then necessarily there exists $i \in [k]$ such that $q$ belongs to $\beta_i(v)$. Observe that for each node $w$ in $\beta_i(v)$, all the vertices of the polygon $P_w$ are contained $\alpha_v$. Otherwise, the polygons $P_w$ and $P_v$ would have non-empty intersection, which contradicts the fact that $\{P_v\}_{v\in V}$ is a proper model. This implies that $|\alpha(v)| = k|\beta_i(v)|$, for every $i\in [k]$. In particular $|\alpha(v)| = k|\beta_1(v)|$.

Let us suppose now that $\{P_v\}_{v\in V}$ is not a proper model for $G$. Let us define the set $C$ of vertices having non-neighbor with intersecting polygons, formally $$C = \{v\in V : \exists w\in V, \{v,w\} \not\in E\text{ and }P_v\cap P_w \neq \emptyset\}.$$  Now pick $v\in C$ such that $p_1(v)$ is maximum, and call $C_v$ the set of non-neighbors of $v$ whose polygons intersect with $P_v$. Let $w$ be an arbitrary node in $C_v$.  By the maximality of $p_1(v)$, we know that $p_1(w) \in \beta_1(v)$. But since $P_w \cap P_v \neq \emptyset$, there must exist $i \in [k]$ such that $p_i(w) \notin \beta_i(v)$.  This implies that  $|\beta_1(v)| \geq |\beta_i(v)|$ for every $i\in [k]$, and at least one of these inequalities is strict. Since $|\alpha(v)| = \sum_{i \in [k]} |\beta_i(v)|$, we deduce that $k|\beta_1(v)| > |\alpha(v)|$.
\end{proof}

Let $G = (V,E)$ be a graph, and $\{P_v\}_{v\in V}$ be  a semi-proper polygon model for $G$. For each $i \in [k]$ and $v \in V$, we denote by $\pi_i(v)$ the cardinality of the set $\{u \in V: p_i(u) < p_i(v)\}$, and denote by $\sigma_i(v)$ the cardinality of the set  $\{q < p_i(v): \exists u \in V: p_1(u) = q \vee p_k(u) = q\}$.For a node $v$, we denote $N_{1,k}(v)$ the number vertices of polygons corresponding to neighbors of $v$, that are contained $[0, p(v_1)]_\N \cup [p(v_k), kn]_\N$. Formally, $$N_{1,k}(v) = |\{q \in [0, p(v_1)]_\N \cup [p(v_k), kn]_\N : \exists w \in N(v), q \in P_w\}|$$

\begin{lemma}\label{lem:polytechnical} Let $G = (V,E)$ be a graph, and $\{P_v\}_{v\in V}$ be  a semi-proper polygon model for $G$. Then, 
	\begin{align*}
		|\alpha(v)| &= kn  - p_k(v) + p_1(v) - 1  - N_{1,k}(v)\\
		|\beta_1(v)| &= n - \sigma_k(v) + \pi_k(v) + \pi_1(v)
	\end{align*}
	
\end{lemma}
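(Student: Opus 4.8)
The plan is to establish both identities by counting, from the point of view of a fixed node $v$, exactly which positions of $[kn]$ land in the relevant regions. Throughout I will use the semi-proper model directly, so the only facts I need are that positions are partitioned among the polygon vertices (condition (2) of a proper model is \emph{not} assumed here, only that adjacent nodes intersect), and the definitions of $\pi_i(v)$, $\sigma_i(v)$, $N_{1,k}(v)$, $\alpha(v)$ and $\beta_1(v)$.

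\medskip

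\noindent\textbf{First identity.} I would start from the set $R(v) = [0,p_1(v)]_\N \cup [p_k(v), kn-1]_\N$, whose size is exactly $p_1(v) + (kn - p_k(v))$, that is $kn - p_k(v) + p_1(v) - 1$ after accounting for the two boundary points $p_1(v),p_k(v)$ which belong to $v$ itself rather than to $\alpha(v)$ or to a neighbor. By definition $\alpha(v)$ is precisely the subset of $R(v)$ consisting of positions that are \emph{not} occupied by a vertex of a neighbor of $v$. Hence the positions of $R(v)$ split into those belonging to neighbors of $v$, counted by $N_{1,k}(v)$, the two endpoints belonging to $v$, and the remaining ones which form $\alpha(v)$. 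Subtracting the neighbor-occupied positions and the endpoints from $|R(v)|$ yields $|\alpha(v)| = kn - p_k(v) + p_1(v) - 1 - N_{1,k}(v)$. The only care needed is the bookkeeping of the endpoints $p_1(v)$ and $p_k(v)$ themselves so that the constant $-1$ comes out correctly; this is routine interval arithmetic.

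\medskip

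\noindent\textbf{Second identity.} For $|\beta_1(v)|$ I would count the nodes $w \notin N(v)$ whose first vertex $p_1(w)$ lies in $\alpha(v)$, but it is cleaner to count it indirectly as the number of first-vertices $p_1(w)$ lying in $R(v)$ and then argue that, among those, the ones landing in the ``neighbor'' part are excluded by the definition of $\alpha(v)$. Concretely, the number of nodes $w$ with $p_1(w) < p_k(v)$ is $\pi_k(v)$ if we were comparing against $p_k$, so I will express the count of first-vertices falling in $[0,p_1(v)]$ and in $[p_k(v), kn-1]$ separately: the left piece contributes the first-vertices below $p_1(v)$, and the right piece contributes those at or above $p_k(v)$, which is $n$ minus the first-vertices below $p_k(v)$. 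Translating these counts into $\pi_1(v)$, $\pi_k(v)$ and $\sigma_k(v)$ — where $\sigma_k(v)$ accounts for the extreme ($p_1$ or $p_k$) vertices sitting below $p_k(v)$ — should collapse to $|\beta_1(v)| = n - \sigma_k(v) + \pi_k(v) + \pi_1(v)$.

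\medskip

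\noindent The step I expect to be the main obstacle is the second identity, specifically making the inclusion--exclusion between $\pi_1$, $\pi_k$ and $\sigma_k$ precise: one must be careful that $\sigma_k(v)$ counts \emph{extreme} vertices (those that are a $p_1$ or a $p_k$ of some node) strictly below $p_k(v)$, whereas $\pi_1(v)$ and $\pi_k(v)$ count \emph{first} and \emph{last} vertices below the respective coordinate of $v$, and these families overlap. The correct way to reconcile them is to observe that a first-vertex $p_1(w)$ lands in $R(v)$ iff it is below $p_1(v)$ or at/above $p_k(v)$, and to rewrite ``at/above $p_k(v)$'' as the complement within all $n$ nodes of ``below $p_k(v)$'', then separate the extreme-vertex contributions captured by $\sigma_k(v)$ from the plain first-vertex contributions captured by $\pi_1(v)$ and $\pi_k(v)$. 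Once the regions are indexed carefully the algebra is a direct cancellation, so the difficulty is entirely in fixing the exact meaning of each counting quantity at the boundary and not double-counting $v$'s own vertices.
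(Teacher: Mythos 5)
Your proposal is correct and follows essentially the same counting argument as the paper's own proof: the first identity comes from measuring the region $[0,p_1(v)]_\N\cup[p_k(v),kn-1]_\N$ and removing $v$'s own two endpoints together with the $N_{1,k}(v)$ neighbor-occupied positions, and the second from counting first vertices in that region, writing those at or above $p_k(v)$ as the complement of those below and using $\sigma_k(v)-\pi_k(v)$ to isolate the $p_1$-vertices among the extreme vertices. This is precisely the paper's computation, which obtains the number of $p_1$-vertices above $p_k(v)$ as $(2n-\sigma_k(v))-(n-\pi_k(v))=n-\sigma_k(v)+\pi_k(v)$ and then adds the $\pi_1(v)$ first vertices lying below $p_1(v)$.
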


\begin{proof}
Let  $\{P_v\}_{v\in V}$ be  a semi-proper polygon model for $G$ and $v$ be an arbitrary node. First, observe that there are $p_1(v)$ integer positions for vertices in $[p_1(v)]$ and $(kn-1) - p_k$ positions for vertices in $[p_k, kn-1]$. Then, there are $kn - p_k + p_1(v) - 1$ available integer positions in $[p_1(v)] \cup [p_k(v),kn-1]_\N$. Since $N_{1,k}(v)$ of these positions are occupied by a polygon corresponding some neighbor of $v$, we deduce that $\alpha(v) = kn - p_k + p_1(v) - 1 - N_{1,k}(v)$.

Second, observe that the set $\{p_1(u), p_k(u)\}_{u \in V}$ uses $2n$ of the $kn$ possible positions. Then, there are $2n - \sigma_k(v)$ positions used the elements of $\bigcup_{u\in V}\{p_1(u), p_k(u)\} \cap [p_k(v)+1, kn-1]_\N$. On the other hand, there are $n- \pi_k$ positions used by vertices in $\bigcup_{u\in V}\{p_k(u)\} \cap [p_k(v)+1, kn-1]_\N$. Therefore, there are $n - \sigma_k(v) + \pi_k(v)$ positions used by $\bigcup_{u\in V}\{p_1(u)\} \cap [p_k(v)+1, kn-1]_\N$. Finally, noticing that there are $\pi_1(v)$ positions used by $\bigcup_{u\in V}\{p_1(u)\} \cap [0, p_1(v)-1]_\N$, we deduce that $|\beta_1(v)| = n - \sigma_k(v) + \pi_k(v) + \pi_1(v)$.
\end{proof}

We are now ready to give the main result of this section.

\begin{theorem}
	$\polygoncircle$ belongs to $\dMAM[\log n]$.
\end{theorem}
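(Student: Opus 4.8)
The plan is to mimic the three-round protocol for $\circlegraph$, replacing the interval-endpoint bookkeeping by the polygon-vertex bookkeeping supplied by \Cref{lem:knrepre} and \Cref{lem:polytechnical}. In the first (Merlin) round I would have the prover hand each node $v$: a $\sizeofG$ certificate for $n$; the $k$ vertices $p_1(v) < \cdots < p_k(v) \in [kn]$ of its polygon $P_v$; the three order statistics $\pi_1(v)$, $\pi_k(v)$ and $\sigma_k(v)$; a $\permutation$ certificate that $\bigcup_{v \in V}\bigcup_{i \in [k]}\{p_i(v)\} = [kn]$; and $\COP$ certificates attesting the correctness of the claimed ranks. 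The second and third rounds simply carry out the remaining interactions of the $\permutation$ and $\COP$ subroutines (\Cref{prop:permu} and \Cref{prop:COP}). In the final verification round each node $v$ broadcasts its vertex list, checks locally that its own coordinates are increasing and lie in $[kn]$, computes $N_{1,k}(v)$ from the vertices of its neighbours, and checks: (a) for every neighbour $u$ the polygons $P_v$ and $P_u$ intersect (the semi-proper condition, a local $\cO(k\log n)$ test on the two vertex sets); and (b) the identity $|\alpha(v)| = k|\beta_1(v)|$ obtained by substituting the closed forms of \Cref{lem:polytechnical}, namely $kn - p_k(v) + p_1(v) - 1 - N_{1,k}(v) = k\big(n - \sigma_k(v) + \pi_k(v) + \pi_1(v)\big)$.

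For \textbf{completeness}, when $G$ is a $k$-polygon-circle graph an honest Merlin sends a proper polygon model together with the true values of $n$, the ranks, and all subroutine certificates; check (a) holds by definition of a proper model, check (b) holds at every node by \Cref{lem:knrepre} and \Cref{lem:polytechnical}, and the subroutine verifications pass with probability at least $2/3$. For \textbf{soundness}, if $G$ is not a $k$-polygon-circle graph then either the prover cheats on $n$, on the permutation of $[kn]$, or on one of the ranks --- in which case the $\sizeofG$, $\permutation$ or $\COP$ verifier rejects with probability at least $2/3$ --- or all these certificates are valid, in which case the $p_i$'s form a genuine proper coordinate assignment and, by check (a), $\{P_v\}_{v \in V}$ is a semi-proper model. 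Since no proper model exists, \Cref{lem:knrepre} guarantees some node has $|\alpha(v)| \neq k|\beta_1(v)|$ and hence fails check (b). The bandwidth is $\cO(\log n)$: the subroutine certificates are $\cO(\log n)$ by the toolbox, and each node stores and broadcasts only $\cO(k)$ coordinates and ranks from $[kn]$, which for fixed $k$ is $\cO(\log n)$.

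The main obstacle is certifying the order statistic $\sigma_k(v)$, the rank of $p_k(v)$ among the $2n$ extreme coordinates $\{p_1(u), p_k(u)\}_{u \in V}$. Unlike $\pi_1$ and $\pi_k$, which are ranks within an $n$-element family holding exactly one value per node and are therefore certified by a direct application of $\COP$, the quantity $\sigma_k$ is a rank inside a $2n$-element family in which each node contributes two values. To handle it I would invoke a mild extension of $\COP$ in which every node holds a constant number of items and the protocol certifies a global linear order of all of them consistent with their values; this extension is immediate because the underlying $\equality$ and $\permutation$ machinery operates on network-wide multisets, so having two tokens per node only changes the message length by a constant factor and keeps the bandwidth at $\cO(\log n)$. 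Verifying this extension, together with checking that condition (a) and the $\permutation$ test genuinely pin down a semi-proper model so that \Cref{lem:knrepre} applies, is where the care is needed; the remaining arithmetic is routine.
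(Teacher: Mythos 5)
Your proposal is correct and follows essentially the same route as the paper's proof: the same subroutines ($\sizeofG$, $\permutation$ over $[kn]$, and $\COP$ for the ranks), the same local checks \lipics{a} and \lipics{b} derived from \cref{lem:knrepre} and \cref{lem:polytechnical}, and the same completeness/soundness structure. Even your treatment of the key obstacle --- certifying $\sigma_k$ over the $2n$-element family in which each node contributes two values --- coincides with the paper's solution, which handles it by letting each node simulate two virtual participants in the $\COP$ protocol at the cost of a constant factor in bandwidth, exactly your ``mild extension'' of $\COP$.
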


\begin{proof}
	Consider the protocols for \sizeofG, \permutation\ and \COP of Propositions \ref{prop:sizeofG}, \ref{prop:permu} and \ref{prop:COP}.Given an instance $\langle G, \id\rangle$, consider the following  three round protocol. In the first round, the prover provides each node $v$ with the following information:
\begin{enumerate}
\item The certification of the total number of nodes $n$, according to the protocol for $\sizeofG$.
\item The vertices of the polygon $P_v$, denoted $V(P_v) = \{p_1(v), \dots, p_k(v)\}$.
\item The values of $\pi_1(v)$, $\pi_k(v)$ and $\sigma(v)$. 
\item The certification  of $\bigcup_v V(P_v) = [k\cdot n]$ according to the protocol for $\permutation$.
\item The certification of the correctness of $\{(p_1(v), \pi_1(v))\}_{v\in V}$ according to the protocol for \COP.
\item The certification of the correctness of $\{(p_k(v), \pi_k(v))\}_{v\in V}$ according to the protocol for \COP.
\item The certification of the correctness of $\{(p_1(v), \sigma_1(v)\}_{v\in V}$ and the collection $\{p_k(v), \sigma_k(v)\}_{v\in V}$ according to the protocol for \COP.
\end{enumerate}
	
	Then, in the second and third round the nodes perform the remaining interactions of the protocols for $\permutation$ and \COP. In the verification round, the nodes first check the correctness of \lipics{1}-\lipics{7} according to the verification rounds for $\sizeofG$, $\permutation$ and \COP. \\
	
	{\bf Remark}: in order to check \lipics{7}, each node has to play the role of two different nodes $v', v''$, one to verify $\{(p_1(v), \sigma(v')\}_{v\in V}$, and the other one to verifies $\{(p_k(v), \sigma(v'')\}_{v\in V}$, where $\sigma(v') = \sigma_1(v)$ and $\sigma(v'') = \sigma_k(v)$. To do so, Merlin gives  $v$ the certificates of $v'$ and $v''$, and $v$ answers with the random bits as if they would be generated by $v'$ and $v''$. Obviously, this increases the communication cost by a factor of $2$.\\
	
	Then, each node $v$ computes $|\beta(v)|$ and $|\alpha(v)|$ according to the expressions of \cref{lem:polytechnical}, and checks the following conditions:
	\begin{itemize}
	\item[\lipics{a.}] $\forall u \in N(v)$, $P_u \cap P_v \neq \emptyset$.
	\item[\lipics{b.}] $|\alpha(v)| = k|\beta_1(v)|$.
	\end{itemize}
	
We now analyze completeness and soundness. \\

\noindent \textbf{Completeness:} Suppose that input graph $G$ is a $k$-polygon-circle graph. Then Merlin gives a proper polygon model $\{P_v\}_{v\in V}$  for $G$. Merlin also provides  the correct number of nodes $n$, correct orders $\{\pi_1(v)\}_{v\in V}$ and $\{\pi_k(v)\}_{v\in V}$,  $\{\sigma_1(v)\}_{v\in V}$ and  $\{\sigma_k(v)\}_{v\in V}$, and the certificates required in the corresponding sub-routines. Then, the nodes verify correctness of \lipics{1}-\lipics{7} with probability greater than $2/3$, by the correctness of the protocols for $\sizeofG$, $\permutation$ and $\COP$. Finally, condition \lipics{a} is verified by definition of a proper model, and condition \lipics{b} is verified by \cref{lem:knrepre}. We deduce that every node accepts with probability greater than $2/3$. \\
		
\noindent \textbf{Soundness:} Suppose now that $G$ is not a $k$-polygon-circle graph. By the soundness of the protocols for $\sizeofG$, $\permutation$ and \textsc{corresponding} \textsc{order}, we now that at least one node rejects the certificates not satisfying \lipics{1}-\lipics{7}, with probability greater than $2/3$. Suppose then that conditions \lipics{1}-\lipics{7} are verified. Observe that every set of polygons satisfying condition \lipics{a} form a semi-proper polygon model for $G$. Since $G$ is not a $k$-polygon-circle graph, by  \cref{lem:knrepre} we deduce that at least one node fails to verify \lipics{a} or \lipics{b}. All together, we deduce that at least one node rejects with probability greater than $2/3$. \\

  We now analyze the communication complexity of the protocol:  the certification for $\sizeofG$, $\permutation$ and $\COP$ is $\cO(\log n)$, given by  \cref{prop:sizeofG}, \ref{prop:permu} and \ref{prop:COP}. On the other hand, for each $v\in V$, the values $\pi_1(v), \pi_k(v), \sigma_1(v), \sigma_2(v), V(P_v)$ can be encoded in $\cO(\log n)$ as they are numbers in $[n]$, $[2n]$ or $[kn]$. Overall the total communication is $\cO(\log n)$. 
\end{proof}


\section{Lower Bounds} \label{sec:lower}

In this section we give logarithmic lower-bounds in the certificate sizes of any proof labeling scheme that recognizes the class of permutation, trapezoid, circle or polygon-circle graphs. In order to so, we use a technique given by Fraigniaud et al \cite{fraigniaud2019randomized}, called \emph{crossing edge}, and which we detail as follows. Let $G=(V,E)$ be a graph and let $H_1 = (V_1,E_1)$ and $H_2 = (V_2,E_2)$ be two subgraphs of $G$. We say that $H_1$ and $H_2$ are independent if and only if $V_1\cap V_2 = \emptyset$ and $E\cap (V_1\times V_2) = \emptyset$.

\begin{definition}[\cite{fraigniaud2019randomized}]
	Let $G=(V,E)$ be a graph and let $H_1 = (V_1,E_1)$ and $H_2 = (V_2,E_2)$ be two independent isomorphic subgraphs of $G$ with isomorphism $\sigma\colon V_1\to V_2$. The {\sc crossing} of $G$ induced by $\sigma$, denoted by $\sigma_{\bowtie}(G)$, is the graph obtained from $G$ by replacing every pair of edges $\{u,v\}\in E_1$ and $\{\sigma(u),\sigma(v)\}\in E_2$, by the pair $\{u,\sigma(v)\}$ and $\{\sigma(u),v\}$.
\end{definition}

Then, the tool that we use to build our lower-bounds is the following. 

\begin{theorem}[\cite{fraigniaud2019randomized}]\label{teo:lowerbound}
	Let $\mathcal{F}$ be a family of network configurations, and let $\mathcal{P}$ be a boolean predicate over $\mathcal{F}$. Suppose that there is a configuration $G_s\in\mathcal{F}$ satisfying that (1) $G$ contains as subgraphs $r$ pairwise independent isomorphic copies $H_1,...,H_r$ with $s$ edges each, and (2) there exists  $r$ port-preserving isomorphisms $\sigma_i\colon V(H_1)\to V(H_i)$ such that for every $i\neq j$, the isomorphism $\sigma^{ij} = \sigma_i\circ\sigma_j^{-1} $ satisfies $\mathcal{P}(G_s)\neq \mathcal{P}(\sigma^{ij}_{\bowtie}(G)_s)$. Then,  the verification complexity of any proof-labeling scheme for $\mathcal{P}$ and $\mathcal{F}$ is $\Omega\left(\dfrac{log(r)}{s}\right)$.
\end{theorem}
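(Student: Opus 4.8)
The plan is to prove the bound by a local indistinguishability (pigeonhole) argument: a verifier that exchanges few bits cannot tell $G_s$ apart from a suitable crossing $\sigma^{ij}_{\bowtie}(G_s)$, yet the two configurations have opposite predicate values, which contradicts soundness. I would argue by contradiction: suppose there is a proof-labeling scheme for $\mathcal{P}$ on $\mathcal{F}$ whose verification complexity is $\ell$, meaning every node sends at most $\ell$ bits across each incident edge in the verification round. Assume without loss of generality that $\mathcal{P}(G_s)$ holds, i.e. $G_s$ is a legal instance (the complementary case is symmetric, taking some crossing as the reference yes-instance).

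First I would invoke completeness to fix a certificate assignment $c$ under which every node of $G_s$ accepts, and then simulate the deterministic verification round to record, for each edge $e$ of $G_s$, the ordered pair of messages exchanged across $e$ by its two endpoints. I would call the \emph{signature} of a copy $H_i$ the tuple of these transcripts over its $s$ edges, read through the isomorphism $\sigma_i$ so that the signatures of all copies live in a common space indexed by the edges of $H_1$. Since each transmitted message has at most $\ell$ bits, each edge contributes at most $2\ell$ bits, so the number of distinct possible signatures is at most $2^{2\ell s}$.

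Next, by pigeonhole, if $r > 2^{2\ell s}$ then two distinct copies $H_i$ and $H_j$ carry the same signature. I would then show that applying the crossing $\sigma^{ij}_{\bowtie}$, while keeping every node's certificate exactly as in $c$, leaves the local view of every node unchanged. Nodes lying outside both copies keep the same neighbors, certificates and incoming messages. For a node $u$ incident to a swapped edge (that is, essentially every node of the two copies), pairwise independence of the copies guarantees that the rewiring is well defined (disjoint vertex sets and no pre-existing edges between them), and port-preservation guarantees that $u$ reaches its new neighbor through exactly the same port as before; because the two copies share the same signature, the message $u$ now receives across that port equals the one it received before the crossing. Hence every node reaches the same accept/reject decision as on $G_s$, so all nodes accept on $\sigma^{ij}_{\bowtie}(G_s)$. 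But $\mathcal{P}(\sigma^{ij}_{\bowtie}(G_s)) \neq \mathcal{P}(G_s)$ makes this an illegal instance, contradicting soundness.

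Consequently $r \leq 2^{2\ell s}$, which rearranges to $\ell \geq \tfrac{\log r}{2s} = \Omega\!\left(\tfrac{\log r}{s}\right)$, the claimed lower bound on the verification complexity. The step I expect to be the crux is precisely the indistinguishability claim of the previous paragraph: verifying that equal edge-signatures together with port-preservation make the crossing locally undetectable for \emph{all} nodes simultaneously, which is exactly where the hypotheses ``port-preserving'' and ``pairwise independent'' are indispensable. This is also the point that must be upgraded from a clean counting argument to an averaging argument over the shared randomness if one wants the statement for randomized schemes rather than deterministic proof-labeling schemes.
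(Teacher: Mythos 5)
There is nothing to compare against inside the paper: Theorem~\ref{teo:lowerbound} is stated as an imported result, cited from~\cite{fraigniaud2019randomized}, and the authors give no proof of it. Your argument is, however, essentially the proof behind the cited result (in its deterministic form): fix accepting certificates on the yes-instance, record for each copy $H_i$ the $\le 2\ell s$-bit transcript of the verification round on its $s$ edges (read through $\sigma_i$ into a common index space), pigeonhole two copies with equal transcripts when $r > 2^{2\ell s}$, and observe that crossing them is undetectable to every node --- port-preservation and equal signatures handle nodes inside the copies, independence and the fact that a node's outgoing messages depend only on its own certificate, identity and input handle everything else --- contradicting soundness on the crossed instance. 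That is the correct crux, and your accounting gives $\ell \ge \log r/(2s) = \Omega(\log r / s)$ as required.

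Two small caveats. First, your opening ``without loss of generality $\mathcal{P}(G_s)$ holds, the complementary case is symmetric'' is not justified by the stated hypotheses: if $G_s$ were a no-instance, then each crossing $\sigma^{ij}_{\bowtie}(G_s)$ is a yes-instance, but re-running your pigeonhole inside such a crossed graph would require knowing the predicate value of \emph{doubly} crossed graphs, about which condition (2) says nothing. The intended (and, in this paper, the only used) reading is that $G_s$ itself satisfies $\mathcal{P}$; your proof should simply assume that rather than claim symmetry. Second, invoking soundness on $\sigma^{ij}_{\bowtie}(G_s)$ tacitly requires that the crossed configurations belong to the family the scheme is defined over; this assumption is implicit in the theorem statement itself (and in its applications here), so you inherit rather than create it, but it is worth making explicit.
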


Let us consider first permutation and trapezoid graphs. Let $\mathcal{F}$ the family of instances of $\permutationgraph$, induced by the family of graphs $\{Q_{n}\}_{n>0}$. Each graph $Q_n$ consists of $5n$ nodes forming a path $\{v_1, \dots, v_{5n}\}$ where we add the edge $\{v_{5i-3}, v_{5i-1}\}$, for each $i \in [n]$. It is easy to see that for each $n>0$, $Q_n$ is a permutation graph (and then also a trapezoid graph). In \cref{fig:lowerboundperm1} is depicted the graph $Q_3$ and its corresponding  model.

\begin{figure}[h!]
	\centering
	\input{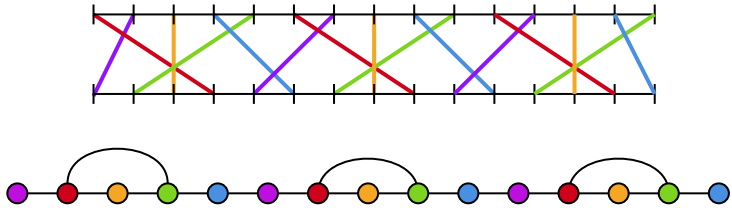}
	\caption{Graph $Q_3$ and a permutation model for $Q_3$. }
	\label{fig:lowerboundperm1}
\end{figure}

Given $Q_n$ defined above, consider the subgraphs $H_i = \{v_{5i-2}, v_{5i-1}\}$, for each $i \in [n]$, and the isomorphism $\sigma_i\colon V(H_1)\to V(H_i)$ such that $\sigma_i(v_3) = v_{5i-2}$ and $\sigma_i(v_4) = v_{5i-1}$. 

\begin{lemma}
For each $i\neq j$, the graph $\sigma^{ij}_{\bowtie}(Q_n)$ it is not a trapezoid graph. 
\end{lemma}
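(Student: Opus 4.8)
The plan is to locate an induced $6$-cycle inside $\sigma^{ij}_{\bowtie}(Q_n)$ and then invoke the facts that trapezoid graphs are hereditary and that $C_6$ is not a trapezoid graph. First I would unwind the crossing. Recall that $\sigma^{ij} = \sigma_i\circ\sigma_j^{-1}$ maps $H_j$ isomorphically onto $H_i$, sending $v_{5j-2}\mapsto v_{5i-2}$ and $v_{5j-1}\mapsto v_{5i-1}$. By the definition of the crossing, $\sigma^{ij}_{\bowtie}(Q_n)$ is obtained from $Q_n$ by deleting the two edges $\{v_{5i-2},v_{5i-1}\}$ and $\{v_{5j-2},v_{5j-1}\}$ and inserting the two edges $\{v_{5i-2},v_{5j-1}\}$ and $\{v_{5j-2},v_{5i-1}\}$, while every other edge of $Q_n$ is left untouched. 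In particular, the chords $\{v_{5i-3},v_{5i-1}\}$, $\{v_{5j-3},v_{5j-1}\}$ and the path edges $\{v_{5i-3},v_{5i-2}\}$, $\{v_{5j-3},v_{5j-2}\}$ all survive.

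Next I would exhibit the obstruction. Assume without loss of generality that $i<j$ and consider the six vertices $S=\{v_{5i-3},v_{5i-2},v_{5i-1},v_{5j-3},v_{5j-2},v_{5j-1}\}$. In $Q_n$ there is no edge between $\{v_{5i-3},v_{5i-2},v_{5i-1}\}$ and $\{v_{5j-3},v_{5j-2},v_{5j-1}\}$, since all inter-block path edges of $Q_n$ involve the endpoints $v_{5k}$, not these interior vertices, and every chord stays inside a single block. Hence the only edges of $Q_n$ inside $S$ are the two path edges, the two chords, and the two triangle bases $\{v_{5i-2},v_{5i-1}\}$, $\{v_{5j-2},v_{5j-1}\}$. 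Checking all $\binom{6}{2}=15$ pairs, the crossing leaves exactly the six edges $\{v_{5i-3},v_{5i-2}\}$, $\{v_{5i-2},v_{5j-1}\}$, $\{v_{5j-1},v_{5j-3}\}$, $\{v_{5j-3},v_{5j-2}\}$, $\{v_{5j-2},v_{5i-1}\}$, $\{v_{5i-1},v_{5i-3}\}$, which trace the chordless cycle $v_{5i-3}-v_{5i-2}-v_{5j-1}-v_{5j-3}-v_{5j-2}-v_{5i-1}-v_{5i-3}$. Thus $S$ induces a $C_6$ in $\sigma^{ij}_{\bowtie}(Q_n)$; note that the two deletions performed by the crossing are precisely the chords $\{v_{5i-2},v_{5i-1}\}$ and $\{v_{5j-2},v_{5j-1}\}$ that would otherwise break this hexagon, which is exactly why the crossing creates the obstruction.

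Finally I would conclude. Trapezoid graphs form a hereditary class contained in the co-comparability graphs, and co-comparability graphs are AT-free (see \cite{Brandstdt1999}). The three alternate vertices $\{v_{5i-3},v_{5j-1},v_{5j-2}\}$ of the hexagon above form an asteroidal triple: for any two of them, the sub-path of the cycle joining them avoids the closed neighbourhood of the third. Hence $C_6$ is not AT-free, and therefore not a trapezoid graph. Since $\sigma^{ij}_{\bowtie}(Q_n)$ contains $C_6$ as an induced subgraph and the class is closed under taking induced subgraphs, $\sigma^{ij}_{\bowtie}(Q_n)$ is not a trapezoid graph. The only genuinely computational step is the $15$-pair verification that $S$ induces exactly a $C_6$ (and in particular that the crossing introduces no extra chord); the rest is the standard hierarchy $\textsf{trapezoid}\subseteq\textsf{co-comparability}\subseteq\textsf{AT-free}$. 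If one prefers to avoid this hierarchy, the same conclusion can be reached by a direct, though longer, argument showing that $C_6$ admits no trapezoid representation.
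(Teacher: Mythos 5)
Your proof is correct and follows essentially the same route as the paper: both exhibit the induced $6$-cycle on $\{v_{5i-3},v_{5i-2},v_{5j-1},v_{5j-3},v_{5j-2},v_{5i-1}\}$ created by the crossing and then conclude that a trapezoid graph cannot contain an induced $C_6$. The only difference lies in the justification of that last step: the paper merely asserts a bound on the length of induced cycles in trapezoid graphs (stated there with a typo, ``at most $6$'' where ``less than $6$'' is meant), whereas you derive the same fact from the hierarchy $\textsf{trapezoid}\subseteq\textsf{co-comparability}\subseteq\textsf{AT-free}$ together with the asteroidal triple formed by alternate vertices of the hexagon, which is a more careful justification of the identical obstruction.
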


\begin{proof}
Given $i< j$,  observe that in  $\sigma^{ij}\colon V(H_j)\to V(H_i)$, the nodes $v_{5j-3}$, $v_{5j-2}$, $v_{5i-1}$, $v_{5i-3}$, $v_{5i-2}$, $v_{5j-1}$ form an induced cycle of length $6$ (see \cref{fig:lowerboundperm2} for an example). 

\begin{figure}[h!]
	\centering
	\tikzset{every picture/.style={line width=0.75pt}} 

\begin{tikzpicture}[x=0.75pt,y=0.75pt,yscale=-1,xscale=1]

\draw    (190,77.46) -- (240,77.46) ;
\draw    (265,77.46) -- (365,77.46) ;
\draw    (390,77.46) -- (540,77.46) ;
\draw [color={rgb, 255:red, 208; green, 2; blue, 27 }  ,draw opacity=1 ]   (240,78.54) .. controls (240,113.82) and (390,113.82) .. (390,78.54) ;
\draw [color={rgb, 255:red, 208; green, 2; blue, 27 }  ,draw opacity=1 ]   (265,78.54) .. controls (265.11,98.71) and (365.11,99.04) .. (365,78.54) ;
\draw    (215,72.46) .. controls (215.11,48.85) and (265.11,49.51) .. (265,72.46) ;
\draw    (340,77.46) .. controls (340.11,53.85) and (390.11,54.51) .. (390,77.46) ;
\draw  [fill={rgb, 255:red, 245; green, 166; blue, 35 }  ,fill opacity=1 ] (235,77.46) .. controls (235,74.7) and (237.24,72.46) .. (240,72.46) .. controls (242.76,72.46) and (245,74.7) .. (245,77.46) .. controls (245,80.22) and (242.76,82.46) .. (240,82.46) .. controls (237.24,82.46) and (235,80.22) .. (235,77.46) -- cycle ;
\draw  [fill={rgb, 255:red, 126; green, 211; blue, 33 }  ,fill opacity=1 ] (260,77.46) .. controls (260,74.7) and (262.24,72.46) .. (265,72.46) .. controls (267.76,72.46) and (270,74.7) .. (270,77.46) .. controls (270,80.22) and (267.76,82.46) .. (265,82.46) .. controls (262.24,82.46) and (260,80.22) .. (260,77.46) -- cycle ;
\draw  [fill={rgb, 255:red, 208; green, 2; blue, 27 }  ,fill opacity=1 ] (210,77.46) .. controls (210,74.7) and (212.24,72.46) .. (215,72.46) .. controls (217.76,72.46) and (220,74.7) .. (220,77.46) .. controls (220,80.22) and (217.76,82.46) .. (215,82.46) .. controls (212.24,82.46) and (210,80.22) .. (210,77.46) -- cycle ;
\draw  [fill={rgb, 255:red, 74; green, 144; blue, 226 }  ,fill opacity=1 ] (285,77.46) .. controls (285,74.7) and (287.24,72.46) .. (290,72.46) .. controls (292.76,72.46) and (295,74.7) .. (295,77.46) .. controls (295,80.22) and (292.76,82.46) .. (290,82.46) .. controls (287.24,82.46) and (285,80.22) .. (285,77.46) -- cycle ;
\draw  [fill={rgb, 255:red, 189; green, 16; blue, 224 }  ,fill opacity=1 ] (310,77.46) .. controls (310,74.7) and (312.24,72.46) .. (315,72.46) .. controls (317.76,72.46) and (320,74.7) .. (320,77.46) .. controls (320,80.22) and (317.76,82.46) .. (315,82.46) .. controls (312.24,82.46) and (310,80.22) .. (310,77.46) -- cycle ;
\draw  [fill={rgb, 255:red, 208; green, 2; blue, 27 }  ,fill opacity=1 ] (335,77.46) .. controls (335,74.7) and (337.24,72.46) .. (340,72.46) .. controls (342.76,72.46) and (345,74.7) .. (345,77.46) .. controls (345,80.22) and (342.76,82.46) .. (340,82.46) .. controls (337.24,82.46) and (335,80.22) .. (335,77.46) -- cycle ;
\draw  [fill={rgb, 255:red, 189; green, 16; blue, 224 }  ,fill opacity=1 ] (185,77.46) .. controls (185,74.7) and (187.24,72.46) .. (190,72.46) .. controls (192.76,72.46) and (195,74.7) .. (195,77.46) .. controls (195,80.22) and (192.76,82.46) .. (190,82.46) .. controls (187.24,82.46) and (185,80.22) .. (185,77.46) -- cycle ;
\draw  [fill={rgb, 255:red, 245; green, 166; blue, 35 }  ,fill opacity=1 ] (360,77.46) .. controls (360,74.7) and (362.24,72.46) .. (365,72.46) .. controls (367.76,72.46) and (370,74.7) .. (370,77.46) .. controls (370,80.22) and (367.76,82.46) .. (365,82.46) .. controls (362.24,82.46) and (360,80.22) .. (360,77.46) -- cycle ;
\draw  [fill={rgb, 255:red, 126; green, 211; blue, 33 }  ,fill opacity=1 ] (385,77.46) .. controls (385,74.7) and (387.24,72.46) .. (390,72.46) .. controls (392.76,72.46) and (395,74.7) .. (395,77.46) .. controls (395,80.22) and (392.76,82.46) .. (390,82.46) .. controls (387.24,82.46) and (385,80.22) .. (385,77.46) -- cycle ;
\draw  [fill={rgb, 255:red, 74; green, 144; blue, 226 }  ,fill opacity=1 ] (410,77.46) .. controls (410,74.7) and (412.24,72.46) .. (415,72.46) .. controls (417.76,72.46) and (420,74.7) .. (420,77.46) .. controls (420,80.22) and (417.76,82.46) .. (415,82.46) .. controls (412.24,82.46) and (410,80.22) .. (410,77.46) -- cycle ;
\draw    (465,77.46) .. controls (465.11,53.85) and (515.11,54.51) .. (515,77.46) ;
\draw  [fill={rgb, 255:red, 189; green, 16; blue, 224 }  ,fill opacity=1 ] (435,77.46) .. controls (435,74.7) and (437.24,72.46) .. (440,72.46) .. controls (442.76,72.46) and (445,74.7) .. (445,77.46) .. controls (445,80.22) and (442.76,82.46) .. (440,82.46) .. controls (437.24,82.46) and (435,80.22) .. (435,77.46) -- cycle ;
\draw  [fill={rgb, 255:red, 208; green, 2; blue, 27 }  ,fill opacity=1 ] (460,77.46) .. controls (460,74.7) and (462.24,72.46) .. (465,72.46) .. controls (467.76,72.46) and (470,74.7) .. (470,77.46) .. controls (470,80.22) and (467.76,82.46) .. (465,82.46) .. controls (462.24,82.46) and (460,80.22) .. (460,77.46) -- cycle ;
\draw  [fill={rgb, 255:red, 245; green, 166; blue, 35 }  ,fill opacity=1 ] (485,77.46) .. controls (485,74.7) and (487.24,72.46) .. (490,72.46) .. controls (492.76,72.46) and (495,74.7) .. (495,77.46) .. controls (495,80.22) and (492.76,82.46) .. (490,82.46) .. controls (487.24,82.46) and (485,80.22) .. (485,77.46) -- cycle ;
\draw  [fill={rgb, 255:red, 126; green, 211; blue, 33 }  ,fill opacity=1 ] (510,77.46) .. controls (510,74.7) and (512.24,72.46) .. (515,72.46) .. controls (517.76,72.46) and (520,74.7) .. (520,77.46) .. controls (520,80.22) and (517.76,82.46) .. (515,82.46) .. controls (512.24,82.46) and (510,80.22) .. (510,77.46) -- cycle ;
\draw  [fill={rgb, 255:red, 74; green, 144; blue, 226 }  ,fill opacity=1 ] (535,77.46) .. controls (535,74.7) and (537.24,72.46) .. (540,72.46) .. controls (542.76,72.46) and (545,74.7) .. (545,77.46) .. controls (545,80.22) and (542.76,82.46) .. (540,82.46) .. controls (537.24,82.46) and (535,80.22) .. (535,77.46) -- cycle ;

\end{tikzpicture}
	\caption{Graph $\sigma^{12}_{\bowtie}(Q_3)$, where in red are represented the crossing edges. Observe that this graph is not a trapezoid graph, as it contains an induced cycle of length $6$.}
	\label{fig:lowerboundperm2}
\end{figure}
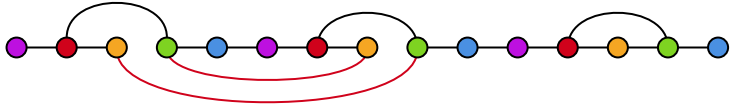
	
	As a trapezoid graph have induced cycles of length at most $6$, we deduce that $\sigma^{ij}_{\bowtie}(Q_n)$ is not a trapezoid graph.
\end{proof}

By \cref{teo:lowerbound} and the abode result, the lower bound result is direct.

\begin{theorem}
	Any proof labeling scheme for \permutationgraph~or \trapezoid~has proof-size of $\Omega\left(\log n \right)$ bits.
\end{theorem}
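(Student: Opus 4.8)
The plan is to apply the crossing-edge lower bound of \cref{teo:lowerbound} to the family $\mathcal{F}$ of instances induced by $\{Q_n\}_{n>0}$, with predicate $\mathcal{P}$ equal to ``is a trapezoid graph'' (and afterwards ``is a permutation graph''). I would take as distinguished configuration $G_s := Q_n$. By construction $Q_n$ is a permutation graph, hence a trapezoid graph, so $\mathcal{P}(Q_n)$ holds; the explicit permutation representation in \cref{fig:lowerboundperm1} certifies this. It then remains only to check the two hypotheses of \cref{teo:lowerbound} and to read off the bound.

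For hypothesis~(1) I would use the $r=n$ single-edge subgraphs $H_i=\{v_{5i-2},v_{5i-1}\}$, so that $s=1$. They are pairwise vertex-disjoint, and they are independent: consecutive gadgets are separated along the path by the vertices $v_{5i},v_{5i+1},v_{5i+2}$, while the only non-path edges $\{v_{5\ell-3},v_{5\ell-1}\}$ lie inside a single gadget; hence no edge of $Q_n$ joins $V(H_i)$ to $V(H_j)$ for $i\neq j$, and all copies are isomorphic to $K_2$. For hypothesis~(2) I would use the isomorphisms $\sigma_i\colon V(H_1)\to V(H_i)$ with $\sigma_i(v_3)=v_{5i-2}$ and $\sigma_i(v_4)=v_{5i-1}$. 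These can be made port-preserving because the gadgets are mutually identical: for every $i$ the vertex $v_{5i-2}$ has degree two and $v_{5i-1}$ has degree three, with the same incidence pattern, so a single port numbering of $Q_n$ is respected by all the $\sigma_i$.

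The predicate flip required by hypothesis~(2) is precisely the content of the lemma proved just above: for each $i\neq j$ the crossing $\sigma^{ij}_{\bowtie}(Q_n)$ contains the chordless six-cycle on $v_{5j-3},v_{5j-2},v_{5i-1},v_{5i-3},v_{5i-2},v_{5j-1}$, and since trapezoid graphs are co-comparability graphs and no co-comparability graph has an induced $C_6$, this crossing is not a trapezoid graph. Thus $\mathcal{P}(Q_n)\neq\mathcal{P}(\sigma^{ij}_{\bowtie}(Q_n))$ for all $i\neq j$, so \cref{teo:lowerbound} yields verification complexity $\Omega(\log(r)/s)=\Omega(\log n)$; as $Q_n$ has $5n$ nodes this is $\Omega(\log(\text{number of nodes}))$. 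The permutation case uses the same configuration and copies verbatim: $Q_n$ is a permutation graph, whereas $\sigma^{ij}_{\bowtie}(Q_n)$, not being a trapezoid graph, cannot be a permutation graph because the permutation class is contained in the trapezoid class.

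The step I expect to demand the most care is hypothesis~(2). First, one must fix a single port assignment on $Q_n$ shared by all gadgets so that the $\sigma_i$ are genuinely port-preserving (rather than merely graph isomorphisms). Second, one must verify that the six listed vertices induce a chordless cycle, the crucial point being that the two path-edges deleted by the crossing, $\{v_{5i-2},v_{5i-1}\}$ and $\{v_{5j-2},v_{5j-1}\}$, are exactly the would-be chords eliminated by the exchange, so that the result is a genuine induced $C_6$. By comparison, the independence of the copies and the bookkeeping $r=n$, $s=1$ are routine.
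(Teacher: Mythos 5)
Your proposal is correct and takes essentially the same route as the paper: the same family $\{Q_n\}$, the same single-edge gadgets $H_i=\{v_{5i-2},v_{5i-1}\}$ with the same isomorphisms $\sigma_i$, the same induced $C_6$ in $\sigma^{ij}_{\bowtie}(Q_n)$ as the predicate flip, and the same invocation of the crossing-edge theorem with $r=n$, $s=1$, giving $\Omega(\log n)$ for both classes via the containment of permutation graphs in trapezoid graphs. If anything, your justification that the crossed graph is not a trapezoid graph (co-comparability graphs, being AT-free, contain no induced $C_6$) is stated more carefully than the paper's own wording.
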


We now tackle the lower-bound for circle and polygon-circle graphs. Let $\mathcal{G}$ the family of instances of $\circlegraph$, defined by the family of graphs $\{M_{n}\}_{n>2}$. Each graph $M_n$ consists  of $6n$ nodes, where $4n$ nodes form a path $\{v_1, \dots, v_{4n}\}$ where we add, for each $i \in [n]$, the edges $\{v_{4i-3}, v_{4n+i}\}$, $\{v_{4i-2}, v_{5n+i}\}$ and $\{v_{4n+i}, v_{5n+i}\}$. It is easy to see that for each $n>0$, $M_n$ is a circle graph (and then also a polygon-circle graph). In Figure \ref{fig:lowerboundcircle1} is depicted the graph $M_4$ and its corresponding  model.

\begin{figure}[h!]
	\centering
	\input{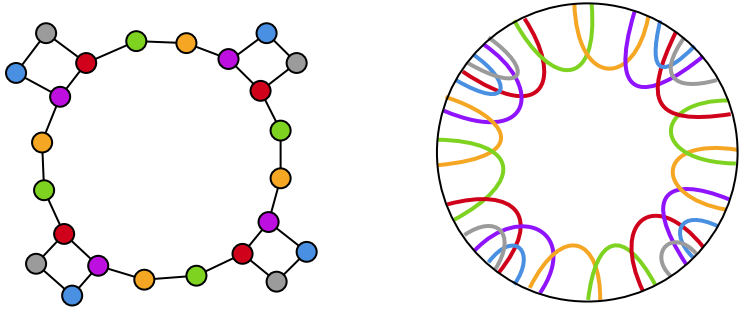}
	\caption{Graph $M_4$ and a permutation model for $M_4$. }
	\label{fig:lowerboundcircle1}
\end{figure}

Given $M_n$ defined above, consider the subgraphs $H_i = \{v_{4n+i}, v_{5n+i}\}$, for each $i \in [n]$, and the isomorphism $\sigma_i\colon V(H_1)\to V(H_i)$ such that $\sigma_i(4n+1) = v_{5n+i}$ and $\sigma_i(5n+1) = v_{4n+i}$. 

\begin{lemma}\label{lem:lowerpoly}
For every $k>0$ and each $i\neq j$, the graph $\sigma^{ij}_{\bowtie}(M_n)$ it is not a $k$-polygon-circle graph. 
\end{lemma}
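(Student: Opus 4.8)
The plan is to argue by contradiction: assume that $\sigma^{ij}_{\bowtie}(M_n)$ admits a $k$-polygon-circle representation $\{P_v\}_{v\in V}$ for some $k$, and extract from the crossing two objects that are forced to intersect yet correspond to non-adjacent vertices. First I would isolate the local effect of the crossing. After applying $\sigma^{ij}_{\bowtie}$, the four special vertices $v_{4n+i}, v_{5n+i}, v_{4n+j}, v_{5n+j}$ lose the two intra-block edges and acquire the crossing edges, so that together with their backbone attachment points $v_{4i-3}, v_{4i-2}, v_{4j-3}, v_{4j-2}$ they form two vertex-disjoint induced paths $A$ and $B$ which are moreover non-adjacent to one another (their internal vertices have no edge between them). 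The decisive feature of the construction is that $A$ and $B$ join \emph{interleaved} pairs of attachment points along the backbone path $v_1-\cdots-v_{4n}$: if the four attachment points occur in the order $P_1,P_2,P_3,P_4$ along the path, then one of $A,B$ joins $P_1$ to $P_3$ while the other joins $P_2$ to $P_4$. Thus $A$ and $B$ behave exactly like two chords with interleaving endpoints, and the goal is to show they cannot be realized disjointly.

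Next I would show that the backbone, together with the pendant $C_4$-gadgets present at \emph{every} block, is rigid enough to pin down the circular order of the attachment points. Concretely, I would prove that in any polygon-circle representation the polygons of $v_1-\cdots-v_{4n}$ must appear in the cyclic order induced by the path (up to a global reflection), since each consecutive pair must be linked while each non-consecutive pair must be unlinked, and the gadget at each block removes the remaining freedom. This establishes that $v_{4i-3}, v_{4i-2}, v_{4j-3}, v_{4j-2}$ occur around the circle in this order, which is precisely what makes the word ``interleaving'' meaningful and prevents a chain from being rerouted the other way around the circle.

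The heart of the proof is then a topological crossing argument, carried out uniformly in $k$. Recall the standard linking characterization: two convex polygons inscribed in the circle intersect if and only if their vertex sets interleave on the boundary. Consequently the union of the polygons along the induced path $A$ is a connected subset of the closed disk meeting the boundary circle only near the two attachment points of $A$, and likewise for $B$. Since the attachment points of $A$ and of $B$ interleave on the circle, any connected region realizing $A$ and any connected region realizing $B$ must cross inside the disk by the Jordan-curve theorem; this forces some polygon of $A$ to intersect some polygon of $B$, i.e. an edge between $A$ and $B$, contradicting their non-adjacency in $\sigma^{ij}_{\bowtie}(M_n)$. Because the argument only uses Jordan separation of the disk and the linking characterization of polygon intersection, it is independent of the number of sides $k$; as an alternative finish I would invoke the closure of polygon-circle graphs under induced minors to contract $A$, $B$ and the backbone to a fixed configuration and derive the contradiction there.

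I expect the rigidity step to be the main obstacle. Proving that the representation is forced to respect the path order—so that the interleaving is genuine and cannot be ``unwound'' by routing one chain around the back of the circle or by exploiting the extra freedom that large polygons provide—is delicate, precisely because for arbitrary $k$ a single vertex is a $k$-gon whose vertices may be scattered all around the circle rather than a single chord. This is exactly where the gadgets at the intermediate blocks, which block such rerouting, must be combined carefully with the linking characterization; the case $k=2$ (circle graphs) is the clean prototype, and the work lies in transporting the crossing argument to general $k$.
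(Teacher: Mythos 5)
Your structural analysis of the crossing is incorrect, and the error sits exactly at the heart of your argument. In $\sigma^{ij}_{\bowtie}(M_n)$ the deleted edges are $\{v_{4n+i},v_{5n+i}\}$ and $\{v_{4n+j},v_{5n+j}\}$, and the added edges are $\{v_{4n+i},v_{5n+j}\}$ and $\{v_{5n+i},v_{4n+j}\}$ (the twist in the definition of $\sigma_i$ cancels when forming $\sigma^{ij}=\sigma_i\circ\sigma_j^{-1}$). Hence the two connector paths are $A\colon v_{4i-3}-v_{4n+i}-v_{5n+j}-v_{4j-2}$ and $B\colon v_{4i-2}-v_{5n+i}-v_{4n+j}-v_{4j-3}$. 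Writing the attachment points in backbone order (for $i<j$) as $P_1=v_{4i-3}$, $P_2=v_{4i-2}$, $P_3=v_{4j-3}$, $P_4=v_{4j-2}$, the path $A$ joins $P_1$ to $P_4$ and $B$ joins $P_2$ to $P_3$: the pairs are \emph{nested}, not interleaved. (Interleaved pairs $P_1P_3$ and $P_2P_4$ would require the added edges $\{v_{4n+i},v_{4n+j}\}$ and $\{v_{5n+i},v_{5n+j}\}$, which is not what the crossing operation produces; this is confirmed by the paper's cycle $C_2$, which traverses $B$, the backbone edge $P_2P_1$, then $A$, then the backbone edge $P_4P_3$.) Consequently your Jordan-curve step --- two connected regions joining interleaved boundary pairs must cross, forcing a forbidden edge between $A$ and $B$ --- has nothing to bite on: nested connectors can be realized disjointly, so no contradiction follows from the non-adjacency of $A$ and $B$.

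The true obstruction, and the one the paper uses, is of a different (confinement, not crossing) nature: the backbone segment from $P_1$ to $P_4$ together with $A$ forms a long induced cycle $C$, and $B$ is an ear attached at two vertices $P_2,P_3$ of $C$ that are far apart on $C$; equivalently, the crossed graph contains an induced theta-subgraph all of whose three paths have at least two internal vertices --- in the paper's phrasing, two induced cycles sharing at least $4$ vertices, each with at least $2$ private ones. The contradiction is not that $A$ and $B$ must intersect each other, but that $B$ cannot be realized at all: in any $k$-polygon-circle model of $C$, a polygon disjoint from all polygons of $C$ is confined to a single gap region between consecutive polygons of the closed chain, and a polygon meeting only $P_2$ (respectively only $P_3$) is confined near $P_2$ (respectively $P_3$); these regions are pairwise disjoint because $P_2$ and $P_3$ are non-adjacent on $C$, so the two internal polygons of $B$ cannot both be placed and still intersect each other. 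Your rigidity concern is genuine (the paper's own proof of its Claim is only a sketch of this confinement argument), but repairing rigidity would not repair your proof, since you would still be arguing a crossing where the configuration is nested. Finally, your induced-minor fallback needs care: contracting the theta structure too far yields $K_{2,3}$, which \emph{is} a circle graph (nested chords through two disjoint chords realize it), so any such argument must stop at the theta with two internal vertices per path and prove that graph is not $k$-polygon-circle --- which is again the confinement claim above.
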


\begin{proof}

First, observe that in  $\sigma_{\bowtie}^{ij}(M_n)$ we have two induced cycles defined by $C_1 = v_1, \dots, v_{4n}$, and $C_2 = v_{4j-3}, v_{4n+j}, v_{5n+i}, v_{4i-2}, v_{4i-3}, v_{4n+i}, v_{5n+j}$.  Moreover $|V(C_1)\cap V(C_2)| =  4$, $|V(C_1) - V(C_2)|  = 4n-4$ and $|V(C_2) - V(C_1)| = 4$. See \cref{fig:lowerboundcircle2} for a representation of $\sigma^{i,i+1}_{\bowtie}(M_4)$.
	
\begin{figure}[h!]
	\centering
	\input{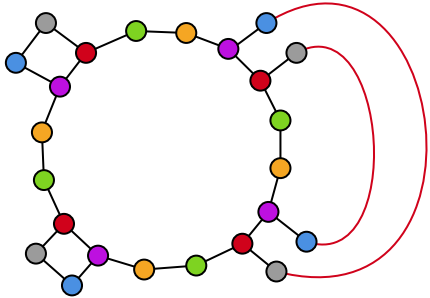}
	\caption{Graph $\sigma^{i,i+1}_{\bowtie}(M_4)$, where in red are represented the crossing edges.}
	\label{fig:lowerboundcircle2}
\end{figure}

	\begin{claim}\label{claim:1}
		Every graph $G$ consisting in two graphs $C_1$ and $C_2$ such that $|V(C_1)\cap V(C_2)|\geq 4$, $|V(C_1) - V(C_2)|\geq 2$ and $|V(C_2) - V(C_1)|\geq 2$ is not a $k$-polygon-circle graph, for every $k>0$.
	\end{claim}

	 \begin{proof}
		Let us denote by $v_i$ and $v_f$ two special nodes with degree $3$, connecting the two cycles.
		Suppose there exists a $k$-polygon-circle model for $G$. Observe that, if we delete all polygons corresponding to nodes of $V(C_2) - V(C_1)$, we obtain a polygon model for $C_1$. However, the cycle $C_1$ has at least $4$ nodes, because $|V(C_1) - V(C_2)|\geq 2$ and $|V(C_1)\cap V(C_2)|\geq 4$. Then, there is no way to add the removed polygons corresponding to $V(C_2)- V(C_1)$, without intersecting a polygon of $V(C_1)-V(C_2)$.
		
	\end{proof}

	Then, by \cref{claim:1}, we deduce that the graph induced by $C_1 \cup C_2$ is not a $k$-polygon-cycle graph.  Since the class of polygon-circle graphs is hereditary, we deduce that $\sigma^{ij}_{\bowtie}(M_n)$ it is not a $k$-polygon-circle graph. 
\end{proof}

Direct by Theorem \ref{teo:lowerbound} and Lemma \ref{lem:lowerpoly} we deduce the following result.  

\begin{theorem}
	Any proof labeling scheme for \circlegraph~or \polygoncircle~has proof-size $\Omega(log(n))$ bits.
\end{theorem}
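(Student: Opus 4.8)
The plan is to obtain both lower bounds as a single application of the crossing-edge machinery of \cref{teo:lowerbound} to the family $\mathcal{G}$ built from $\{M_n\}_{n>2}$; essentially all the real content has already been packaged into \cref{lem:lowerpoly}, so what remains is to instantiate the hypotheses of \cref{teo:lowerbound} with the correct parameters and read off the bound. Concretely, I would take $G_s = M_n$, let $\mathcal{P}$ be the predicate ``being a circle graph'' (respectively ``being a $k$-polygon-circle graph''), and use the $r = n$ subgraphs $H_i = \{v_{4n+i}, v_{5n+i}\}$, each of which carries exactly one edge, namely $\{v_{4n+i}, v_{5n+i}\}$, so that $s = 1$. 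The first thing to check is that these copies are pairwise independent: their vertex sets are clearly disjoint, and inspecting the edge set of $M_n$ shows that the only neighbors of $v_{4n+i}$ and $v_{5n+i}$ are $v_{4i-3}$, $v_{4i-2}$ and each other, whence no edge of $M_n$ joins $H_i$ to $H_j$ for $i \neq j$.

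Next I would invoke the port-preserving isomorphisms $\sigma_i \colon V(H_1) \to V(H_i)$ fixed just above \cref{lem:lowerpoly} and form $\sigma^{ij} = \sigma_i \circ \sigma_j^{-1}$. On the completeness side, $M_n$ is a circle graph, and therefore also a $k$-polygon-circle graph, so $\mathcal{P}(M_n)$ holds in both settings. On the soundness side, \cref{lem:lowerpoly} guarantees that for every $k > 0$ and every $i \neq j$ the crossed graph $\sigma^{ij}_{\bowtie}(M_n)$ is \emph{not} a $k$-polygon-circle graph; specializing to $k = 2$ shows in particular that it is not a circle graph. Hence $\mathcal{P}(M_n) \neq \mathcal{P}(\sigma^{ij}_{\bowtie}(M_n))$ for all $i \neq j$, which is exactly the alternation condition demanded by \cref{teo:lowerbound}. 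With $r = n$ and $s = 1$ the theorem then delivers a lower bound of $\Omega(\log(r)/s) = \Omega(\log n)$ on the verification complexity of any proof-labeling scheme for $\circlegraph$ and for $\polygoncircle$, as claimed.

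I do not expect any genuine obstacle in this final step, since the structural heart of the argument — that swapping the two pendant edges across two gadgets produces the forbidden configuration of two induced cycles sharing four vertices and with at least two private vertices each, which by \cref{claim:1} cannot be realized by any $k$-polygon-circle model — is already established in \cref{lem:lowerpoly}. The only delicate points are bookkeeping: confirming that each $H_i$ contributes a single edge so that $s = 1$, and verifying that the $\sigma_i$ are genuinely port-preserving so that the crossing operation is well defined on the labeled configuration. Even if one were forced to absorb a larger constant number of edges into each gadget, the bound would degrade only by a constant factor and would still read $\Omega(\log n)$, so the conclusion is robust.
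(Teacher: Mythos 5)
Your proposal is correct and takes essentially the same route as the paper: the paper's own proof is precisely the one-line application of \cref{teo:lowerbound} together with \cref{lem:lowerpoly} to the family $\{M_n\}$ with the gadgets $H_i$ and isomorphisms $\sigma_i$ defined just before that lemma, with \circlegraph{} handled as the special case $k=2$ of \polygoncircle. Your write-up merely makes explicit the bookkeeping the paper leaves implicit (the $r=n$ pairwise independent single-edge copies, so $s=1$, and the alternation $\mathcal{P}(M_n)\neq\mathcal{P}(\sigma^{ij}_{\bowtie}(M_n))$ supplied by the lemma), which yields the same $\Omega(\log n)$ bound.
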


\bibliography{biblio}

%
%
%
%
%
%
%
%
%
%
\end{document}